\newcommand*\bigcdot{\mathpalette\bigcdot@{.5}}
\newcommand*\bigcdot@[2]{\mathbin{\vcenter{\hbox{\scalebox{#2}{$\m@th#1\bullet$}}}}}
\newcommand\notsotiny{\@setfontsize\notsotiny\@vipt\@viipt}
\newcommand{\calP}{\mathcal{P}}
\newcommand{\norm}[1]{\left\|{#1}\right\|}
\def\01{\{0,1\}}
\newcommand{\eps}{\varepsilon}
\newcommand{\BH}{\textsf{BH}}
\newcommand{\CJ}{\textsf{CJ}}
\newcommand{\node}{\mathrm{node}}
\theoremstyle{plain}
\newtheorem{dfn}{Definition}
\newtheorem{theorem}{Theorem}
\newtheorem{lemma}[theorem]{Lemma}
\newtheorem{proposition}[theorem]{Proposition}
\newtheorem{fact}[theorem]{Fact}
\definecolor{applegreen}{rgb}{0.0, 0.5, 0.0}
\renewcommand{\Pr}{\mbox{\rm Pr}}
\renewcommand{\Re}{\mbox{\rm Re}}
\renewcommand{\Im}{\mbox{\rm Im}}
\DeclareMathOperator{\cb}{cb}
\renewcommand{\cb}{{\mathrm{cb}}}
\DeclareMathOperator{\GK}{G_{\mathbb K}} 
\DeclareMathOperator{\GR}{G_{\mathbb R}} 
\DeclareMathOperator{\GC}{G_{\mathbb C}}
\DeclareMathOperator{\op}{op} 
\newcommand{\pmset}[1]{\{-1,1\}^{#1}} 
\newcommand{\Id}{\ensuremath{\mathop{\rm Id}\nolimits}}
\newcommand{\poly}{\mbox{\rm poly}}
\DeclareMathOperator{\Tr}{Tr}
\DeclareMathOperator{\Diag}{Diag}
\DeclareMathOperator{\supp}{supp}
\newcommand{\ind}[1]{\mathbf{#1}}
\newcommand{\beq}{\begin{equation}}
\newcommand{\eeq}{\end{equation}}
\newcommand{\beqn}{\begin{equation*}}
\newcommand{\eeqn}{\end{equation*}}
\newcommand{\beqr}{\begin{eqnarray}}
\newcommand{\eeqr}{\end{eqnarray}}
\newcommand{\beqrn}{\begin{eqnarray*}}
\newcommand{\eeqrn}{\end{eqnarray*}}
\newcommand{\bmline}{\begin{multline}}
\newcommand{\emline}{\end{multline}}
\newcommand{\bmlinen}{\begin{multline*}}
\newcommand{\emlinen}{\end{multline*}}
\theoremstyle{plain}
\newtheorem{question}[theorem]{Question}
\theoremstyle{definition}
\theoremstyle{remark}
\newtheorem{remark}[theorem]{Remark}
\renewenvironment{proof}[1][]{
    \begin{trivlist}
    \item[\hspace{\labelsep}{\em\noindent Proof#1:\/}]}
    {{\hfill$\Box$}
    \end{trivlist}
}
\newtheoremstyle{named}{}{}{\itshape}{}{\bfseries}{.}{.5em}{\thmnote{#3}}
\theoremstyle{named}
\newtheorem*{namedtheorem}{Theorem}
\begin{document}

\title{Learning low-degree quantum objects}

\author{ 
 Srinivasan \\Arunachalam\thanks{IBM Quantum. \href{mailto:Srinivasan.Arunachalam@ibm.com}{Srinivasan.Arunachalam@ibm.com}}
 \and
Arkopal\\ Dutt \thanks{IBM Quantum \href{arkopal@ibm.com}{arkopal@ibm.com} } \and 
Francisco \\ Escudero Gutiérrez\thanks{Qusoft and CWI \href{feg@cwi.nl}{feg@cwi.nl} } \and
Carlos\\Palazuelos \thanks{UCM and ICMAT \href{carlospalazuelos@mat.ucm.es}{carlospalazuelos@mat.ucm.es} } \and }

\date{ \today }

\maketitle

\begin{abstract}

We consider the problem of learning low-degree quantum objects up to $\varepsilon$-error in $\ell_2$-distance. We show the following results:
$(i)$ unknown $n$-qubit  degree-$d$ (in the Pauli basis) quantum channels and unitaries can be learned  using  $O(1/\varepsilon^d)$ queries (independent of $n$), $(ii)$ polynomials $p:\pmset{n}\rightarrow [-1,1]$ arising from  $d$-query quantum algorithms can be classically learned from $O((1/\varepsilon)^d\cdot \log n)$ many random examples $(x,p(x))$ (which implies learnability even for $d=O(\log n)$), and $(iii)$ degree-$d$ polynomials $p:\{-1,1\}^n\to [-1,1]$ can be learned through $O_{}(1/\eps^d)$ queries to a quantum unitary $U_p$ that block-encodes  $p$. Our main technical contributions are new Bohnenblust-Hille inequalities for quantum channels and completely bounded~polynomials.

\end{abstract}



\section{Introduction}
Computational learning theory refers to the mathematical framework for understanding  machine learning models and quantifying their complexity. The seminal result of Leslie Valiant~\cite{DBLP:journals/cacm/Valiant84} (who introduced the Probably Approximately Correct (PAC) model) gives a complexity-theoretic definition of what it means for a class of functions $f:\01^n\rightarrow \01$ to be learnable information-theoretically and~computationally. 

One of the foundational results in computational learning theory is the one of Linial, Mansour and Nisan~\cite{linial1993constant} who showed that $\textsf{AC}_0$, or constant-depth $n$-bit classical circuits consisting of AND, OR and NOT gates, can be learned in quasi-polynomial time.  A crucial aspect of their proof is the following structural theorem: if a Boolean function $f$ is computable by~$\textsf{AC}_0$, then $f$ can be approximated by a \emph{low-degree polynomial}. Using this structural property, the learning algorithm  approximates the coefficients of all the low-degree monomials of $f$ in the PAC model and hence approximately learns the unknown $\textsf{AC}_0$ function. Since their work, the notion of  Boolean functions being low-degree themselves or being approximated by low-degree polynomials has been a central technique~\cite{bun2021guest} to obtain new learning algorithms. Furthermore, low-degree approximations have played a significant role  in theoretical computer science topics such as quantum computing, circuit complexity, learning theory and cryptography. 

In the last few years, there have been several works in quantum learning theory where the goal has been to learn an unknown object on a quantum computer given various access models to the unknown object. There have been works that considered learning unknown states, channels, unitaries, Hamiltonians either with our without structure present in them. Motivated by classical learning theory, in this work our primary focus will be on learning objects that have the additional structure of being \emph{low-degree}. Since we consider different objects, when presenting our results we  will make clear, the definition of being low-degree, but the main motivation of this work can be summarized by the following question:

\begin{quote}
\centering
\emph{Can we learn low-degree $n$-qubit quantum objects information-theoretically with complexity that scales only polynomial (or better polylogarithmic) in $n$?} 
\end{quote}
We give a positive answer to this question for channels, unitaries, quantum query algorithms, polynomials accessed through a block-encoding unitary and states. Below we describe our main technical contribution and~results. 

\subsection{Technical contribution: New Bohnenblust-Hille inequalities}
In 1931, Bohnenblust and Hille~\cite{bohnenblust1931absolute} (generalizing the classic theorem of Littlewood~\cite{littlewood1930bounded}) gave a solution to the famous \emph{Bohr strip problem} of Dirichlet series~\cite{bohr1913ueber}. To do that, they showed the following: let $T:([-1,1]^n)^d\rightarrow [-1,1]$ 
be a $d$-tensor specifed by the coefficients $T=(\widehat{T}_{i_1,\ldots,i_d})_{i_1,\ldots,i_d\in [n]}$ (meaning that $T(x_1,\dots,x_d)=\sum_{i_1,\dots,i_d\in [n]}\widehat{T}_{i_1,\ldots,i_d}x_1(i_1)\dots x_d(i_d)$),  then
\begin{equation}\label{eq:BHfortensors}
    \Big(\sum_{i_1,\ldots,i_d=1}^n |\widehat{T}_{i_1,\ldots,i_d}|^{2d/(d+1)}\Big)^{(d+1)/2d}\leq C(d),
\end{equation}
where $C(d)$ is a universal constant independent of $n$.\footnote{We remark that the inequality above is a simplified version of the original  Bohnenblust-Hille inequality and we discuss this in more detail in the preliminaries.}  Their work marked the birth of the Bohnenblust-Hille ($\BH$) inequality, which became a fundamental tool in functional analysis. Despite being studied for over a century, the best known upper bound on $C(d)$ scales polynomially with $d$ \cite{bayart2014bohr}, while the best lower bound is a constant \cite{diniz2014lower}. Closing this gap has been an active area of research in mathematics. It was not until 2011 that Defant et al.~\cite{defant2011bohnenblust} refined the $\BH$ inequality and found a striking application of the $\BH$ inequality: they determined the precise asymptotic behavior of the $n$-dimensional Bohr radius using the $\BH$ inequality. Since then, there has been renewed interest in the $\BH$ inequality and has found several applications in theoretical computer science such as Fourier-Entropy influence conjecture~\cite{arunachalam2021improved}, classical learning theory~\cite{eskenazis2022learning}, non-local games~\cite{montanaro2012some}, and quantum computing~\cite{huang2022learning,volberg2023noncommutative}.

Of particular interest to our work, the $\BH$ inequality recently got the attention of the computer science community when  Eskenazis and Ivanisvili~\cite{eskenazis2022learning} used a version of it to prove a major improvement on the problem of classical learning bounded low-degree multilinear polynomials (which we discuss in detail below).
The key insight of Eskenazis and Ivanisvili was the following: $\BH$ inequalities imply that given a bounded low-degree polynomial, most of its coefficients are quite small, so one does not have to learn them. Since then, multiple extensions of the $\BH$ have been proved and applied to learning quantum  objects~\cite{huang2022learning,slote2023bohnenblust,slote2023noncommutative,volberg2023noncommutative,klein_et_al:LIPIcs.ITCS.2024.69}. In this work we extend the $\BH$ inequality in two ways: 
\begin{enumerate}
    \item We consider a variant of the $\BH$ inequality, that can be regarded as a hybrid between the $\BH$ inequality and the celebrated Grothendieck inequality~\cite{grothendieck1953resume}.  We show that degree-$d$ completely bounded tensors $\widehat T$ (which are known to be the output of $d$-query  quantum  algorithms~\cite{QQA=CBF})~satisfy
  $$
    \Big(\sum_{i_1,\ldots,i_d=1}^n |\widehat T_{i_1,\ldots,i_d}|^{2d/d+1}\Big)^{(d+1)/2d}\leq 1.
    $$
    In other words, we improve the $\BH$ constant for completely bounded tensors from $\poly(d)$ to~$1$. See \cref{theo:BHGf} for a precise~statement. 
    \item  More recently, the works of~\cite{huang2022learning,volberg2023noncommutative} have considered non-commutative variants of the $\BH$ inequality. They showed that the Pauli coefficients of $n$-qubit degree-$d$ (i.e., $d$-local) operators that are bounded in operator norm  can be bounded similar to Eq.~\eqref{eq:BHfortensors}, but with $\exp (d)$ instead of $\poly(d)$.\footnote{We remark that Volberg, Zhang~\cite{volberg2023noncommutative} obtained a constant of $\exp(d)$, while Huang et al.~\cite{huang2022learning} obtained~$d^{O(d)}$.} 
    Here, we prove another non-commutative version of the $\BH$ inequality for quantum channels (we in fact prove a stronger $\BH$ inequality for superoperators that are bounded in $S_1\to S_\infty$ norm, and refer the reader to Theorem~\ref{theo:BHchannels} for more details). 
 In particular, if $\Phi$ is a quantum channel defined as $\Phi(\rho):=\sum_{x,y} \widehat{\Phi}(x,y)\sigma_x \rho \sigma_y$ (where $\sigma_x$ is a Pauli operator and $\widehat{\Phi}(x,y)=0$ if $|x|+|y|>d$), then the Pauli coefficients $\widehat{\Phi}(x,y)$ satisfy  
    \begin{equation}\label{eq:BHchannels}
    \Big(\sum_{x,y}|\widehat\Phi(x,y)|^{2d/(d+1)}\Big)^{(d+1)/2d}\leq \exp(d).
    \end{equation}
    Here $\exp(d)$ might seem too big compared to the $\poly(d)$ of \cref{eq:BHfortensors}, but \cref{eq:BHfortensors} and \cref{eq:BHchannels} are incomparable:  \cref{eq:BHfortensors} corresponds to tensors, which are a very structured class of polynomials, while \cref{eq:BHchannels} is a non-commutative analogue of the $\BH$ inequality for general polynomials, for which the best upper bounds~\cite{defant2019fourier} are superpolynomial in $d$.

    Our $\BH$ inequality for channels is a generalization from operators to superoperators of the mentioned non-commutative $\BH$ inequalities \cite{huang2022learning,volberg2023noncommutative}, as explained in \cref{rem:channelsimpliesunitaries}. 
    \end{enumerate}
These $\BH$ inequalities might be of independent interest both for mathematicians and quantum computing; in this work we crucially use them to analyze our quantum learning algorithms.

\subsection{Result 1: Learning channels}
Learning a quantum process is a fundamental task in quantum computing and this can be modelled as learning an unknown \emph{quantum channel}, also referred to as \emph{quantum process tomography}.  On an experimental level, the dynamics of closed quantum systems can be modeled as a unitary transformation from the initial state to the final state. However, in practice, quantum systems interact with the environment and are generally treated as an open quantum system. To learn the behavior of these open quantum systems, it is convenient to model this map as a quantum channel~\cite{nielsen_chuang}.
Learning an $n$-qubit quantum channel is however challenging and is known to require $\Theta(4^n)$ queries to the channel~\cite{gutoski2014process}. This exponential sample complexity can be drastically improved when prior information on the structure of the channel is available. For example, a recent work of Bao and Yao~\cite{bao2023nearly} considered $k$-junta quantum channels, i.e., $n$-qubit channels that act non-trivially only on at most $k$ of the $n$ (unknown) qubits leaving the rest of the qubits unchanged. These channels were shown to be learnable using $\widetilde\Theta(4^k)$ queries to the channel~\cite{bao2023nearly}. In~\cite{chung2018sample}, it was shown that quantum channels that can be efficiently generated, can be learned efficiently, albeit in the weaker PAC learning model. In this work, we consider learning  $n$-qubit quantum channels which have a Pauli decomposition only involving low-degree Pauli operators. 

\paragraph{Low-degree quantum channels.} To describe our result, we first describe the Pauli analysis for quantum channels.  An $n$-to-$n$ qubit quantum channel $\Phi$ can be expressed as
\begin{equation}
    \Phi(\rho) = \sum_{x,y \in \{0,1,2,3\}^n} \widehat\Phi (x,y) \cdot \sigma_x \rho \sigma_y,
    \label{def:arb_channel}
\end{equation}
where $\sigma_x=\mathop{\otimes}_{i\in [n]}\sigma_{x_i}$ and  $\sigma_i$ for $i\in \{0,1,2,3\}$ are the Pauli matrices  
\begin{equation*}
    \sigma_0=\begin{pmatrix}
        1 & 0\\ 0 & 1
    \end{pmatrix},\ \sigma_1=\begin{pmatrix}
        0 & 1\\ 1 & 0
    \end{pmatrix},\ \sigma_2=\begin{pmatrix}
        0 & -i\\ i & 0
    \end{pmatrix},\ \sigma_3=\begin{pmatrix}
        1 & 0\\ 0 & -1
    \end{pmatrix};
\end{equation*} 
and $\widehat\Phi (x,y)$ are the Pauli coefficients of the channel. 
Given $x\in \{0,1,2,3\}^n$, $|x|$ is the number of  non-zero entries of $x$.  The degree of a channel $\Phi$ is the minimum integer $d$ such that $\widehat{\Phi}(x,y)=0$ if $|x|+|y|>d$. Our first result is an efficient learning algorithm for low-degree channels. The learning model we consider is the same as the recent work of Bao and Yao~\cite{bao2023nearly}. Given a channel~$\Phi$, a learning algorithm is allowed to make \emph{queries to $\Phi$} as follows: it can choose a state $\rho$, feed $\rho$ to the channel to obtain $\Phi(\rho)$ and measure $\Phi(\rho)$ in any basis. From the measurement outcomes, the learner should output a classical description of a superoperator $\widetilde \Phi$ that is close to $\Phi$ in the $\ell_2$-distance defined by the usual inner product for superoperators, i.e.,  $\langle \Phi, \widetilde{\Phi}\rangle=\Tr[J(\Phi) J(\widetilde{\Phi})]/4^n$, where $J(\Phi)$ is the Choi-Jamiolkowski
($\CJ$) representation of $\Phi$.\footnote{ Throughout this paper we say an algorithm $(\varepsilon,\delta)$-learns a quantum object if it succeeds with probability $\geq 1-\delta$ and outputs an $\varepsilon$-approximation to the unknown quantum object in a metric that will be clear when used.}

\begin{restatable}{theorem}{learnchannels}\label{theo:learnchannels}
    Let $\Phi$ be a $n$-qubit degree-$d$ quantum channel. There is an algorithm that $(\varepsilon,
    \delta)$-learns~$\Phi$ (in $\ell_2$-distance)
    using 
    $ 
    \exp\big(\widetilde{O}(d^2+d\log (1/\varepsilon))\big)\cdot \log(1/\delta) 
    $ queries  
    to $\Phi$.
\end{restatable}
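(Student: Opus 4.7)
The plan is to combine the Bohnenblust--Hille inequality for channels \eqref{eq:BHchannels} with a two-stage ``identify then estimate'' procedure applied to the Choi state of $\Phi$. BH guarantees that the Pauli coefficients of a degree-$d$ channel are effectively sparse, so the tail contributes negligibly in $\ell_2$; the remaining heavy coefficients can then be located and estimated with query complexity independent of $n$.

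\emph{Step 1 (Sparsity from BH).} Set $\tau := \exp(-\Theta(d^2))\,\varepsilon^{d+1}$. The bound $\|\widehat\Phi\|_{2d/(d+1)}\leq \exp(O(d))$ from \eqref{eq:BHchannels} combined with Markov's inequality gives
\[
|S_\tau|:=|\{(x,y):|\widehat\Phi(x,y)|\geq \tau\}|\;\leq\; \exp(O(d))\,\tau^{-2d/(d+1)}\;=\;\exp(O(d^2))\,\varepsilon^{-2d},
\]
while a Hölder-type argument yields the tail bound $\sum_{(x,y)\notin S_\tau}|\widehat\Phi(x,y)|^2 \leq \tau^{2/(d+1)}\exp(O(d)) \leq \varepsilon^2/4$.

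\emph{Step 2 (Identify and estimate heavy coefficients).} One query to $\Phi$ prepares a copy of the Choi state $J(\Phi)/2^n$, on which each $\widehat\Phi(x,y)$ with $|x|+|y|\leq d$ equals, up to a fixed phase, the expectation of a product Pauli of weight at most $d$. To avoid a $\mathrm{poly}(n^d)$ enumeration of candidate pairs, we first run a Bell/Pauli-sampling subroutine on $J(\Phi)/2^n$ that outputs $(x,y)$ with probability proportional to $|\widehat\Phi(x,y)|^2$; by Markov and a union bound, $O(\log(|S_\tau|/\delta)/\tau^2)$ samples suffice to identify every heavy coefficient with probability $\geq 1-\delta/2$. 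We then estimate each of these at most $|S_\tau|$ candidates to accuracy $\tau/4$ by a classical-shadows procedure with random Pauli measurements on further copies of $J(\Phi)/2^n$, at a cost of $\exp(O(d))\,\log(|S_\tau|/\delta)/\tau^2$ additional queries.

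\emph{Step 3 (Hard-thresholding and error analysis).} Let $\widetilde\Phi$ be the superoperator whose coefficients are the estimates of magnitude $\geq \tau/2$, with everything else (including all coefficients of degree $>d$) set to zero. Since the Pauli basis is orthonormal with respect to the $\ell_2$-inner product on superoperators induced by the Choi isomorphism, the error decomposes as
\[
\|\Phi-\widetilde\Phi\|_2^2\;\leq\; |S_\tau|\,(\tau/4)^2 \;+\; \sum_{(x,y)\notin S_\tau}|\widehat\Phi(x,y)|^2 \;\leq\; \varepsilon^2,
\]
using the size bound on $|S_\tau|$ and the tail bound from Step~1. Combining the query counts from Step~2 with the relations $1/\tau^2 = \exp(O(d^2))\,\varepsilon^{-2(d+1)}$ and $\log|S_\tau|=O(d^2+d\log(1/\varepsilon))$, and amplifying with $\log(1/\delta)$ repetitions, gives the claimed total of $\exp(\widetilde O(d^2+d\log(1/\varepsilon)))\log(1/\delta)$ queries.

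\emph{Main obstacle.} The crux is Step~2: designing a Bell/Pauli-sampling subroutine on $J(\Phi)/2^n$ whose cost scales with the BH-controlled quantity $|S_\tau|$, rather than with the total number of degree-$\leq d$ Pauli pairs (which is $\mathrm{poly}(n^d)$). This requires carefully relating Bell measurement outcomes on $J(\Phi)/2^n$ to the distribution $|\widehat\Phi(x,y)|^2$, and exploiting the channel structure ($J(\Phi)$ is PSD of trace $2^n$) to control the variance so that the BH-implied bound on $|S_\tau|$ suffices to yield $n$-independence.
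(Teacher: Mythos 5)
Your high-level structure (BH-controlled effective sparsity, identify-then-estimate on the Choi state, hard-threshold, Parseval error decomposition) matches the paper, and your parameter arithmetic (choice of $\tau$, bound on $|S_\tau|$, H\"older tail bound) is essentially correct. However, the step you flag as the ``main obstacle'' — a Bell/Pauli-sampling subroutine that samples pairs $(x,y)$ with probability proportional to $|\widehat\Phi(x,y)|^2$ — is genuinely not a valid primitive, and your proof has no mechanism for resolving it. The matrix $\widehat\Phi = (\widehat\Phi(x,y))_{x,y}$ is the density matrix of $v(\Phi)$ expressed in the Bell basis, so its diagonal entries $\widehat\Phi(x,x)$ form a probability distribution that \emph{can} be sampled by a Bell-basis measurement on one copy of $v(\Phi)$; but the squared off-diagonal entries $|\widehat\Phi(x,y)|^2$ sum to $\Tr[\widehat\Phi^2] \le 1$, and there is no measurement on copies of a mixed state $\widehat\Phi$ that yields those matrix entries as outcome probabilities (one would need a purification). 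Absent this subroutine your identification step collapses back to enumerating $\poly(n^d)$ candidate pairs, which is exactly what a dimension-independent complexity must avoid.

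The missing idea in the paper's proof is to exploit that $\widehat\Phi$ is positive semidefinite: by Cauchy--Schwarz, $|\widehat\Phi(x,y)| \le \sqrt{\widehat\Phi(x,x)\,\widehat\Phi(y,y)}$, so every heavy off-diagonal coefficient sits at the intersection of two heavy diagonal entries. The algorithm therefore samples only from the diagonal distribution $(\widehat\Phi(x,x))_x$, forms the set $\mathcal X_c$ of $x$ with $\widehat\Phi(x,x) \gtrsim c$ (of size $\le 1/c$), and estimates $\widehat\Phi(x,y)$ for all $(x,y)\in\mathcal X_c\times\mathcal X_c$ via a mixed-state SWAP test. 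This is the observation that turns the valid diagonal-sampling primitive into control over the full coefficient matrix; without it, the proof does not close. Your estimation step via classical shadows is a secondary issue — the relevant observables $\ket{y}\bra{x}$ in the rotated basis have no useful Pauli locality, so the shadow-norm control you are implicitly invoking is not justified; the SWAP test used in the paper sidesteps this as well.
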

We remark that the sample complexity of learning general quantum channels requires $\Omega(4^n)$ queries, but if we are promised the channel is \emph{low-degree}, then our  algorithm is much faster than the general algorithm. Additionally  observe that the sample complexity of our learning algorithm is independent of $n$, in contrast to the results~\cite{huang2022learning, volberg2023noncommutative,slote2023bohnenblust,slote2023noncommutative,klein_et_al:LIPIcs.ITCS.2024.69} on quantum learning of low-degree observables, which also are based   on the $\BH$ inequality and have a logarithmic dependence on~$n$.

To sketch the proof the theorem, we first note the  well-known fact that the matrix $\widehat\Phi$ (whose entries are given by $\widehat{\Phi}_{x,y}=\widehat{\Phi}(x,y)$) is a density matrix of a state that is unitarily equivalent to the $\CJ$ state of the channel $\Phi$ \cite[Lemma 8]{bao2023nearly}. Hence, $\widehat\Phi$ can be prepared by querying $\Phi$ just once, and $\{\widehat\Phi(x,x)\}_{x}$ is a probability distribution. The high-level idea behind the learning algorithm is the following.
\begin{enumerate}
    \item Prepare $T$ copies of $\widehat\Phi$ and measure them in the computational basis, allowing the learner to sample from  the distribution $\{\widehat\Phi(x,x)\}_x$.
    \item Using a well-known result in distribution learning theory, we observe that $O(1/\alpha^2)$ many samples from $\{\widehat\Phi(x,x)\}_x$ suffices to obtain the $x$'s such that $\Phi(x,x)$ is $\alpha$-large.
    \item Approximate all the ``large" Pauli coefficients in the step above using a SWAP test for mixed states that allows to estimate the Pauli coefficients $\widehat\Phi(x,y)$. 
    \item Output a $\widetilde{\Phi}$ with coefficients that were estimated above and the remaining coefficients set~to~$0$.
\end{enumerate}
At this point, we use the $\BH$ inequality for quantum channels and show that, as long as $T\sim \exp(d^{2}/\varepsilon^d)$, then the output $\widetilde{\Phi}$ is $\varepsilon$-close to the target $\Phi$ in $\ell_2$-distance.

\paragraph{Remark on learning Pauli channels.} One can also consider a special case of quantum channels called \emph{Pauli channels} $\Phi$, motivated by the fact that Pauli channels are often the dominant noise on quantum devices and a practical noise model for analyzing fault-tolerance~\cite{terhal2015qec}. For these channels, the only non-zero terms in the Fourier expansion of Eq.~\eqref{def:arb_channel} are $\widehat{\Phi}(x,x)$ (i.e., $\widehat{\Phi}(x,y)=0$ when $x\neq y$) and these coefficients are often called \emph{error rates}. Since learning Pauli channels is an important task on near-term quantum devices for error mitigation~\cite{van2023probabilistic} and analyzing error correction, it is desirable to avoid using entangled copies of a state and access to ancillary qubits as part of the learning~algorithm.  With this requirement, it was shown that, in order to $\varepsilon$-learn (in diamond norm) an unknown  $n$-qubit Pauli channels using unentangled measurements, one needs to use the channel  $\Omega(4^n/\varepsilon^2)$ many times~\cite{fawzi2023lower}. In this work, we show that the subclass of low-degree Pauli channels are efficiently learnable. The degree of a Pauli channel $\Phi$ is the minimum integer $d$ such that the error rates $\widehat{\Phi}(x,x)=0$ if $|x|>d$.

Noise on current large-scale quantum devices is modeled as Pauli channels containing a sparse set of local Paulis~\cite{van2023probabilistic}, which is a subclass of low-degree Pauli channels. Such models are available from device physics and experiments used to characterize noise on quantum hardware. When non-local interactions but only over few qubits are included in the Pauli channel~\cite{tran2023locality}, the corresponding Paulis are still low-degree which fits into the class of Pauli channels considered. Our learning result is as follows.

\begin{restatable}{proposition}{learnPauliChannels}\label{theo:learnPauliChannels}
    Let $\Phi$ be an $n$-qubit degree-$d$ Pauli channel. There is an algorithm that $(\varepsilon,
    \delta)$-learns~$\Phi$ (under the diamond norm) using $O\left(n^{2d}/{\varepsilon^2} \cdot \log(n/{\delta}) \right)$ queries to $\Phi$. Moreover, the learning algorithm only requires preparation of  product states and measurements in the Pauli~basis.
\end{restatable}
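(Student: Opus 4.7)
The plan is to exploit the low-degree promise to reduce Pauli channel learning to the estimation of only $M=\sum_{k=0}^{d}\binom{n}{k}3^k=O(n^{d})$ error rates, and then apply the standard Pauli-eigenvalue estimation protocol. The key observation is that every Pauli is an eigenvector of a Pauli channel: $\Phi(\sigma_q)=\lambda_q\sigma_q$ with $\lambda_q=\sum_y\widehat\Phi(y,y)\chi(y,q)$, where $\chi(y,q)\in\{\pm 1\}$ is the commutation sign between $\sigma_y$ and $\sigma_q$. Walsh--Hadamard inversion gives $\widehat\Phi(y,y)=\frac{1}{4^n}\sum_q\chi(y,q)\lambda_q=\mathbb{E}_{q}[\chi(y,q)\lambda_q]$ with $q$ uniform on $\{0,1,2,3\}^n$, yielding an unbiased single-shot estimator that is implementable using only a product-state preparation and a product Pauli measurement.

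The algorithm repeats the following for $i=1,\ldots,N$: (i) draw $q_i$ uniformly from $\{0,1,2,3\}^n$; (ii) prepare the product state $|\psi_{q_i}\rangle=\bigotimes_{j}|\phi_{q_{i,j}}\rangle$ in which each factor is a fixed $+1$ eigenstate of the corresponding single-qubit Pauli $\sigma_{q_{i,j}}$ (arbitrary when $q_{i,j}=0$); (iii) query $\Phi$; (iv) measure qubit $j$ in the eigenbasis of $\sigma_{q_{i,j}}$ and let $s_i\in\{\pm 1\}$ be the product of the per-qubit $\pm 1$ outcomes. For each $y$ with $|y|\leq d$ the algorithm outputs $\widetilde v_y=\tfrac{1}{N}\sum_i s_i\chi(y,q_i)$, and sets $\widetilde v_y=0$ for $|y|>d$. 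The hypothesis superoperator is $\widetilde\Phi(\rho)=\sum_y\widetilde v_y\,\sigma_y\rho\sigma_y$.

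The correctness analysis has three steps. First, since $\mathbb{E}[s_i\mid q_i]=\lambda_{q_i}$, a direct computation yields $\mathbb{E}[s_i\chi(y,q_i)]=\mathbb{E}_q[\chi(y,q)\lambda_q]=\widehat\Phi(y,y)$ by orthogonality of the Pauli characters. Each summand lies in $[-1,1]$, so Hoeffding plus a union bound over the $M$ low-degree Paulis gives $\max_{|y|\leq d}|\widetilde v_y-\widehat\Phi(y,y)|\leq\eta$ with probability at least $1-\delta$ provided $N=O(\log(M/\delta)/\eta^2)$. Second, combined with the low-degree promise $\widehat\Phi(y,y)=0$ for $|y|>d$, this gives $\|\widehat\Phi-\widetilde v\|_1\leq M\eta$. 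Third, for any Pauli-diagonal superoperator $\Delta(\rho)=\sum_y e_y\sigma_y\rho\sigma_y$ the triangle inequality together with $\|(\sigma_y\otimes I)\rho(\sigma_y\otimes I)\|_1=\|\rho\|_1$ yields $\|\Delta\|_\diamond\leq\sum_y|e_y|$, hence $\|\Phi-\widetilde\Phi\|_\diamond\leq M\eta$. Choosing $\eta=\varepsilon/M$ gives $N=O(M^2\log(M/\delta)/\varepsilon^2)=O(n^{2d}\log(n/\delta)/\varepsilon^2)$.

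The main conceptual input is the diamond-to-$\ell_1$ reduction for Pauli-diagonal superoperators, which removes any need for entangled inputs or joint Bell measurements; the main quantitative leverage is the low-degree promise, which truncates the $4^n$-dimensional Fourier inversion to only $M=O(n^d)$ coefficients and therefore costs only a quadratic $M^2$ overhead in the sample complexity. Should a bona-fide quantum channel be required on output, one can replace $\widetilde v$ by its projection onto the probability simplex supported on $\{y:|y|\leq d\}$, which enlarges the $\ell_1$ error by at most a factor of $2$.
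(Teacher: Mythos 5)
Your proposal is correct, and it reaches the same sample complexity and uses the same access model (product Pauli-eigenstate preparation plus product Pauli-basis measurement, with the diamond norm bounded by the $\ell_1$-distance between error-rate vectors). However, the estimator at the heart of the argument is genuinely different from the one in the paper. You sample $q$ uniformly over $\{0,1,2,3\}^n$ (including identity factors) and build an unbiased single-shot estimate of $\lambda_q$ from the \emph{parity} of the per-qubit outcomes, then invoke the Walsh--Hadamard inversion $\widehat{\Phi}(y,y)=\mathbb{E}_q[\chi(y,q)\lambda_q]$; each summand $s_i\chi(y,q_i)\in\{-1,1\}$, so Hoeffding plus a union bound gives $N=O(M^2\log(M/\delta)/\eps^2)$. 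The paper instead samples $s$ uniformly over $\{1,2,3\}^n$ (no identity factors), records the full bit-string $r\in\{0,1\}^n$ of per-qubit outcomes, and uses the population-recovery estimator $\widetilde{\Phi}(x)=\frac{1}{T}\sum_t(-1/2)^{\sum_j r_j^{(t)}\oplus(s_j^{(t)}\star x_j)}$ of Flammia, whose unbiasedness hinges on $\mathbb{E}_{s_j\sim\{1,2,3\}}[(-1/2)^{(s_j\star z_j)\oplus(s_j\star x_j)}]=\mathds{1}\{z_j=x_j\}$. Thus the paper never goes through the channel eigenvalues $\lambda_q$; its estimator retains the per-qubit information that your estimator discards (you keep only the parity), but asymptotically the two routes land on the same $O(n^{2d}\log(n/\delta)/\eps^2)$ bound (both hide a $9^d$ factor in the big-$O$). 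Your route has the virtue of making the spectral picture (Pauli eigenvalues, Fourier inversion) explicit and of replacing the paper's appeal to the diamond-norm/TV-distance identity from Magesan et al. with a short self-contained triangle-inequality argument; the paper's estimator is closer to what is used in practice and more directly exploits the full $n$-bit measurement record.
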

We remark that the dependence on $n$ in our sample complexity matches  the algorithm in~\cite{flammia2021pauli} for learning low-degree Pauli channels under the diamond norm but our analysis differs from theirs; our result is obtained using a Fourier-analytic approach in contrast to the the result of~\cite{flammia2021pauli} which uses ideas from population recovery. 

\subsection{Result 2: Learning unitaries}
Apart from learning channels, in this paper we also consider the task of learning unknown $n$-qubit unitaries. Similar to the case for channels, it is well-known that learning an unstructured $n$-qubit unitary requires $\widetilde\Theta(4^n)$ applications of $U$~\cite{gutoski2014process}. This complexity can be significantly improved if structural information is available. For example, it has been shown that efficient learning is possible if the unitary corresponds to Clifford circuits~\cite{low2009learning}, corresponds to Clifford circuits with few non-Clifford gates~\cite{lai2022learn}, or are the diagonal unitaries of the Clifford hierarchy~\cite{arunachalam2022optimal}.

In a recent work, Chen et al.~\cite{chen2023testing} considered the task of learning unitaries $U$ that are $k$-juntas and showed that this class can be learned by querying the unitary $\widetilde\Theta(4^k)$ many times.  In this work, we consider the scenario where the unitary is degree-$d$ and show such an exponential saving (in comparison to naive tomography) is possible. This structured class of low-degree unitaries occur in many instances. For example, in nature, the dynamics of many physical systems are governed by local Hamiltonians, whose unitary time evolution operators for short time evolution are close to being low-degree unitaries. Moreover, through the application of Lieb-Robinson bounds to structured many-body Hamiltonians, the corresponding unitary evolution operator can be seen to only have support only on low-degree Paulis~\cite{chen2023speed}. The learning algorithm that we will present is thus applicable for learning and verifying such unitaries.

Consider the Pauli decomposition of an $n$-qubit unitary as follows:
$$
U=\sum_{x\in \{0,1,2,3\}^n} \widehat U(x)\sigma_x,
$$ where $\widehat U(x)$ are its Pauli coefficients. The degree of $U$ is the minimum integer $d$ such that $|x|>d$ implies $\widehat U(x)=0$. Our learning model is similar to the one by Chen et al.~\cite{chen2023testing}. Given an unitary $U$, a learning algorithm is allowed to make \emph{queries to $U$} (and to control-$U$) as follows: it can choose a state $\rho$, apply $U$ to the state to obtain $U\rho U^*$, and measure $U\rho U^*$ in a chosen basis. From the measurement outcomes, the learner should output a classical description of an operator~$\widetilde U$ that is close to $U$ in the $\ell_2$-distance determined by the usual inner product for operators defined as $\langle U, V\rangle=\Tr[U^* V]/2^n$.

\begin{restatable}{theorem}{learnunitaries}\label{theo:learnunitaries}
    Let $U$ be an $n$-qubit degree-$d$  unitary. There is an algorithm that $(\varepsilon,
    \delta)$-learns $U$ (in $\ell_2$-distance) using the unitary $U$  
    $
    \exp\big(\widetilde{O}(d^2+d\log (1/\varepsilon))\big)\cdot \log(1/\delta)
    $  many times.
\end{restatable}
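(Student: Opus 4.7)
The plan is to closely parallel the proof strategy of \cref{theo:learnchannels}, but working with the scalar Pauli coefficients $\widehat U(x)$ rather than the matrix-valued $\widehat\Phi(x,y)$. The key structural observation is that the pure Choi state
\[
|U\rangle\!\rangle \;:=\; (I\otimes U)\,|\Omega\rangle, \qquad |\Omega\rangle \;:=\; 2^{-n/2}\sum_{i}|i\rangle|i\rangle,
\]
admits the Bell-basis expansion $|U\rangle\!\rangle = \sum_{x\in\{0,1,2,3\}^n}\widehat U(x)\,|\beta_x\rangle$, where $|\beta_x\rangle := (I\otimes \sigma_x)|\Omega\rangle$ is orthonormal. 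Preparing $|U\rangle\!\rangle$ uses only one query to $U$, and a Bell-basis measurement yields outcome $x$ with probability $|\widehat U(x)|^2$; unitarity of $U$ guarantees $\sum_x|\widehat U(x)|^2 = 1$, so this is a bona fide probability distribution.

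The algorithm then proceeds in three stages. In Stage~1, we draw $T = \widetilde{O}(\alpha^{-2}\log(1/\delta))$ samples from the Bell-basis distribution and collect the observed Pauli strings in a set $L$, noting $|L|\leq T$; a standard coupon-collector argument shows that every $x$ with $|\widehat U(x)|^2 \geq \alpha^2$ lies in $L$ with probability $\geq 1-\delta/3$. In Stage~2, for each $x\in L$ we run a Hadamard test on the controlled unitary $\sigma_x^{*}\,U$ (permissible under the stated access model, which allows controlled-$U$) to obtain an estimate $\widetilde{\widehat U}(x)$ with $|\widetilde{\widehat U}(x)-\widehat U(x)|\leq \beta$, using $O(\beta^{-2}\log(|L|/\delta))$ queries per coefficient. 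In Stage~3, we output $\widetilde U := \sum_{x\in L}\widetilde{\widehat U}(x)\,\sigma_x$.

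The error analysis uses Hilbert-Schmidt orthogonality of the Paulis to split
\[
\|U-\widetilde U\|_2^2 \;=\; \sum_{x\in L}\bigl|\widehat U(x)-\widetilde{\widehat U}(x)\bigr|^2 \;+\; \sum_{x\notin L}|\widehat U(x)|^2.
\]
The crucial step is bounding the tail via the non-commutative Bohnenblust-Hille inequality for $d$-local operators of bounded operator norm: using $\|U\|_{\op}=1$, we get $\sum_x|\widehat U(x)|^{2d/(d+1)}\leq \exp(O(d))$ (this is \cite{huang2022learning,volberg2023noncommutative}, and is also implied by \eqref{eq:BHchannels} applied to $\Phi_U(\rho)=U\rho U^{*}$, whose Pauli coefficients factor as $\widehat{\Phi_U}(x,y)=\widehat U(x)\overline{\widehat U(y)}$). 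A H\"older-type interpolation then yields $\sum_{x\notin L}|\widehat U(x)|^2 \leq \alpha^{2/(d+1)}\exp(O(d))$, so choosing $\alpha = \varepsilon^{d+1}\exp(-\widetilde{O}(d^2))$ makes this tail at most $\varepsilon^2/2$. Under the same choice, $\alpha^{-2} = \exp(\widetilde{O}(d^2))/\varepsilon^{2d+2}$, and setting $\beta = \varepsilon/\sqrt{2|L|}$ keeps the first sum below $\varepsilon^2/2$. Combining the two stages, the overall query cost becomes $T + |L|\cdot O(\beta^{-2}\log(|L|/\delta)) = \exp(\widetilde{O}(d^2+d\log(1/\varepsilon)))\log(1/\delta)$, matching the claimed bound.

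The main obstacle is precisely the BH-based tail estimate: every other ingredient is a routine adaptation of Kushilevitz-Mansour-style sparse Fourier learning to the non-commutative Pauli basis, combined with standard quantum primitives (Bell-basis measurement and the Hadamard test). A secondary subtlety will be verifying that the Hadamard test on the controlled unitary $\sigma_x^{*}\,U$ is implementable using only the access primitives granted by the theorem statement, and that the Bell-basis sampling indeed consumes only one query to $U$ per shot.
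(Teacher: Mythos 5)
Your proof follows essentially the same structure as the paper's: prepare the Choi state and Bell-sample from $(|\widehat U(x)|^2)_x$, isolate a small set of heavy Pauli strings, estimate those coefficients with a Hadamard-test-like routine (the paper uses an extension of Montanaro--Osborne, which is the same circuit), and control the tail via the Volberg--Zhang non-commutative BH inequality plus a H\"older interpolation. The only substantive difference is that you collect every string observed in Stage~1 (coupon-collector style) while the paper thresholds the empirical distribution to get $|\mathcal{X}_c|\leq c^{-2}$; both work and give the same complexity up to $\poly\log(1/\delta)$ factors absorbed in the $\widetilde O$.

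One small inaccuracy in a side remark: you assert that the unitary BH bound is ``also implied by Eq.~(2) applied to $\Phi_U(\rho)=U\rho U^*$.'' But if $U$ has degree $d$, then $\widehat{\Phi_U}(x,y)=\widehat U(x)\overline{\widehat U(y)}$ makes $\Phi_U$ a degree-$2d$ superoperator, so Theorem~\ref{theo:BHchannels} yields an $\ell_{4d/(2d+1)}$ bound on $\widehat U$, not the sharper $\ell_{2d/(d+1)}$ bound you need. The paper's Remark~\ref{rem:channelsimpliesunitaries} instead takes $\Phi_M(\rho)=\rho M$, whose degree matches $\deg M$, exactly to avoid this doubling. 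This does not affect your proof since you also cite the Volberg--Zhang inequality directly, which is what you actually use.
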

The proof of~\cref{theo:learnunitaries} follows the same structure as the one of~\cref{theo:learnchannels}, but now we learn the Pauli coefficients via the algorithm of Montanaro and Osborne~\cite{montanaro2008quantum}, and the correctness of the algorithm relies on the non-commutative $\BH$ inequality of Volberg and Zhang~\cite{volberg2023noncommutative}. 

The $\BH$ inequality of Volberg and Zhang~\cite{volberg2023noncommutative} works for matrices with bounded operator norm, of which unitaries are a very special case. 
As argued by Montanaro and Osborne~\cite{montanaro2008quantum}, matrices bounded on the operator norm are the quantum analogue of bounded functions $f:\{-1,1\}^n\to [-1,1]$, while unitaries are the  analogue of Boolean functions\footnote{To be precise, they argue that unitary hermitian matrices are the analogue of Boolean functions.} $f:\{-1,1\}^n\to \{-1,1\}$. These two families of classical functions differ a lot with respect to the $\BH$ inequalities: the best upper bound~\cite{defant2019fourier} for the $\BH$ constant for bounded functions is $\exp({d^{1/2}})$, while for Boolean functions of degree $d$ one can even prove that the \emph{usually much bigger} quantity $\sum_{s}|\widehat f(s)|$ is at most $2^{d-1}$~\cite[Exercise 1.11]{o2014analysis}. An open question is if this claim can be generalized to the quantum~setting. 

\begin{question}\label{que:unitariesaresparse}
    Is there a constant $C(d)$ such that $\sum_{x\in \{0,1,2,3\}^n} |\widehat U(x)|\leq C(d)$  for every $n$-qubit degree-$d$ unitary $U$?
\end{question}
If \cref{que:unitariesaresparse} was answered positively, then one could improve the $\eps$ dependence of \cref{theo:learnunitaries} to $(1/\eps)^{2}$. Some evidence in favor of an affirmative answer to \cref{que:unitariesaresparse} is that if Montanaro and Osborne's Conjecture was true~\cite[Conjecture~4]{montanaro2008quantum}, then every degree-$d$ unitary would be a $2^d$-junta, which would imply an affirmative answer with $C(d)=2^{2^d}$.

\paragraph{Remark on learning low-degree quantum states.}
Learning quantum states has been an active line of research given its fundamental importance and applications to quantum system characterization, assessing quality of quantum gates, verification of quantum circuits and validating performance of quantum algorithms. Breakthrough results of Haah et al.~\cite{haah2017sample} and O'Donnell and Wright~\cite{o2016efficient} showed that the sample complexity of learning an unknown $n$-qubit state, up to trace distance $\varepsilon$ is $\Theta(4^n/\varepsilon^2)$. 
A natural consideration is the task of learning \emph{low-degree} quantum states. 
In order to describe this, we first write down the Pauli expansion of an $n$-qubit state $\rho$  as
\begin{equation}\label{eq:PauliExpansion}
    \rho=\sum_{x\in \{0,1,2,3\}^n} \widehat\rho(x) \sigma_x.
\end{equation}
Then, we say that $\rho$ has degree at most $d$ if $\widehat\rho(x)=0$ for all $|x|>d$. 
It is not too hard to see that one can use the formalism of classical shadows~\cite{huang2020predicting} to obtain a learning algorithm (in trace norm)  that has a sample complexity of $\widetilde{O}(n^d/\eps^{2}\cdot \log(n/\delta))$. A similar result (with a different norm and with a bit more structure than just being low-degree) was noted in a recent work of Nadimpalli et al.~\cite{nadimpalli2023pauli}, where they used it to give efficient algorithms to learn QAC$^0$ circuits.

\subsection{Result 3: Learning quantum query algorithms}
Eskenazis and Ivanisvili~\cite{eskenazis2022learning} established a surprising connection between the $\BH$ inequality and learning theory. They considered the following question: suppose $f:\pmset{n}\rightarrow [-1,1]$ is a bounded degree-$d$ function, and a learner is given uniformly random $x$ and $f(x)$, then how many $(x,f(x))$ suffices to learn $f$ up to error $\eps$ in $\ell_2^2$ error? The seminal low-degree algorithm by Linial, Mansour and Nisan uses $O_{d,\eps}(n^d)$ samples \cite{linial1993constant}.\footnote{Here and below, we use $O_{d,\varepsilon}$ to hide factors that depend on  $d,1/\varepsilon$ and independent of $n$.}  This was not improved until recently, when Iyer et al.~\cite{iyer2021tight} reduced this complexity to $O_{d,\eps}(n^{d-1})$. In a surprising work,~\cite{eskenazis2022learning} showed that one can learn $f$ in sample complexity $O_{d,\eps}(\log n)$. For the particular case of bounded $d$-linear tensors $T:(\{-1,1\}^n)^d\to [-1,1]$ they showed that it suffices to use 
\begin{equation}\label{eq:eskenazis2d/d+1}
    (1/\varepsilon)^d\cdot\Big(\sum_{i_1,\ldots,i_d=1}^n |\widehat T_{i_1,\ldots,i_d}|^{2d/d+1}\Big)^{(d+1)/2d}\cdot \log n
\end{equation}
samples $(x,T(x))$, where $x$ is uniform from $(\{-1,1\}^n)^d$ and $\widehat T$ is the tensor of coefficients of $T$, i.e., $T(x)=\sum_{i_1,\dots,i_d}\widehat T_{i_1,\dots,i_d}x_1(i_1)\dots x_d(i_d)$. Combining that with the upper bound of the $\BH$ constant for multilinear tensors~\cite{bayart2014bohr}, it yields
\begin{equation}\label{eq:eskenazisbounded}
    (d/\eps)^{O(d)}\cdot O(\log n)
\end{equation}
uniformly random samples are enough to learn $T$. Although this result is surprising since the complexity only scales polylogarithmic with $n$, observe that if $d=\omega(\log n)$, then the sample complexity is superpolynomial in $n$. This motivates the following question: are there classes of polynomials that can be learned using $\poly(n)$ samples for any $d=\omega(\log n)$? Below we answer this question positively for the class of polynomials that arise from quantum query algorithms.

\paragraph{Quantum polynomials.}
The result of \cref{eq:eskenazisbounded} can be applied to learn the amplitudes of quantum algorithms that query different blocks of variables every time (see \cref{fig:QQA}), as they are multilinear tensors bounded on the supremum norm~\cite{polynomialmethod}. To be precise, we consider quantum query algorithms such that they prepare a state $$\ket{\psi_x}=U_d(O_{x_d}\otimes \Id_m) U_{d-1}\dots U_1(O_{x_1}\otimes \mathbb \Id_m)U_0 \ket u,$$
where $m$ is an integer,  $x$ stands for $(x_1,\dots,x_d)$, $O_y$ is the $n$-dimensional unitary that maps $\ket i$ to $y_i\ket i$, $U_0,\dots,U_d$ are $(n+m)$-dimensional unitaries and $\ket u$ is a $n+m$-dimensional unit vector. The algorithm succeeds according to a projective measurement that measures
the projection of the final state onto some fixed $n+m$ dimensional unit vector $\ket v$. Hence, the amplitude of $\ket v$ is $T(x)=\langle v|\Psi_x\rangle$, so $|T(x)|^2$ is the acceptance probability of the algorithm.
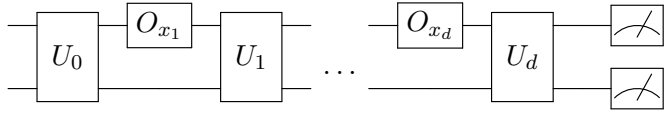
\begin{figure}[!ht]
    \centering
    \hspace{0.5cm}
	\Qcircuit @C=1em @R=.7em {
		\ \ \ \ & \multigate{1}{U_0} & \gate{O_{x_1}} & \multigate{1}{U_1}&\qw &  &  & \gate{O_{x_d}} & \multigate{1}{U_d} &\qw & \meter\\
		\ \ \ \ & \ghost{U_0} & \qw & \ghost{U_1} &\qw &\ustick{\dots} & &   \qw & \ghost{U_d} & \qw &\meter \\
	}
	\caption{Quantum query algorithms considered in \cref{theo:learnqpolynomials}.}\label{fig:QQA}
\end{figure}
These quantum algorithms have been considered in the quantum computing literature. For example, $k$-forrelation, that witnesses the biggest possible quantum-classical separation, has this structure~\cite{aaronson2015forrelation,bansal2021k}. Also, for these algorithms the Aaronson and Ambainis conjecture is known to be true, so they can be classically and efficiently simulated almost everywhere~\cite{Bansal:2022,gutierrez2023influences}.

In addition,  it was shown~\cite{QQA=CBF} that those amplitudes are not only bounded, but also completely bounded. 
Our main contribution regarding these algorithms is showing that for $d$-linear tensors $T$ that are \emph{completely bounded}, we can improve the $\BH$ inequality~to
$$
\Big(\sum_{i_1,\ldots,i_d=1}^n |\widehat T_{i_1,\ldots,i_d}|^{2d/d+1}\Big)^{(d+1)/2d}\leq 1.
$$ 
Using this upper bound, we show the following.
\begin{restatable}{theorem}{learnqpolynomials}\label{theo:learnqpolynomials}
    For a quantum algorithm that makes $d$-queries as in \cref{fig:QQA},  there is a classical algorithm that $(\eps,\delta)$-learns its amplitudes  (in $\ell_2^2$-distance) using  
    $
    O\left((1/\varepsilon)^d\cdot  \log n\right)
    $
    uniform samples.
\end{restatable}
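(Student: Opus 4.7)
The plan is to combine the Eskenazis-Ivanisvili learning algorithm for bounded $d$-linear tensors (whose sample complexity is captured by Eq.~\eqref{eq:eskenazis2d/d+1}) with the new $\BH$ inequality for completely bounded tensors (Theorem~\ref{theo:BHGf}), which improves the $\BH$ constant from $d^{O(d)}$ to $1$. This replacement is exactly what removes the $d^{O(d)}$ factor from~\eqref{eq:eskenazisbounded} and yields the clean $O((1/\varepsilon)^d\log n)$ bound.

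First, I would verify that the amplitude $T(x) := \braket{v}{\psi_x}$ with $x = (x_1,\ldots,x_d) \in (\{-1,1\}^n)^d$ is a bounded $d$-linear tensor. Each oracle $O_{x_j}$ is diagonal with $(x_j)_i$ on the $i$-th entry, so expanding $U_d(O_{x_d}\otimes \Id_m)\cdots U_1(O_{x_1}\otimes \Id_m)U_0\ket{u}$ in the computational basis shows that each bit $(x_j)_i$ appears to degree at most $1$; thus $T$ is multilinear across the $d$ blocks $x_1,\ldots,x_d$, and boundedness $|T(x)|\leq 1$ follows from Cauchy-Schwarz applied to the unit vectors $\ket v$ and $\ket{\psi_x}$.

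Next, I would invoke~\cite{QQA=CBF} to conclude that $T$ is in fact a completely bounded $d$-tensor with completely bounded norm at most~$1$; this is where quantum structure beyond mere boundedness enters. Applying Theorem~\ref{theo:BHGf} then yields
\begin{equation*}
    \Big(\sum_{i_1,\ldots,i_d=1}^n |\widehat T_{i_1,\ldots,i_d}|^{2d/(d+1)}\Big)^{(d+1)/2d} \leq 1,
\end{equation*}
a $\BH$-type bound with constant~$1$ in place of what is available for general bounded $d$-tensors.

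Finally, I would plug this bound directly into Eq.~\eqref{eq:eskenazis2d/d+1}: the Eskenazis-Ivanisvili algorithm, run on $T$ with uniformly random inputs $(x,T(x))$ from $(\{-1,1\}^n)^d$, uses $O((1/\varepsilon)^d\cdot \log n)$ samples to output a hypothesis $\widetilde T$ satisfying $\E[(T-\widetilde T)^2]\leq \varepsilon$, which is the claim. The hard part of this argument is entirely upstream: the whole theorem is essentially a corollary of the completely bounded $\BH$ inequality in Theorem~\ref{theo:BHGf}, and once that inequality is available the learning statement follows by a direct substitution into the existing sampling bound, so there is no substantive obstacle in this proof itself beyond that invocation.
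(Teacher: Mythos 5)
Your proposal is correct and follows essentially the same route as the paper: cite Eskenazis--Ivanisvili's sample-complexity bound parameterized by $\|\widehat T\|_{2d/(d+1)}$, use~\cite{QQA=CBF} to establish that the amplitude tensor $T$ is completely bounded (with cb-norm at most 1), and then apply Theorem~\ref{theo:BHGf} to set the $\BH$ constant to $1$. The only difference is that you spell out the (routine) verification that $T$ is a bounded $d$-linear tensor, which the paper leaves implicit.
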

This exponentially improves the complexity of~\cite{eskenazis2022learning}, as stated in Eq.~\eqref{eq:eskenazisbounded} for the natural class of  polynomials arising from quantum query algorithms. In particular, for $d=\omega(\log n)$ and constant~$\eps$, one can learn this class of polynomials with sample complexity that is polynomial in $n$.

\subsection{Result 4: Quantum learning of classical polynomials}
\paragraph{Boolean functions.} Quantum learning not only concerns quantum objects, but also classical ones. For instance, Boolean functions
 $f:\{-1,1\}^n\to \{-1,1\}$ can be accessed using a \emph{quantum uniform example}, given by 
 \begin{equation}\label{eq:quantumsamples}
     \frac{1}{\sqrt{2^n}}\sum_{x\in \{-1,1\}^n}\ket{x,f(x)}.
 \end{equation} 
This data access model has been vastly studied in the literature, where several notable quantum speedups have been proven~\cite{bshouty1995learning,atici2007quantum,arunachalam2021two}. Many of these speedups are analyzed trough the Fourier transform, that allows to identify every function $f:\{-1,1\}^n\to\mathbb R$ with a multilinear polynomial  $f=\sum_{s\in \{0,1\}^n}\widehat f(s)\chi_s$, 
where $\widehat f(s)\in \mathbb R$ are the Fourier coefficients and $\chi_s$ are the character functions $\chi_s(x)=\prod_{i\in [n]}x_i^{s_i}$. The degree for these functions is the minimum integer $d$ such that if $|s|>d$ then $\widehat f(s)=0$. It is a well-known fact that Boolean functions $f:\{-1,1\}^n\to \{-1,1\}$ of degree $d$ are $2^{1-d}$-granular, meaning that their Fourier coefficients lie in $2^{1-d}\mathbb Z$. This has immediate consequences for both learning theory and $\BH$ inequalities. 
\begin{restatable}{fact}{learnBool}\label{fact:learnBool}
   Let $f:\{-1,1\}^n\to\{-1,1\}$ be a degree-$d$ function. There is a quantum algorithm that learns $f$ exactly  with probability 1$-\delta$ using $O\left(4^dd\log\left(1/\delta\right)\right)$ quantum uniform examples. In contrast, any classical algorithm requires $\Omega(2^d+\log n)$ uniform examples and there is a classical algorithm that only uses $O\left(4^dd\log\left(n/\delta\right)\right)$ uniform examples.
\end{restatable}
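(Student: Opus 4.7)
The plan rests on the granularity property stated immediately before the fact: every degree-$d$ Boolean function $f:\{-1,1\}^n\to\{-1,1\}$ has Fourier coefficients in $2^{1-d}\mathbb Z$, so each nonzero $\widehat f(s)$ has magnitude at least $2^{1-d}$. Combining this with Parseval $\sum_s \widehat f(s)^2 = 1$ forces the support $S=\{s:\widehat f(s)\neq 0\}$ to have size at most $4^{d-1}$, and moreover $|s|\leq d$ for every $s\in S$. These two facts are the backbone of every step below, including the rounding step that turns approximate Fourier estimates into exact ones.

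For the quantum upper bound, apply Hadamards to the quantum example in Eq.~\eqref{eq:quantumsamples} to obtain (on the $x$ register) a state whose computational-basis measurement outputs $s$ with probability $\widehat f(s)^2\geq 4^{1-d}\geq 1/|S|$ for every $s\in S$. A coupon-collector argument then shows that $O(|S|\log(|S|/\delta))=O(4^d d\log(1/\delta))$ such Fourier samples recover $S$ with probability $1-\delta/2$. Next, measure $O(4^d d\log(1/\delta))$ fresh quantum examples directly in the computational basis to harvest classical pairs $(x,f(x))$, and for each $s\in S$ compute the empirical estimator $\frac{1}{T}\sum_i f(x_i)\chi_s(x_i)$. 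A Hoeffding bound plus a union bound over $|S|\leq 4^{d-1}$ yields simultaneous additive error strictly below $2^{-d}$, and rounding to the nearest element of $2^{1-d}\mathbb Z$ recovers $\widehat f$ exactly.

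The classical upper bound uses the same concentration idea with the Fourier-sampling step replaced by exhaustive enumeration. The candidate support $\{s:|s|\leq d\}$ has size at most $O(n^d)$. Drawing $T=O(4^d d\log(n/\delta))$ uniform samples and applying Hoeffding with a union bound over the $n^d$ candidates controls every empirical Fourier coefficient within additive error $2^{-d}$; rounding to $2^{1-d}\mathbb Z$ then reproduces $\widehat f$ (and hence $f$) exactly. For the classical lower bound we combine two information-theoretic restrictions. First, the sub-class of dictators $\{f(x)=x_i:i\in[n]\}$ has $n$ members of degree $1\leq d$, and each uniform example carries at most one bit about $i$, forcing $\Omega(\log n)$ samples. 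Second, the sub-class of all Boolean functions depending only on $x_1,\dots,x_d$ has $2^{2^d}$ members, every one of degree at most $d$, and each uniform example again carries at most one bit, forcing $\Omega(2^d)$ samples.

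The main obstacle is the two-stage probabilistic analysis of the quantum algorithm: one has to simultaneously control the probability that coupon collection recovers all of $S$ and the probability that every coefficient estimator is within $2^{-d}$ of its true value, without incurring a $\log n$ penalty. The granularity bound $|S|\leq 4^{d-1}$ is what makes the two union bounds dimension-free, and it is also what lets us round approximate estimates to exact values; any weaker structural bound on the Fourier support would reintroduce a dependence on $n$ and ruin the quantum-classical separation.
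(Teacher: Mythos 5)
Your proof is correct and follows essentially the same approach as the paper: granularity plus Parseval give $|S|\leq 4^{d-1}$ with all supports of size $\leq d$, quantum Fourier sampling is used to locate $S$, and in both the quantum and classical upper bounds the same Hoeffding-plus-union-bound-plus-rounding-to-$2^{1-d}\mathbb{Z}$ argument recovers the coefficients exactly. The only cosmetic difference is that you sketch the classical $\Omega(2^d+\log n)$ lower bound via the dictator and $d$-junta counting argument, whereas the paper cites it as a known result about learning $d$-juntas.
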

Despite the simplicity of the proof of \cref{fact:learnBool}, we state it for completeness and because it seems not to be well-known (see for instance~\cite[Corollary 34]{nadimpalli2023pauli}, which proposes a quantum algorithm for the same problem that requires $O(n^d)$ samples, or \cite[Corollary 4]{eskenazis2022low} that proposes a classical algorthm that requires $O(2^{d^2}\log n)$ samples). We also remark that the classical lower bound was recently improved from $\Omega(2^d+\log n)$ (which is folklore, and it is a lower bound for learning Boolean $d$-juntas) to $\Omega(2^d\log n)$ \cite{eskenazis2022low}.  Here, we observe a $\BH$-type inequality for Boolean functions.

\begin{restatable}{fact}{BHBool}\label{fact:BHBool}
    Let $f:\{-1,1\}^n\to\{-1,1\}$ of degree at most $d$. Then, 
    $$
    \Big(\sum_{s\in\{0,1\}^n}|\widehat f(s)|^{\frac{2d}{d+1}}\Big)^{\frac{d+1}{2d}}\leq  2^{\frac{d-1}{d}}.$$
    The equality is witnessed by the address function.     
\end{restatable}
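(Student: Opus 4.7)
The plan is to combine two elementary but crucial ingredients already present in the paper: the $2^{1-d}$-granularity of the Fourier coefficients of a degree-$d$ Boolean function (so every nonzero $\widehat f(s)$ has $|\widehat f(s)|\geq 2^{1-d}$), and Parseval's identity, which for $f:\{-1,1\}^n\to\{-1,1\}$ gives $\sum_s |\widehat f(s)|^2 = 1$. The whole statement falls out of a one-line interpolation between these two facts.

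Concretely, set $p:=2d/(d+1)$ and note that $p-2 = -2/(d+1) < 0$. On the support of $\widehat f$, granularity yields
\[
|\widehat f(s)|^{p} \;=\; |\widehat f(s)|^{2}\cdot |\widehat f(s)|^{-2/(d+1)} \;\leq\; \bigl(2^{1-d}\bigr)^{-2/(d+1)} |\widehat f(s)|^{2} \;=\; 2^{\,2(d-1)/(d+1)}\,|\widehat f(s)|^{2}.
\]
Summing over $s$ and applying Parseval gives $\sum_s |\widehat f(s)|^{p}\leq 2^{\,2(d-1)/(d+1)}$, and raising to the power $(d+1)/(2d)$ produces the claimed bound $2^{(d-1)/d}$.

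For the tightness claim, I would exhibit the address function $AD_k:\{-1,1\}^{k}\times\{-1,1\}^{2^k}\to\{-1,1\}$ with $k=d-1$, defined by $AD_k(x,y)=y_{a(x)}$ where $a(x)\in\{-1,1\}^k$ is the address encoded by $x$. Writing $AD_k(x,y)=\sum_{a\in\{-1,1\}^k} y_a \mathbf{1}[x=a]$ and expanding $\mathbf{1}[x=a]=2^{-k}\prod_{i=1}^{k}(1+a_i x_i)$ shows that $AD_k$ has exactly $4^k$ nonzero Fourier coefficients, each of modulus $2^{-k}=2^{1-d}$, with maximum degree $k+1=d$. A direct computation then gives
\[
\Bigl(\sum_s |\widehat{AD_k}(s)|^{2d/(d+1)}\Bigr)^{(d+1)/(2d)} \;=\; \bigl(4^{k}\cdot 2^{-kp}\bigr)^{(d+1)/(2d)} \;=\; 2^{(d-1)/d},
\]
matching the upper bound.

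There is no real obstacle: the only nontrivial input is the granularity lemma for degree-$d$ Boolean functions, which is standard and is the very property the paper invokes to prove \cref{fact:learnBool}. The argument is essentially an $L^p$--$L^2$ interpolation in which the lower bound on the atoms of the support (granularity) replaces the usual Hölder step that one would need in the general bounded case, and this is exactly what lets us bypass the $\poly(d)$ or $\exp(d)$ losses appearing in all other $\mathrm{BH}$-type inequalities discussed in the paper.
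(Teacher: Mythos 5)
Your upper-bound argument is correct and rests on exactly the same two ingredients as the paper's (granularity of degree-$d$ Boolean Fourier coefficients, plus Parseval), packaged slightly differently: you apply granularity pointwise to get $|\widehat f(s)|^{p}\leq 2^{2(d-1)/(d+1)}|\widehat f(s)|^{2}$ and sum, whereas the paper first derives a bound of $2^{2(d-1)}$ on the Fourier support size and then invokes H\"older with exponents $(2/p, 2/(2-p))$; both yield the identical bound $2^{(d-1)(2-p)}$.

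One small but deliberate difference concerns the extremizer. You use the classical address function $AD_k(x,y)=y_{a(x)}$ on $\{-1,1\}^{k}\times\{-1,1\}^{2^{k}}$ with $k=d-1$, and your computation correctly shows it saturates the bound. The paper, however, uses a slightly different representative: a $d$-linear form on $(\{-1,1\}^{n})^{d}$ (two physical bits per address variable, written as $g_a(x_1,\dots,x_{d-1})x_d(a)$). Both witness equality in \cref{fact:BHBool}, but the paper's version is intentionally a multilinear \emph{tensor}, which is what allows the surrounding discussion to use this same example as a lower bound for the Bohnenblust--Hille constant for $d$-linear tensors. Your extremizer is a degree-$d$ polynomial but not a $d$-tensor, so while it proves the equality claim in the Fact it does not serve that auxiliary purpose.
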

\cref{fact:BHBool} might be of interest in functional analysis for two reasons: $(i)$ it is conjectured that the value of the $\BH$ constant for $d$-linear tensors is $2^{\frac{d-1}{d}}$~\cite{pellegrino2018towards} (tensors correspond to multilinear forms in the functional analysis language), so this fact proves the conjecture for particular case of $d$-linear Boolean tensors, 
$(ii)$ the address function\footnote{ For $d\in\mathbb N$, the address function $f:(\{-1,1\}^2)^{d-1}\times\{-1,1\}^{2^{d-1}}\to \{-1,1\}$ is defined as 
    $f(x,y)=\sum_{a\in \{-1,1\}^{d-1}}g_a(x)y(a),$
    where we identify $\{-1,1\}^{d-1}$ with $[2^{d-1}]$ and $g_a(x)$ is $0$ unless $x_i(1)=-a_ix_i(2)$ for every $i\in [n]$, in which case it takes the value $\prod_{i\in [n]} x_i(1)$ .}, that saturates the inequality, is a $d$-linear form that gives a lower bound for the $\BH$ constant for multilinear tensors of $2^{\frac{d}{d-1}},$ which matches the best lower bound known so far~\cite{diniz2014lower}.

\paragraph{Boolean polynomials.} For bounded functions $f:\{-1,1\}^n\to [-1,1]$ the definition of quantum uniform examples defined as in \cref{eq:quantumsamples} is unclear. Given that bounded polynomials have received attention in a few works~\cite{linial1993constant,iyer2021tight,eskenazis2022learning}, we propose a way learning them quantumly by accessing these polynomials through a block encoding~\cite{chakraborty2018power}, i.e., a learning algorithm has access to a block encoding of the $2^n$-dimensional diagonal matrix whose diagonal entries all equal~$f$. To this end, we prove the following theorem. 

\begin{restatable}{proposition}{learnBlockEncoding}\label{prop:learnBlockEncoding}
    Let $f:\{-1,1\}^n\to [-1,1]$ be a degree-$d$ polynomial. There is an algorithm~that $(\varepsilon,
    \delta)$-learns $f$ (in $\ell_2$-distance)  using $\exp(\widetilde{O}(\sqrt{d^{3}}+d\log(1/\eps)) \log (1/\delta))$ copies of a block-encoding~of~$f$.
\end{restatable}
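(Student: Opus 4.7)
The plan is to identify $f$ with the diagonal $n$-qubit observable $D_f=\sum_{s:|s|\leq d}\widehat f(s)\,Z^s$ that the block-encoding $U$ realises on its top-left block. Since the Pauli expansion of $D_f$ lies entirely in the abelian sub-algebra generated by $Z_1,\dots,Z_n$, its Pauli coefficients coincide with the classical Fourier coefficients of $f$, and the relevant $\BH$ inequality is the \emph{commutative} polynomial Bohnenblust--Hille bound of Defant et al.~\cite{defant2019fourier}, which gives
$$\Big(\sum_s|\widehat f(s)|^{\tfrac{2d}{d+1}}\Big)^{\tfrac{d+1}{2d}}\leq C(d)\quad\text{with}\quad C(d)=\exp(\widetilde O(\sqrt d)).$$
This is the crucial input: plugging in the non-commutative $\BH$ constant $\exp(d)$ used for \cref{theo:learnchannels,theo:learnunitaries} would only yield a final bound of the form $\exp(\widetilde O(d^2))$, whereas $\log C(d)=\widetilde O(\sqrt d)$ is precisely what pushes the exponent down to $\widetilde O(\sqrt{d^3})$.

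The algorithm has two phases, mimicking the proofs of \cref{theo:learnchannels,theo:learnunitaries}. First, a Fourier-sampling subroutine: prepare $|+\rangle^{\otimes n}|0\rangle^a$, apply $U$, and post-select the ancilla on $|0\rangle^a$; the working register is then proportional to $\sum_b f((-1)^b)|b\rangle$, and applying $H^{\otimes n}$ followed by a computational-basis measurement returns $s$ with probability $\widehat f(s)^2/\|f\|_2^2$. After first estimating $\|f\|_2^2$ with a few block-encoding calls and outputting the zero polynomial if it lies below $\varepsilon^2$, I may assume $\|f\|_2\geq \varepsilon/2$, so amplitude amplification costs $O(1/\varepsilon)$ queries per sample and $\widetilde O(1/\alpha^2)$ samples suffice to recover, with high probability, the set $L_\alpha=\{s:|\widehat f(s)|\geq \alpha\}$, which by the $\BH$ bound has size at most $(C(d)/\alpha)^{2d/(d+1)}$. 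Second, for each $s\in L_\alpha$ I would estimate $\widehat f(s)$ by a Hadamard test on the unitary $W_s:=U\,(Z^s\otimes I^a)$ with initial state $|+\rangle^{\otimes n}|0\rangle^a$: a short calculation gives $\langle +^n,0^a|\,W_s\,|+^n,0^a\rangle = \widehat f(s)$, so $O(1/\gamma^2)$ controlled queries to $U$ yield an additive $\gamma$-estimate $\widetilde c_s$.

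To analyse the error I would write the output as $\widetilde f=\sum_{s\in L_\alpha}\widetilde c_s\,\chi_s$ and bound
$$\|f-\widetilde f\|_2^2\;\leq\; |L_\alpha|\gamma^2 \;+\; \sum_{|\widehat f(s)|<\alpha}\widehat f(s)^2 \;\leq\; |L_\alpha|\gamma^2 \;+\; \alpha^{2/(d+1)}\,C(d)^{2d/(d+1)},$$
the tail being the standard H\"older consequence of the $\BH$ inequality. Choosing $\alpha\sim \varepsilon^{d+1}/C(d)^d$ and $\gamma\sim \varepsilon/\sqrt{|L_\alpha|}$ makes both terms at most $\varepsilon^2/2$, and substituting $|L_\alpha|\leq C(d)^{2d}/\varepsilon^{2d}$ makes the total query count of the second phase dominate, giving $|L_\alpha|^2/\varepsilon^2\lesssim C(d)^{4d}/\varepsilon^{4d+2}=\exp(\widetilde O(\sqrt{d^3}+d\log(1/\varepsilon)))$, with a routine $\log(1/\delta)$ boost for the failure probability.

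The main obstacle I anticipate is not the algorithm itself but the sharp parameter balancing, together with justifying that the \emph{commutative} $\BH$ constant is indeed the one to invoke here: since $D_f$ is being accessed as a quantum operator, one must verify that $\|D_f\|_{\op}=\|f\|_\infty\leq 1$ so that Defant et al.'s inequality applies with the correct normalisation on $\widehat f$, and that the only primitives used (the block-encoding $U$ and its controlled version) lie within the stated access model. Everything else---the coupon-collector analysis for the sampling phase, Chernoff bounds for the Hadamard estimates, and the H\"older tail bound---is routine once these ingredients are in place.
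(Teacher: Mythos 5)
Your proposal is correct and follows the same high-level plan as the paper's proof: use the commutative Bohnenblust--Hille inequality of Defant et al.\ (with constant $\exp(\widetilde O(\sqrt d))$) to bound the tail, use a quantum Fourier-sampling step to localize the large Fourier coefficients, and then estimate those coefficients individually. You correctly identify that invoking the commutative constant rather than the non-commutative $\exp(d)$ is what yields $\exp(\widetilde O(\sqrt{d^3}))$, which is the crux.

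Where you differ from the paper is in the implementation of both phases. For Fourier sampling, the paper prepares the Choi--Jamio{\l}kowski state $\ket{v(U_p)}$ of the specific block-encoding in \eqref{eq:blockenc} and measures it in the Bell basis: since $U_p$ is unitary, $(|\widehat U_p(x)|^2)_x$ is already a probability distribution summing to one, and the coefficients $\widehat p(s)$ appear directly as amplitudes on the Bell strings $x\in\{3\}\times\{0,3\}^n$, so there is no post-selection step and no need for amplitude amplification or a preliminary estimate of $\norm{f}_2$. Your version post-selects the ancilla on $\ket{0}^a$, which succeeds with probability $\norm{f}_2^2$ and forces you to add the $\norm{f}_2\geq\eps/2$ case split and an extra $O(1/\eps)$ amplification overhead. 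This overhead is absorbed into the final $\exp(\widetilde O(d\log(1/\eps)))$, so the bound still comes out right, but the CJ route is cleaner. For the second phase, the paper estimates the located coefficients from classical uniform samples $(x,p(x))$ (shared across all coefficients, giving $\widetilde O(1/(a^2\eps^2))$ total), while you estimate each $\widehat f(s)$ by a separate Hadamard test on $U(Z^s\otimes I^a)$, paying $|L_\alpha|^2/\eps^2$ since the samples are not shared. Again this only changes constants inside the $\widetilde O$. Your Hadamard-test variant is arguably more self-contained than the paper's, since it never leaves the block-encoding access model, whereas the paper's use of exact classical samples $(x,p(x))$ is only directly justified (via the paper's footnote) when $p$ is Boolean-valued; for a general bounded $p$ the evaluation $p(x)$ must itself be estimated from the block-encoding, a point the paper glosses over and you implicitly sidestep.
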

The proof of \cref{prop:learnBlockEncoding} is a combination of the ideas in~\cite{eskenazis2022learning} with the Fourier sampling and block-encoding quantum primitives. We remind the reader that the merit of Eskenazis and Ivanisvili was to bring down the classical complexity of the problem from $O_{d,\eps}(n^d)$ to $O_{d,\eps}(\log n)$. \cref{prop:learnBlockEncoding} shows that the quantum complexity could be even reduced to $O_{d,\eps}(1)$. \cref{prop:learnBlockEncoding} also implies a quantum speedup (with respect to $n$), as the lower bound of $\Omega(2^d+\log n)$ also holds for membership queries\footnote{Making a membership query to $f$ consists on accessing one pair $(x,f(x))$ where $x$ is chosen by the learner, not necessarily uniformly at random. If $f$ is a Boolean function and $U_f$ is the unitary defined by $U_f\ket{x}=f(x)\ket{x}$, then a membership query $(x,f(x))$ can be simulated by applying $(H\otimes \Id_n)CU_f(H\otimes \Id_n)$ to $\ket{0}\ket{x}$ and measuring the first qubit in the computational basis. Note that $U_f$ can be regarded as a block-encoding of $f$.}, which are the classical analogue of accessing a unitary block-encoding of $f$. 
Additionally, we highlight that although our results are about sample complexity,  the time complexity of our quantum algorithm of \cref{prop:learnBlockEncoding} scales as $\poly(n,\exp(d^{1.5}))$, while the current state-of-the-art classical algorithm~\cite{eskenazis2022learning} approximates the ${n \choose d}$ Fourier coefficients, so it requires $\poly(n^d)$ time. In particular, for $d=(\log n)^{2/3}$, constant $\eps,\delta$, our quantum algorithm has a time complexity of $\poly(n)$ time, while the state-of-the-art classical algorithm runs in time $n^{\mathrm{polylog} n}$. 

\paragraph{Future directions.}
In this work we have considered the normalized $\ell_2$-norm as a  measure of accuracy of the learning algorithms. This norm is standard in learning theory if $f$ and $g$ are Boolean functions, since  $\mathrm{P}_{x}[g(x)\neq h(x)]\leq \norm{f-g}_2^2,$ so learning under the $\ell_2$-distance implies $\textsf{PAC}$ learning under the uniform distribution. In the quantum case, the normalized $\ell_2$-norm can be interpreted as average-case distances and the $\ell_1$-norm represent worst-case distances and usually has a better operational interpretation. Thus, in the quantum realm it would desirable to obtain $n$-independent learning results with respect to $\ell_1$-distances. Our work leaves this as an open question.

\paragraph{Acknowledgements.}
We thank the referees of ICALP'24 and TQC'24 for useful comments. We thank Jop Bri\"et and Antonio Pérez Hernández for useful discussions. We thank Alexandros Eskenazis, Paata Ivanisvili, Alexander Volberg and Haonan Zhang for useful comments. A.D. and S.A. thank the Institute for Pure and Applied Mathematics (IPAM) for its hospitality throughout the long program “Mathematical and Computational Challenges in Quantum Computing” in Fall 2023 during which part of this work was initiated.
This research was supported by the European Union’s Horizon 2020 research and innovation programme under the Marie Sk{\l}odowska-Curie grant agreement no. 945045, and by the NWO Gravitation project NETWORKS under grant no. 024.002.003. C.P. is partially supported by the MICINN project PID2020-113523GB-I00, by QUITEMAD+-CM, P2018/TCS4342, funded by Comunidad de Madrid and by Grant CEX2019-000904-S funded by MCINN/AEI/ 10.13039/501100011033.

\section{Preliminaries}\label{sec:prelims}
Throughout this work we will consider different constants, all of which will be denoted by $C$ and their value will be clear from context. We will use $C(d)$ to refer to quantities that only depend $d$ and are constant with respect to other parameters. Given a natural number $m$, $[m]$ represents the set $\{1,2,\dots,m\}$ and $[m]_0$ represents $\{0\}\cup [m].$

We will work with $n$-qubit systems throughout this work and denote the dimension of the underlying Hilbert space as $N=2^n$. We denote the $n$-qubit Pauli group as $\mathcal{P}_n$ which is the set of $n$-fold tensor product of single qubit Pauli operators $\{\pm 1, \pm i\} \times \{I,X,Y,Z\}^{\mathop{\otimes} n}$. We will denote Pauli operators (up to a phase) by $\sigma_x=\otimes_{i\in [n]}\sigma_{x_i}$, where $x \in \{0,1,2,3\}^n$ and the single qubit Pauli operators are defined by $\sigma_0 = I$, $\sigma_1 = X$, $\sigma_2 = Y$, and $\sigma_3=Z$. For example, the $4$-qubit Pauli operator $XYZI$ is equivalently denoted by $\sigma_s$ with $s=(1,2,3,0)$. This convention to parameterize the Pauli operators in $\calP_n$ (up to a phase) by strings in $\{0,1,2,3\}^n$ will be useful in defining the Pauli expansion of quantum states and unitaries later. For any $n$-qubit Pauli operator $\sigma_x$, we call the size of the support of the Pauli over non-identity single qubit Paulis as its \textit{degree}, denoted by $|x|$. A particular set of states that we will use throughout this work are the eigenstates of the Pauli operators. We denote the $\pm 1$ eigenstates of the single-qubit Pauli operators $\sigma_b$ as $\ket{\chi_{\pm 1}^b}$, where $b \in \{1,2,3\}$. Product states over eigenstates can then be denoted by $\rho_{s,a} = \mathop{\otimes}_{j=1}^n \ket{\chi_{a_j}^{s_j}}\bra{\chi_{a_j}^{s_j}}$ with $s \in \{1,2,3\}^n$ and $a \in \{+1,-1\}^n$ i.e., they correspond to strings $s$ and~$a$.

We will write $\mathcal{M}_{N}$ to denote the set of linear operators from $\mathbb{C}^N$ to $\mathbb{C}^N$. The set of $n$-qubit quantum unitary operators can then be defined as
\begin{equation}
    \mathcal{U}_N := \{U \in \mathcal{M}_{N} : UU^* = U^* U = I\}.
\end{equation}
We define $n$-qubit to $n$-qubit quantum channels $\Phi$ to be completely positive and trace-preserving maps from $\mathcal{M}_{N}$ to $
\mathcal{M}_{N}$.

\paragraph{Choi-Jamiolkowski isomorphism.}
We will often use the Choi-Jamiolkowski isomorphism to encode a quantum unitary or quantum channel as a quantum state. We call the resulting state as the Choi-Jamiolkowski state (or $\CJ$ state for short). 
In the case of unitaries $U$, the $\CJ$ state will be denoted by $\ket{v(U)}$, which is
\begin{equation}
    \ket{v(U)} := (U \mathop{\otimes} I) \left(\frac{1}{\sqrt{N}} \sum_{i \in [N]} \ket{i} \ket{i} \right) = \frac{1}{\sqrt{N}} \sum_{i,j \in [N]} U[i,j] \ket{i} \ket{j}.
    \label{eq:CJ_state}
\end{equation}
The CJ state $\ket{v(U)}$ can be prepared by first preparing $n$ EPR pairs (over $2n$ qubits) and the applying the unitary $U$ to the $n$ qubits coming from the first half of each of the $n$ EPR pairs. Noting the Fourier decomposition of the unitary $U = \sum_{x \in \{0,1,2,3\}^n} \widehat{U}(x) \sigma_x$ (which will be presented shortly), we can also equivalently express the CJ state of $U$ in terms of the $\CJ$ states corresponding to the different Pauli operators $\ket{v(\sigma_x)}$ as follows:
\begin{equation}\label{eq:CJBellExpansion}
    \ket{v(U)} = \sum_{x \in \{0,1,2,3\}^n} \widehat{U}(x) \ket{v(\sigma_x)}.
\end{equation} 
In case of quantum channels $\Phi$, the $\CJ$ representation is given by
\begin{equation}
    J(\Phi) = \sum_{i,j \in [N]} \Phi \left(\ket{i}\bra{j}\right) \mathop{\otimes} \ket{i}\bra{j} = (\Phi \mathop{\otimes} I) \left(\sum_{i,j \in [N]} \ket{i}\bra{j} \mathop{\otimes} \ket{i}\bra{j} \right),
\end{equation}
with the CJ state $v(\Phi)$ then defined to be
\begin{equation}
    v(\Phi) = \frac{J(\Phi)}{\Tr[J(\Phi)]}=\frac{J(\Phi)}{N}.
    \label{eq:choi_state_channel}
\end{equation}

\subsection{Fourier and Pauli analysis}
In this section, we describe Fourier expansion of Boolean functions and of different quantum objects (states, unitaries, channels) that we consider throughout this work. Note that the terms \emph{Pauli expansion} and \emph{Fourier expansion} will often be used interchangeably in the context of quantum objects .

\paragraph{Fourier expansion.}
In this section we will talk about the space of  functions defined on the Boolean hypercube $f:\{-1,1\}^n\to \mathbb R$ endowed with the inner product $\langle f,g\rangle=\mathbb E_x [f(x)g(x)]$, where the expectation is taken with respect to the uniform measure of probability. For $S\subseteq [n]$, the Fourier characters, defined by $\chi_S(x)=\prod_{i\in  S}x_i$, constitute an orthonormal basis of this space. Hence, every $f$ can be identified with a multilinear polynomial via the Fourier expansion
\begin{equation}\label{eq:FourierExpansion}
    f=\sum_{s\in \{0,1\}^n}\widehat f(s)\chi_s,
\end{equation}
where $\widehat f(s)$ are the Fourier coefficients given by
\begin{equation}\label{eq:FourierExpansion2}
    \widehat f(s)=\langle \chi_s,f\rangle=\mathbb E_{x} [f(x)\chi_s(x)].
\end{equation}
The degree of $f$ is the minimum $d$ such that $\widehat f(s)=0$ if $|s|>d$. We will often use  Parseval's~identity: 
\begin{equation}\label{eq:ParsevalIdentity}
    \norm{f}_2^2:=\langle f,f\rangle =\sum_{s\in [n]}\widehat f(s)^2.
\end{equation}
We will also consider the $p$-norms of the Fourier spectrum, which are defined as $$\norm{\widehat f}_p=\left(\sum_{s\in\{0,1\}^n}|\widehat f(s)|^p\right)^{1/p}.$$

\paragraph{Pauli expansion of operators.} In this section we consider $\mathcal{M}_N$ endowed with the usual inner product $\langle A,B\rangle =
\frac{1}{N}\Tr[A^* B]$, where here  and in the following $A^*$ denotes the adjoint matrix of $A$.  The Pauli operators $\sigma_x$
form an orthonormal basis for this space.
The Pauli expansion of a matrix~$M$ of $\mathcal{M}_N$  is given by
\begin{equation}
    M = \sum_{x \in \{0,1,2,3\}^n} \widehat M(x) \sigma_x, 
\end{equation}
where $\widehat{M}(x) = \langle \sigma_x,M\rangle$.
are Pauli coefficients of $M$. We will refer to the collection of non-zero Pauli coefficients $\{\widehat{M}(x)\}_x$ as the Pauli spectrum of $M$ with the set of corresponding strings denoted by $\mathrm{spec}(M)$. As $\{\sigma_x\}_x$ is an orthonormal basis, we have a version of Parseval's identity for operators. 
\begin{equation}
    \norm{M}_2^2:=\langle M,M\rangle=\sum_{x\in \{0,1,2,3\}^n}|\widehat M(x)|^2.
\end{equation}
In particular, for $U\in\mathcal{U}_N,$ this implies that  $(|\widehat U(x)|^2)_x$ is a probability distribution. 
We will also consider the $p$-norms of the Pauli spectrum, which are defined as $$\norm{\widehat M}_p=\left(\sum_{x\in\{0,1,2,3\}^n}|\widehat M(x)|^p\right)^{1/p}.$$
We now define a notion of degree for states and unitaries that generalizes the classical notion of Fourier degree (see~\cite[Section 5]{montanaro2008quantum}).

\begin{dfn}[Degree of a matrix]\label{def:deg_rho}
Given $M\in\mathcal{M}_N$ its degree is the minimum $d$ such that $\widehat{M}(x)=0$ for any $x \in \{0,1,2,3\}^n$ with $|x|>d$.
\end{dfn} 
We recall that this analogy between between Fourier analysis and Pauli analysis for operators was first explored by Montanaro and Osborne \cite{montanaro2008quantum}.
\paragraph{Pauli expansion of superoperators.} We consider the space of superoprators (linear maps from $\mathcal{M}_N$ to $\mathcal{M}_N$) endowed with the inner product $\langle \Phi, \Psi \rangle = \langle J(\Phi), J(\Psi)\rangle/N^2$.  An orthonormal basis for superoperators is defined using characters
\begin{equation}
    \Phi_{x,y}(\rho) = \sigma_x \rho \sigma_y,
\end{equation}
for any $x,y \in \{0,1,2,3\}^n$. The Pauli expansion of superoperators and hence quantum channels is then defined as
\begin{equation}
    \Phi = \sum \limits_{x,y \in \{0,1,2,3\}^n} \widehat{\Phi}(x,y) \Phi_{x,y},
\end{equation}
where  $\widehat{\Phi}(x,y) = \langle \Phi_{x,y}, \Phi \rangle$ are the Pauli coefficients of the superoperator. As $\{\Phi_{x,y}\}_x$ is an orthonormal basis, we have a version of Parseval's identity for superoperators
\begin{equation*}
    \norm{\Phi}_2^2:=\langle \Phi,\Phi\rangle=\sum_{x,y\in\{0,1,2,3\}^n}|\widehat\Phi(x,y)|^2.
\end{equation*} 
We will also consider the $p$-norms of the Pauli spectrum of superopertors, which are defined as $$\norm{\widehat M}_p=\left(\sum_{x,y\in\{0,1,2,3\}^n}|\widehat \Phi(x,y)|^p\right)^{1/p}.$$
If $\Phi$ is a channel, then $\widehat\Phi=(\widehat\Phi(x,y))_{x,y}$ has a couple of important properties \cite[Lemma 8]{bao2023nearly}. 
\begin{fact}
    If $\Phi$ is a channel, then $\widehat\Phi$ is a state unitarily equivalent to $v(\Phi)$. In particular, $(\widehat\Phi(x,x))_x$ is a probability distribution. 
\end{fact}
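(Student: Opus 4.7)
The plan is to show directly that $\widehat\Phi$ is the matrix of $v(\Phi)$ expressed in the Bell-state basis, from which both claims follow immediately.

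First, I would unpack the definition. Writing $|\Omega\rangle=\sum_{i\in[N]}|i\rangle|i\rangle$ so that $J(\Phi)=(\Phi\otimes I)(|\Omega\rangle\langle\Omega|)$, a short computation on the elementary superoperators $\Phi_{x,y}(\rho)=\sigma_x\rho\sigma_y$ yields
\[
J(\Phi_{x,y})=(\sigma_x\otimes I)|\Omega\rangle\langle\Omega|(\sigma_y\otimes I)=N\,|v(\sigma_x)\rangle\langle v(\sigma_y)|,
\]
using that $(\sigma_x\otimes I)|\Omega\rangle=\sqrt N\,|v(\sigma_x)\rangle$ by \eqref{eq:CJ_state}.

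Next, I would apply this termwise to the Pauli expansion $\Phi=\sum_{x,y}\widehat\Phi(x,y)\,\Phi_{x,y}$ and divide by $N$ to get
\[
v(\Phi)\;=\;\sum_{x,y\in\{0,1,2,3\}^n}\widehat\Phi(x,y)\,|v(\sigma_x)\rangle\langle v(\sigma_y)|.
\]
Since the Paulis $\{\sigma_x\}_{x}$ are orthonormal for the normalized Hilbert–Schmidt inner product, the $4^n=N^2$ Bell vectors $\{|v(\sigma_x)\rangle\}_x$ form an orthonormal basis of $\mathbb C^N\otimes\mathbb C^N$. Introducing the change-of-basis unitary $W=\sum_{x}|v(\sigma_x)\rangle\langle x|$ on $\mathbb C^{N^2}$ (identifying $\{0,1,2,3\}^n$ with $[N^2]$), the displayed identity becomes $v(\Phi)=W\,\widehat\Phi\,W^{*}$, which is exactly the claimed unitary equivalence.

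Finally, because $\Phi$ is CPTP we have $J(\Phi)\succeq 0$ and $\Tr J(\Phi)=N$, so $v(\Phi)$ is a density matrix; by unitary equivalence $\widehat\Phi$ is too. The diagonal of any density matrix is a probability distribution summing to its trace $1$, giving the second assertion. The only point that needs genuine care is the Bell-basis orthonormality, which reduces to $\Tr[\sigma_x^{*}\sigma_y]=N\delta_{x,y}$; everything else is bookkeeping of the definitions of $J(\cdot)$, $v(\cdot)$, and the Pauli expansion of superoperators, so I do not anticipate any substantive obstacle.
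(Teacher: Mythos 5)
Your proof is correct and self-contained. Note, however, that the paper does not actually supply a proof of this fact; it cites it directly from Bao and Yao \cite[Lemma~8]{bao2023nearly}. Your argument — computing $J(\Phi_{x,y})=N\,|v(\sigma_x)\rangle\langle v(\sigma_y)|$, applying linearity to get $v(\Phi)=W\,\widehat\Phi\,W^{*}$ with $W$ the change-of-basis unitary from the computational basis to the Bell basis $\{|v(\sigma_x)\rangle\}_x$, and then invoking $J(\Phi)\succeq 0$ with $\Tr J(\Phi)=N$ for a CPTP map — is the standard and expected verification. The only implicit use you make (rightly, and worth flagging) is that $\sigma_y^{*}=\sigma_y$, so that $\langle\Omega|(\sigma_y\otimes I)=\sqrt{N}\,\langle v(\sigma_y)|$ rather than $\sqrt{N}\,\langle v(\sigma_y^{*})|$; this is what ties the $\Phi_{x,y}$ basis cleanly to rank-one Bell projectors.
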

\noindent The degree of a superoprator is defined in the analogue way to operators. 
\begin{dfn}[Degree of a superoperator]\label{def:superop}
Given a superoperator $\Phi$ its degree is the minimum $d$ such that $\widehat{\Phi}(x,y)=0$ for any $x,y \in \{0,1,2,3\}^n$ with $|x|+|y|>d$.
\end{dfn}  

We recall that this analogy between Fourier analysis and Pauli analysis for superoperators was first explored by Bao and Yao \cite{bao2023nearly}.

\subsection{Bohnenblust-Hille  inequalities}\label{subsec:introBH} 
\paragraph{Littlewood and Bohnenblust-Hille inequalities.} In 1930 Littlewood~\cite{littlewood1930bounded} showed that for every bilinear form $B:\mathbb C^n\times\mathbb C^n \to \mathbb C$ it holds that 
\begin{equation}\label{eq:littlewood}
    \norm{\widehat B}_{4/3}\leq \sqrt{2}\norm{B}_{\infty},
\end{equation}
where $\widehat B\in \mathbb C^{n\times n}$ are the coefficients of $B$, i.e., $B(x,y)=\sum_{i,j\in [n]}\widehat B_{i,j}x_iy_j$,  and
$$
\|\widehat B \|_{4/3}=\left(\sum_{i,j\in [n]}|\widehat B_{i,j}|^{4/3}\right)^{3/4}\quad \mathrm{and}\quad \norm{B}_\infty=\sup_{|x_i|,
|y_j|\leq 1}\left|\sum_{i,j\in [n]}\widehat B_{i,j} x_iy_j\right|.
$$  
A year later, Bohnenblust and Hille~\cite{bohnenblust1931absolute} generalized the result for $d$-linear tensors $T:(\mathbb C^n)^d\to \mathbb C$ by showing that  there exists a constant $C(d)$,  depending only on $d$ (and not $n$), such that 
\begin{equation}\label{eq:BH}
    \norm{\widehat T}_{\frac{2d}{d+1}}\leq C(d)\norm{T}_{\infty},
\end{equation}
where $\widehat T$ are the coefficients of $T$ and $\|\widehat T\|_{2d/(d+1)}$ and $\norm T_\infty$ are defined in analogue way to the bilinear case.
 These inequalities also hold (possibly with different constants) if the tensors are defined over $\mathbb R$ instead of $\mathbb C$. Determining the optimal constants has received a lot of attention in mathematics~\cite{pellegrino2012new,munoz2012geometric,nunez2013there,pellegrino2018towards,vieira2018optimal,cavalcante2020geometry}.    In terms of upper bounds, it is known that $C(d)
\leq \poly (d)$~\cite{bayart2014bohr}, but the best known lower bound is no bigger than a constant: $1$ for the complex case and $2^{1-\frac{1}{d}}$ for the real case~\cite{diniz2014lower}. It is also known that $p(d)=2d/(d+1)$ is the smallest number such that the inequality $\|\widehat T\|_{p(d)}\leq  C(d)\norm{T}_{\infty}$ holds for some constant $C(d)$ independent of $n$. 

\paragraph{Grothendieck inequality.} 
In 1953, Grothendieck introduced his famous inequality for bilinear maps~\cite{grothendieck1953resume}. It establishes\footnote{We note that the original formulation of the inequality was not the one presented here. In particular, Grothendieck did not mention the cb-norm used in this work.} that there is a constant $K$ such that for every bilinear map $B:\mathbb K^n\times\mathbb K^n\to \mathbb K$, with $\mathbb K\in \{\mathbb R,\mathbb C\}$, it holds that 
\begin{equation}\label{eq:GT}
    \norm{B}_{\cb}\leq K \norm{B}_{\infty}.
\end{equation}
Here $\norm{B}_\cb$ stands for the completely bounded norm and is given by
$$
\norm{B}_{\cb}=\sup\norm{\sum_{i,j\in [n]}\widehat B_{i,j} X(i)Y(j)}_{\op},
$$
where $\norm{\cdot}_{\op}$ stands for the operator norm of matrices (largest singular value), the supremum runs over all integers $m\in\mathbb N$ and all matrices $X(i),\ Y(j)\in \mathcal{B}(\ell_2^m(\mathbb K),\ell_2^m(\mathbb K))$ whose operator norm is at most $1$. We denote the smallest constants that can appear in \eqref{eq:GT} by $\GK$. Determining the exact value of these constants has also been a separate line of research (see~\cite[Section 4]{pisier2012grothendieck} and the references therein). The current state of the art are the bounds $1.676 <\GR<1.782$~\cite{Davie:1984, Reeds:1991, Braverman:2013} and $1.338<\GC<1.4049$~\cite{haagerup1987new,davie2006matrix}. Grothendieck inequality has found a lot of applications in computer science and quantum information, for instance in non-local games \cite{Tsirelson}, quantum-query complexity \cite{Aaronson2015PolynomialsQQ} and matrix approximation theory \cite{alon2004approximating}.

For $d$-linear tensors the completely bounded norm is defined in analogue way: 
\begin{equation}\label{eq:Tcb}
    \norm{T}_{\cb}=\sup\norm{\sum_{\ind i\in [n]^d}\widehat T_{\ind i} X_1(i_1)\dots X_t(i_t)}_{\op},
\end{equation}
where again $X_s(i_s)$ are bounded on operator norm. This norm has been vastly studied in the complex case, where it corresponds with the Haagerup tensor norm of $\ell_1^n(\mathbb C)$ (see for instance~\cite[Chapter 17]{paulsenoperatoralgebras}). In contrast to the Littlewood inequality \eqref{eq:littlewood}, which generalizes to the $\BH$ inequality, one can show that Grothendieck inequality fails for $d$-linear tensors (this follows, for instance, from the stronger results proved in~\cite{perez2008unbounded}). More precisely, there is no constant $K'$ independent of $n$ such that for every trilinear form $T$ it holds $$\norm{T}_{\cb}\leq K'\norm{T}_{\infty}.$$

\paragraph{The completely bounded $\BH$ inequality.} From the $\BH$ inequality and the fact that $\norm{T}_{\infty}\leq \norm{T}_{\cb}$ it follows that there is a constant $C(d)$ such that 
\begin{equation}\label{BHG:forms}
    \norm{\widehat T}_{\frac{2d}{d+1}}\leq C(d)\norm{T}_{\cb}.
\end{equation}
We call this inequality the \emph{completely bounded $\BH$ inequality}. 

\paragraph{Non-commutative $\BH$ inequalities.} Since the breakthrough of Eskenazis and Ivanisvili, the $\BH$ inequality got the attention of the quantum learning theory community. That lead to the non-commutative versions of the $\BH$ inequality~\cite{huang2022learning,volberg2023noncommutative,slote2023bohnenblust,slote2023noncommutative,klein_et_al:LIPIcs.ITCS.2024.69}.  Volberg and Zhang \cite{volberg2023noncommutative} proved that for a matrix $M\in\mathcal{M}_N$ there exists a constant $C$ such~that 
\begin{equation}\label{eq:BHUnitaries}
    \norm{\widehat M}_{\frac{2d}{d+1}}\leq C^{d} \norm{M}_{\op}.
\end{equation}

\section{Learning low-degree quantum channels}
In this section we give our learning algorithms for quantum channels, i.e., we prove \cref{theo:learnchannels,theo:learnPauliChannels}. Before that, we need to prove a new $\BH$ inequality for quantum channels.

\subsection{Bohnenblust-Hille inequality for quantum channels}
In this section we prove a Bohnenblust-Hille inequality for quantum channels.  In fact, it is a result for superoperators that are bounded in the $S_1$ to $S_\infty$ norm (defined below), of which quantum channels are an example. Hence, in this section $\Phi$ will be treated a linear map from $\mathcal{M}_N$ to $\mathcal{M}_N$.  The $S_1$ to $S_\infty$ norm of superoperator is defined by $$\norm{\Phi}_{S_1\to S_\infty}=\sup_{M\neq 0}\frac{\norm{\Phi(M)}_{S_\infty}}{\norm{M}_{S_1}},$$ where $\norm{M}_{S_1}$ is the Schatten $1$-norm of $M$, i.e., the sum of the singular values of $M$, and $\norm{\Phi(M)}_{\infty}$ is the Schatten $\infty$-norm, which coincides with the usual operator norm given by the largest singular value.

In order to prove our theorem will reduce to the classical case of functions $f:\{-1,1\}^n\to \mathbb R$.
\begin{theorem}[\cite{defant2019fourier}]
\label{theo:DMP}
    Let $p:\{-1,1\}^n\to \mathbb R$ of degree at most $d$. Then, $$
    \norm{\widehat p}_{\frac{2d}{d+1}}\leq C^{\sqrt{d\log d}}\norm{p}_\infty,
    $$
    where $C>0$ is a constant.
\end{theorem}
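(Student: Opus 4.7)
The plan is to reduce this polynomial-level $\BH$ statement to the tensor-level $\BH$ inequality in \eqref{eq:BH}, bridging the two via polarization and hypercontractivity on the Boolean cube. The three ingredients, in order, are (i) a Fourier-level decomposition, (ii) passage from the symmetric polynomial to an associated multilinear tensor, and (iii) a hypercontractive estimate to prevent the polarization step from costing an exponential in $d$.

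First I would decompose $p=\sum_{k=0}^d p_k$ into its homogeneous components $p_k=\sum_{|S|=k}\widehat p(S)\chi_S$. Since the supports of $\widehat{p_k}$ are disjoint across levels,
\begin{equation*}
    \|\widehat p\|_{\frac{2d}{d+1}}^{\frac{2d}{d+1}} \;=\; \sum_{k=0}^d \|\widehat{p_k}\|_{\frac{2d}{d+1}}^{\frac{2d}{d+1}},
\end{equation*}
so it suffices to control each $\|\widehat{p_k}\|_{\frac{2d}{d+1}}$ and sum. Monotonicity of $\ell_p$ norms on finite supports lets me reduce this to bounding $\|\widehat{p_k}\|_{\frac{2k}{k+1}}$ for each $k\leq d$ (with harmless combinatorial adjustments).

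Next, to each $p_k$ I would associate the symmetric $k$-multilinear form $T_k:(\R^n)^k\to\R$ obtained by spreading each coefficient $\widehat p(S)$ evenly over the $k!$ permutations of $S$. The $\ell_{2k/(k+1)}$-norms of $\widehat{p_k}$ and of the full tensor of coefficients of $T_k$ agree up to a factor absorbing $k!$, and then the classical $\BH$ inequality for $k$-multilinear tensors gives $\|\widehat{T_k}\|_{2k/(k+1)}\le \mathrm{poly}(k)\cdot\|T_k\|_\infty$. The remaining and central task is to bound $\|T_k\|_\infty$ in terms of $\|p\|_\infty$.

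Now comes the delicate step. Naive real polarization gives only $\|T_k\|_\infty \le (k^k/k!)\|p_k\|_\infty \le e^k\|p\|_\infty$, which would yield an exponential constant $e^d$ and miss the target by a lot. The improvement comes from the Bonami-Beckner hypercontractive inequality
\begin{equation*}
    \|p_k\|_q \;\leq\; (q-1)^{k/2}\|p_k\|_2 \;\leq\; (q-1)^{k/2}\|p\|_\infty,\qquad q\geq 2.
\end{equation*}
Combining a probabilistic polarization identity — writing $T_k(x^{(1)},\dots,x^{(k)})$ as an expectation over Rademacher signs $\epsilon_j$ of expressions of the form $\epsilon_1\cdots\epsilon_k\,p_k(\epsilon_1 x^{(1)}+\cdots+\epsilon_k x^{(k)})$ — with Hölder and the hypercontractive bound above produces an estimate of the form $\|T_k\|_\infty \le C^k q^{k/2}\|p\|_\infty$ for a free parameter $q\geq 2$. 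Putting all pieces together yields $\|\widehat{p_k}\|_{2d/(d+1)} \leq \mathrm{poly}(k)\cdot C^k q^{k/2}\|p\|_\infty$, and the final optimization chooses $q$ of order $\log d$, which balances the two competing factors $C^k$ and $q^{k/2}$ and, after summing over $k\leq d$, gives the stated constant $C^{\sqrt{d\log d}}$.

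The main obstacle is the polarization step: the classical identity is exponentially wasteful, and it is precisely the hypercontractive refinement, together with the correct optimization of the auxiliary parameter $q$ against the level $k$ and the target degree $d$, that cuts the exponent from linear $d$ down to $\sqrt{d\log d}$. This is the technical heart of \cite{defant2019fourier} and the step where all the subtle bookkeeping (the $k!$ absorption, the passage between the mixed $\ell_p$ norms, and the choice of $q$) must be done carefully.
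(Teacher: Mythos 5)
This theorem is stated in the paper with a citation to \cite{defant2019fourier} and is used as a black box; the paper does not reprove it, so there is no in-paper proof to compare against. What can be assessed is whether your sketch is itself a viable argument, and I do not think it is, for the following reasons.

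The decomposition into homogeneous parts, the disjoint-support identity for $\|\widehat p\|_{2d/(d+1)}$, and the $\ell_p$-monotonicity step are all fine. The problem is the claimed estimate $\|T_k\|_\infty \le C^k q^{k/2}\|p\|_\infty$ and the subsequent ``optimization.'' Hypercontractivity controls $L^q$-norms by $L^2$-norms for $q>2$; it gives no handle on $\|T_k\|_\infty$, which is a supremum, and there is no Hölder step that converts the $L^q$-bound $\|p_k\|_q\le (q-1)^{k/2}\|p\|_\infty$ into a supremum bound for the polarized form evaluated at hypercube points. Worse, even if one grants the displayed inequality, both factors $C^k$ and $q^{k/2}$ are monotone increasing in $q$ (and in $k$), so the ``balancing'' you describe is vacuous: the minimum over $q\ge 2$ is attained at $q=2$, which leaves an exponential-in-$d$ constant. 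Summing $\mathrm{poly}(k)\cdot C^k\cdot 2^{k/2}$ over $k\le d$ yields $C'^d$, not $C^{\sqrt{d\log d}}$. The route ``polarize to a tensor, apply tensor $\BH$, pay for polarization'' is exactly the naive path that fails: the polarization constant $k^k/k!\sim e^k$ is essentially unimprovable, so any reduction passing through $\|T_k\|_\infty$ is doomed to an $\exp(d)$ constant.

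The argument in \cite{defant2019fourier} does not reduce to the tensor case via polarization. It is a direct hypercube analogue of the Bayart--Pellegrino--Seoane proof: one partitions the degree-$d$ index sets into blocks, applies a Blei-type mixed-norm inequality (as in \cref{lem:Bleisinequality}) at the level of Fourier coefficients, and in the step where the tensor proof invokes the Khintchine inequality one instead invokes hypercontractivity for the relevant partial restrictions of $p$. The subexponential constant then comes from optimizing the block size of the partition: the optimal choice is around $\sqrt d$, producing the $\sqrt d$ in the exponent, with a further $\sqrt{\log d}$ factor coming from a secondary optimization inside the hypercontractive estimate. That optimization is over a genuinely competing pair of terms (one increasing and one decreasing in the block size), unlike the $q$-optimization in your sketch.
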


\noindent We follow a similar argument to the one used in  \cite{volberg2023noncommutative}. However, we need to modify their argument in order to considered maps from $S_1$ to $S_\infty$ and not just matrices in $M_N$. Here, for every superoperator $\Phi:\mathcal{M}_M\to\mathcal{M}_N$, we will assign it a function $f_{\Phi}:\{-1,1\}^{3n}\times \{-1,1\}^{3n}\to\mathbb C$ defined as follows. For $a=(a^1,a^2,a^3),\ b=(b^1,b^2,b^3)\in\{-1,1\}^n\times\{-1,1\}^n\times\{-1,1\}^n$ and $s,t\in \{1,2,3\}^n$, define the following matrices (which are not necessarily states)
$$
\ket{a^s}\langle b^{t}|=\mathop{\otimes}_{i\in [n]}\ket{\chi^{s(i)}_{a^{s(i)}_i}}\bra{\chi^{t(i)}_{b^{{t}(i)}_i}},
$$
The function $f_{\Phi}:\{-1,1\}^{3n}\times\{-1,1\}^{3n}\to \mathbb C$ is then given by 
$$
f_{\Phi}( a,b)=\frac{1}{9^n}\sum_{s,t\in \{1,2,3\}^n}\Tr[\Phi\left(\ket{a^s}\langle b^{t}|\right)|b^{t}\rangle\langle a^{s}|],
$$ where $f_{\Phi}$ has the following properties, which allow for a reduction to the classical $\BH$ inequality.

\begin{lemma}\label{prop:propertiesfPhi}
    Let $\Phi$ be a degree-$d$ superoperator. Then,
    $|f_{\Phi}(a,b)|\leq \norm{\Phi}_{S_1\to S_\infty}$ for all $a,b$ and $\norm{\widehat\Phi}_{p}\leq 3^{d}\norm{\widehat f_{\Phi}}_p$. The degree of $f_{\Phi}$ as a multilinear polynomial is $d$. 
\end{lemma}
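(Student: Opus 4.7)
The plan is to prove the three claims via a direct computation of the Fourier expansion of $f_\Phi$.

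\textbf{Uniform bound.} For the first claim, I would argue termwise. Each $\ket{a^s}\bra{b^t}$ is a rank-one product operator whose unique singular value is $\|\ket{a^s}\|_2 \|\ket{b^t}\|_2 = 1$, so $\|\ket{a^s}\bra{b^t}\|_{S_1} = \|\ket{b^t}\bra{a^s}\|_{S_1} = 1$. Applying the trace-duality inequality $|\Tr[XY]| \leq \|X\|_{S_\infty}\|Y\|_{S_1}$ with $X = \Phi(\ket{a^s}\bra{b^t})$ and $Y = \ket{b^t}\bra{a^s}$ bounds each summand by $\|\Phi\|_{S_1 \to S_\infty}$, and the $1/9^n$ prefactor absorbs the $9^n$ terms.

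\textbf{Explicit Fourier expansion (the main step).} For the remaining two claims, the heart of the proof is computing $f_\Phi$ explicitly in terms of the Pauli coefficients of $\Phi$. I would substitute $\Phi = \sum_{x,y}\widehat\Phi(x,y)\Phi_{x,y}$ and use cyclicity of trace to derive the key factorization
\begin{equation*}
\Tr[\sigma_x\ket{a^s}\bra{b^t}\sigma_y\ket{b^t}\bra{a^s}] = \bra{a^s}\sigma_x\ket{a^s}\cdot \bra{b^t}\sigma_y\ket{b^t},
\end{equation*}
which decouples the dependence on $(x,s,a)$ from $(y,t,b)$. A single-qubit calculation shows that $\bra{a^s}\sigma_x\ket{a^s} = \prod_i \bra{\chi^{s_i}_{a^{s_i}_i}}\sigma_{x_i}\ket{\chi^{s_i}_{a^{s_i}_i}}$ vanishes unless $x_i \in \{0,s_i\}$ for every $i$, in which case it equals $\prod_{i:x_i\neq 0}a^{x_i}_i$. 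Summing over $s \in \{1,2,3\}^n$, the support constraint fixes $s_i = x_i$ at the $|x|$ nonzero coordinates of $x$ and is unrestricted on the other $n-|x|$ coordinates, contributing a factor $3^{n-|x|}$. The analogous sum over $t$ gives $3^{n-|y|}$, and after collecting the $1/9^n$ prefactor one obtains
\begin{equation*}
f_\Phi(a,b) = \sum_{x,y \in \{0,1,2,3\}^n} 3^{-|x|-|y|}\,\widehat\Phi(x,y)\,\Big(\prod_{i:x_i\neq 0}a^{x_i}_i\Big)\Big(\prod_{i:y_i\neq 0}b^{y_i}_i\Big).
\end{equation*}
These products are distinct Fourier characters on $\{-1,1\}^{3n}\times\{-1,1\}^{3n}$ (distinct $x$ yield different supports), so the display is the genuine Fourier expansion of $f_\Phi$.

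\textbf{Extracting the norm and degree bounds.} Each character above has multilinear degree $|x|+|y|$, so the third claim is immediate since $|x|+|y| \leq d$ on the support of $\widehat\Phi$. For the second claim, the inequality $3^{-|x|-|y|} \geq 3^{-d}$ on the same support yields
\begin{equation*}
\|\widehat{f_\Phi}\|_p^p = \sum_{x,y}3^{-p(|x|+|y|)}|\widehat\Phi(x,y)|^p \geq 3^{-pd}\|\widehat\Phi\|_p^p,
\end{equation*}
which rearranges to $\|\widehat\Phi\|_p \leq 3^d\|\widehat{f_\Phi}\|_p$. I do not anticipate a conceptual obstacle; the only care required is the bookkeeping of the single-qubit expectations and the counting of admissible $s,t$ in the two factorized sums.
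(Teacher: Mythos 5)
Your proof is correct and follows essentially the same route as the paper: the uniform bound is shown termwise via trace duality, and the remaining claims are obtained by computing the explicit Fourier expansion $f_\Phi(a,b)=\sum_{x,y}3^{-|x|-|y|}\widehat\Phi(x,y)\prod_{i:x_i\neq 0}a_i^{x_i}\prod_{j:y_j\neq 0}b_j^{y_j}$ through the single-qubit case analysis and the $3^{n-|x|}$ counting of admissible $s$. The factorization via cyclicity of the trace that you highlight is exactly what underlies the paper's single-qubit formula, so there is no substantive difference in method.
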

\begin{proof}
    We first show the bound on $|f_{\Phi}|$. Given that $\ket{a^s}\langle b^{t}|$ is a rank one operator such that $\|\ket{a^s}\|_2=\|\ket{b^t}\|_2=1$, we conclude that
    \begin{equation}\label{eq:S1Sinftynorms}
    \norm{\ket{a^s}\langle b^{t}|}_{S_1}=1.
    \end{equation}
    Thus, we have that $f_\Phi$ is bounded:
    \begin{align*}
        |f_{\Phi}( a,b)|&\leq \frac{1}{9^n}\sum_{s,t\in \{1,2,3\}^n}|\Tr[\Phi\left(\ket{a^s}\langle b^{t}|\right)|b^{t}\rangle \bra{a^{s}}]|\\
        &\leq \frac{1}{9^n}\sum_{s,t\in \{1,2,3\}^n}\norm{\Phi\left(\ket{a^s}\langle b^{t}|\right)}_{S_\infty}\norm{|b^{t}\rangle\bra{a^{s}}}_{S_{1}}\\
        &\leq \frac{1}{9^n}\sum_{s,t\in \{1,2,3\}^n}\norm{\Phi}_{S_1\to S_\infty}\norm{\ket{a^s}\langle b^{t}|}_{S_1}\norm{|b^{t}\rangle\bra{a^{s}}}_{S_{1}}\\
        &\leq \frac{1}{9^n}\sum_{s,t\in \{1,2,3\}^n}\norm{\Phi}_{S_1\to S_\infty}=\norm{\Phi}_{S_1\to S_\infty},
    \end{align*}
    where in the first inequality we have used the triangle inequality, in the second inequality  the duality between $S_1$ and $S_\infty$, in the third the definition of $S_1\to S_\infty$ norm and in the fourth \cref{eq:S1Sinftynorms}.  We now prove that $\norm{\widehat\Phi}_{p}\leq 3^{-d}\norm{\widehat f_{\Phi}}_p$ and that the degree of $f_{\Phi}$ is $d$. For that it suffices to show that 
    \begin{equation}\label{eq:fPhicoefficients}
        f_{\Phi}(a,b)=\sum_{x,y\in \{0,1,2,3\}^n} \frac{\widehat \Phi(x,y)}{3^{|x|+|y|}} \prod_{i\in \mathrm{supp}(x)}\prod_{j\in \mathrm{supp}(y)}a^{x(i)}_ib^{y(j)}_j,
    \end{equation}
    where $\mathrm{supp}(x)=\{i\in [n]:\, x_i\neq 0\}$ and $|x|$ is the size of supp$(x)$. Indeed, this follows from the fact that for every $x,y\in \{0,1,2,3\}^n$, the product $\prod_{i\in \mathrm{supp}(x)}\prod_{j\in \mathrm{supp}(y)}a^{x(i)}_ib^{y(j)}_j$ can be read as $\chi_{S_{x,y}}(a,b)$ for a certain $S_{x,y}\in \{-1,1\}^{6n}$ satisfying that $S_{x,y}\neq S_{x',y'}$ whenever $(x,y)\neq (x',y')$.
    
    To prove \cref{eq:fPhicoefficients} the key is observing that for every $s,t\in \{1,2,3\}$, $x,y\in \{0,1,2,3\}$ and $a,b\in\{-1,1\}$  we have that
    \begin{equation*}
        \Tr [\sigma_{x}\ket{\chi^{s}_{a}}\langle\chi^{t}_{b}|\sigma_{y}
|\chi^{t}_{b}\rangle\langle\chi^{s}_a|]=\left\{\begin{array}{ll}
           0  & \mathrm{if}\ (s\neq x\ \mathrm{and} \ x\neq 0)\ \mathrm{or}\ (t\neq y\ \mathrm{and} \ y\neq 0),\\
            1  & \mathrm{if}\ x= 0\ \mathrm{and}\ y= 0,\\ 
            a  & \mathrm{if}\ s= x\ \ \mathrm{and}\ y= 0,\\
            b  & \mathrm{if}\ x=0\ \ \mathrm{and}\ t= y,\\
            ab  & \mathrm{if}\ s= x\ \ \mathrm{and}\ y= t.\\
        \end{array}\right.
    \end{equation*}
    Hence taking tensor products we that for every $s,t\in \{1,2,3\}^n$, $x,y\in \{0,1,2,3\}^n$ and $a=(a^1,a^2,a^3),\ b=(b^1,b^2,b^3)\in\{-1,1\}^n\times\{-1,1\}^n\times\{-1,1\}^n$ 
    it holds that 
    \begin{equation*}
        \Tr [\sigma_{x}\ket{a^s}\langle b^t|\sigma_{y}
|b^t\rangle\langle a^s|]=\langle\chi^{s}_a|\sigma_{x}\ket{\chi^{s}_{a}}\langle\chi^{t}_{b}|\sigma_{y}
|\chi^{t}_{b}\rangle=\prod_{i\in\supp x}\prod_{j\in\supp y} a_i^{x(i)}b_j^{y(j)}\delta_{x(i),s(i)}\delta_{y(j),t(j)}.
    \end{equation*}
In particular, from this follows that
\begin{align*}
    f_{\Phi_{x,y}}( a,b)&=\frac{1}{9^n}\sum_{s,t\in \{1,2,3\}^n}\Tr[\sigma_x \ket{a^s}\langle b^{t}|\sigma_y|b^{t}\rangle\langle a^{s}|]
    \\&=\frac{1}{9^n}\prod_{i\in\supp x}\prod_{j\in\supp y} a_i^{x(i)}b_j^{y(j)}\sum_{s\in \mathcal{X},t\in\mathcal{Y}} 1,
\end{align*}
where $\mathcal{X}=\{s\in\{1,2,3\}^n: s(i)=x(i)\ \forall\ i\in\supp (x)\}.$ Hence, as $|\mathcal{X}|=3^{n-|x|}$, \cref{eq:fPhicoefficients} follows for the case of $\Phi_{x,y}$. By linearity, \cref{eq:fPhicoefficients} follows for every superoperator.
\end{proof} 

\begin{theorem}[Bohnenblust-Hille inequality for $S_1\to S_\infty$ maps]\label{theo:BHchannels}
    Let $\Phi$ be a superoperator of degree at most $d$. Then, $$\norm{\widehat\Phi}_{\frac{2d}{d+1}}\leq C^{d}\norm{\Phi}_{S_1\to S_\infty}.$$ In particular, if $\Phi$ is a quantum channel, then there exists a constant $C$ such that
    $$
    \norm{\widehat\Phi}_{\frac{2d}{d+1}}\leq C^{d}.
    $$
\end{theorem}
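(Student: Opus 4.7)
The plan is to reduce the quantum statement to the classical Bohnenblust--Hille inequality of \cref{theo:DMP} by means of the auxiliary function $f_\Phi$, exactly as \cref{prop:propertiesfPhi} was designed for. Concretely, I would start from an arbitrary superoperator $\Phi$ of degree at most $d$ and consider $f_\Phi:\{-1,1\}^{3n}\times\{-1,1\}^{3n}\to\C$, which by \cref{prop:propertiesfPhi} is a multilinear polynomial of degree at most $d$, satisfies $\|f_\Phi\|_\infty\le \|\Phi\|_{S_1\to S_\infty}$, and has Fourier spectrum related to the Pauli spectrum of $\Phi$ through $\|\widehat\Phi\|_p\le 3^{d}\|\widehat{f_\Phi}\|_p$ (here I am using that the $3^{|x|+|y|}$ prefactors in \cref{eq:fPhicoefficients} are at most $3^d$ under the degree hypothesis, and that distinct $(x,y)$ give distinct Walsh characters on $\{-1,1\}^{6n}$, so no cancellation occurs).

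Next I would invoke \cref{theo:DMP} with $n$ replaced by $6n$ to obtain
\[
\|\widehat{f_\Phi}\|_{\frac{2d}{d+1}}\le C^{\sqrt{d\log d}}\,\|f_\Phi\|_\infty.
\]
Combining this with the two estimates provided by \cref{prop:propertiesfPhi} gives
\[
\|\widehat\Phi\|_{\frac{2d}{d+1}}\le 3^{d}\,\|\widehat{f_\Phi}\|_{\frac{2d}{d+1}}\le 3^{d}\,C^{\sqrt{d\log d}}\,\|\Phi\|_{S_1\to S_\infty},
\]
and since $3^{d}C^{\sqrt{d\log d}}\le C'^{\,d}$ for a suitable absolute constant $C'$ (the $3^d$ factor dominates the subexponential term), this establishes the first inequality of the theorem.

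For the ``In particular'' part, I would check that every quantum channel $\Phi$ has $\|\Phi\|_{S_1\to S_\infty}\le C$ for an absolute constant (in fact $\le 2$, or even $\le 1$ with a bit more care), so the claim $\|\widehat\Phi\|_{2d/(d+1)}\le C^d$ follows from the inequality just proved. The argument is standard: decompose a general $M\in\mathcal{M}_N$ into its Hermitian and anti-Hermitian parts and each of those into positive and negative parts; since $\Phi$ is completely positive and trace preserving one obtains $\|\Phi(M)\|_{S_1}\le 2\|M\|_{S_1}$, and then $\|\Phi(M)\|_{S_\infty}\le \|\Phi(M)\|_{S_1}$ finishes the bound. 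Absorbing this constant into the base of the exponential preserves the form $C^d$.

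The only step that requires any genuine thought is the reduction lemma \cref{prop:propertiesfPhi}, which is already proved; from that point on the theorem is a mechanical application of \cref{theo:DMP} together with the trivial operator-theoretic fact that channels are bounded from $S_1$ to $S_\infty$. If one wanted to sharpen the constant in the exponent (say to polynomial-in-$d$), the obstruction would be the $3^d$ factor coming from the $3^{|x|+|y|}$ weights in the expansion \cref{eq:fPhicoefficients}, which is unavoidable in this reduction since distinct Paulis at the same site give rise to different eigenbasis characters that must all be averaged over.
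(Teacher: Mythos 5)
Your approach is essentially the paper's: reduce to the commutative Bohnenblust--Hille inequality of \cref{theo:DMP} via the auxiliary function $f_\Phi$ and \cref{prop:propertiesfPhi}, then absorb the $3^d$ factor into $C^d$. There is one small gap you should close: \cref{theo:DMP} is stated for \emph{real-valued} polynomials $p:\{-1,1\}^n\to\R$, whereas $f_\Phi$ takes values in $\C$, so it cannot be invoked on $f_\Phi$ directly. The paper handles this by writing $\widehat f_\Phi = \widehat{\Re f}_\Phi + i\,\widehat{\Im f}_\Phi$, noting that $\Re f_\Phi$ and $\Im f_\Phi$ are each real-valued, of degree at most $d$, and bounded in sup-norm by $\norm{\Phi}_{S_1\to S_\infty}$, and then applying \cref{theo:DMP} to each part and combining via the triangle inequality; this costs only an extra factor of $2$, which is harmlessly absorbed into $C^d$. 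Your sketch skips this step, and as written the invocation of \cref{theo:DMP} is outside its hypotheses.

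For the ``in particular'' part, your argument (decompose $M$ into Hermitian/anti-Hermitian and positive/negative parts, use trace preservation, and bound $\norm{\cdot}_{S_\infty}\leq\norm{\cdot}_{S_1}$) is valid and yields $\norm{\Phi}_{S_1\to S_\infty}\leq 2$. The paper instead obtains $\norm{\Phi}_{S_1\to S_\infty}\leq 1$ more directly by duality: $\norm{\Phi}_{S_1\to S_1}=\norm{\Phi^\dagger}_{S_\infty\to S_\infty}=1$ because $\Phi^\dagger$ is a completely positive unital map. Both routes suffice, since the constant is absorbed into $C^d$ either way.
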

\begin{proof}
    Let $\Re f_\Phi:\{-1,1\}^{6n}\to \mathbb R$ be defined as $(\Re f_\phi) (x)=\Re (f_\Phi (x))$ and $\Im f_\Phi:\{-1,1\}^{6n}\to \mathbb R$ as $(\Im f_\phi) (x)=\Im (f_\Phi (x))$. Note that we have that $\widehat f_{\phi}=\widehat{\Re f}_{\Phi}+i\widehat{\Im f}_{\Phi}$. By \cref{prop:propertiesfPhi}, $|(\Re f_\phi) (a,b)|,|(\Im f_\phi) (a,b)|\leq| f_\Phi (x)|\leq \norm{\Phi}_{S_1\to S_\infty}$ and that the degree of both the real and imaginary part is at most $d$. Hence, by the triangle inequality and \cref{theo:DMP} we have $$ \norm{\widehat f_{\Phi}}_{\frac{2d}{d+1}}\leq  \norm{\widehat {\Re f}_{\Phi}}_{\frac{2d}{d+1}}+ \norm{\widehat{ \Im f}_{\Phi}}_{\frac{2d}{d+1}}\leq C^{\sqrt{d\log d}}\norm{\Phi}_{S_1\to S_\infty}.$$
    Thus, using that $\norm{\widehat\Phi}_{2d/(d+1)}\leq 3^d\norm{\widehat f_{\Phi}}_{2d/(d+1)}$ it follows that $\norm{\widehat\Phi}_{2d/(d+1)}\leq C^d\norm{\Phi}_{S_1\to S_\infty}$.
    This proves the first part of the statement. 
    
    For the second we just have to show that if $\Phi$ is a quantum channel, then $\norm{\Phi}_{S_1\to S_\infty}\leq 1$. This is true since $\norm{\Phi}_{S_1\to S_\infty}\leq \norm{\Phi}_{S_1\to S_1}$ and $\Phi^\dag$ is a completely positive and unital map between C$^*$-algebras, so we have $\norm{\Phi}_{S_1\to S_1}=\norm{\Phi^\dag}_{S_\infty\to S_\infty}=1$ \cite[Proposition 3.2]{paulsenoperatoralgebras}.
    \end{proof}
\begin{remark}\label{rem:channelsimpliesunitaries}
    \cref{theo:BHchannels} generalizes the non-commutative $\BH$ inequality by Volberg and Zhang \cite{volberg2023noncommutative}, displayed in \cref{eq:BHUnitaries}. Indeed, given $M\in\mathcal{M}_N$, one can consider the superoperator $\Phi_M(\cdot)=\Id\cdot M$, which satisfies $\widehat\Phi_M(x,y)=\delta_{x,0^n}\widehat M(y)$ and $\norm{\Phi_M}_{S_1\to S_\infty}=\norm{M}_{\op}.$
\end{remark}

\subsection{Learning low-degree quantum channels}
Before we prove the main theorem of the section, we show that for a given $x,y\in \{0,1,2,3\}^n$, the corresponding Fourier coefficient $\widehat\Phi(x,y)$ can be efficiently learned. To do that, we just have to combine a few SWAP tests.
\begin{fact}[SWAP test for mixed states~\cite{kobayashi2003quantum}]\label{lem:mixedSWAPTest}
    Let $\rho, \rho'$ be two states. Then, one can estimate $\Tr[\rho\rho']$ up to error $\eps$ with probability $1-\delta$ using $O((1/\eps)^{2}\log(1/\delta))$ copies of $\rho$ and $\rho'$.
\end{fact}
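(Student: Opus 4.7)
The plan is to reduce estimating $\Tr[\rho\rho']$ to estimating a Bernoulli parameter via the standard SWAP test, and then conclude with a Chernoff/Hoeffding bound. First I would recall the SWAP test procedure: introduce an ancilla qubit prepared in $|+\rangle$, perform a controlled-SWAP between the two registers holding the inputs, apply a Hadamard on the ancilla, and measure the ancilla in the computational basis. A direct one-line calculation shows that for pure inputs $|\psi\rangle, |\phi\rangle$ the probability of observing outcome $0$ equals $(1+|\langle\psi|\phi\rangle|^2)/2$.

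Next I would extend this to mixed inputs purely by linearity of the channel. Writing spectral decompositions $\rho=\sum_i p_i\,|\psi_i\rangle\langle\psi_i|$ and $\rho'=\sum_j q_j\,|\phi_j\rangle\langle\phi_j|$, the probability of outcome $0$ becomes
\[
\tfrac{1}{2}\Bigl(1+\sum_{i,j}p_i q_j\,|\langle\psi_i|\phi_j\rangle|^2\Bigr)=\tfrac{1}{2}\bigl(1+\Tr[\rho\rho']\bigr).
\]
Hence each independent execution of the SWAP test on one fresh copy of $\rho$ and one of $\rho'$ produces a $\{0,1\}$-valued i.i.d.\ random variable with mean $(1+\Tr[\rho\rho'])/2$, which encodes $\Tr[\rho\rho']$ via a fixed affine rescaling.

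Finally, I would invoke Hoeffding's inequality: with $T=O(\log(1/\delta)/\eps^2)$ independent repetitions of the above test, the empirical mean is within additive error $\eps/2$ of $(1+\Tr[\rho\rho'])/2$ with probability at least $1-\delta$. Applying the affine map $p\mapsto 2p-1$ to the empirical estimate then yields an estimator for $\Tr[\rho\rho']$ with additive error $\eps$, using $O(\log(1/\delta)/\eps^2)$ copies of each state, as claimed. The only non-trivial step is the direct calculation of the acceptance probability of the SWAP test on pure inputs; the passage to mixed inputs is automatic from linearity of quantum operations, and the sample-complexity bound is a textbook application of a scalar concentration inequality, so there is no substantive obstacle in the argument.
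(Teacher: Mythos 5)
Your proposal is correct and is the standard argument: one applies the SWAP test circuit (ancilla in $|+\rangle$, controlled-SWAP, Hadamard, measure), notes that $\Pr[\text{outcome }0]=\tfrac12(1+\Tr[\rho\rho'])$ — which you verify on pure states and then extend to mixed states by bilinearity of the outcome probability in the input density matrices — and concludes by Hoeffding. The paper itself states this as a fact with a citation to Kobayashi et al.\ rather than reproving it, and your argument is exactly what that reference (and any standard treatment) gives, so there is nothing to flag.
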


\begin{lemma}[Pauli coefficient estimation for channels]\label{cor:PauliCoefficientEstimationChannel}
    Let $x,y\in \{0,1,2,3\}^n$. Then, $\widehat \Phi(x,y)$ can be estimated with error $\eps$ and probability $1-\delta$ using $O((1/\eps)^{2}\log(1/\delta))$ queries to $\Phi$. 
\end{lemma}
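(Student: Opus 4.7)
The plan is to reduce the estimation of $\widehat\Phi(x,y)$ to the task of estimating a single matrix entry of the density matrix $\widehat\Phi$, exploiting the fact (recalled just before this lemma) that $\widehat\Phi$ is a quantum state unitarily equivalent to $v(\Phi)$ and that a single copy of it can be produced from a single query to $\Phi$. Once this reduction is in place, estimating any prescribed entry of $\widehat\Phi$ becomes a standard application of the mixed-state SWAP test of \cref{lem:mixedSWAPTest}, with the density matrix $\widehat\Phi$ playing the role of $\rho$ and a cheap pure state playing the role of $\rho'$.

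For the diagonal case $x=y$, I would invoke \cref{lem:mixedSWAPTest} with $\rho=\widehat\Phi$ and the free pure state $\rho'=\kb{x}$, which directly yields an $\eps$-accurate estimate of $\Tr[\widehat\Phi\,\kb{x}]=\widehat\Phi(x,x)$ with the desired query complexity. For the off-diagonal case $x\neq y$, I would use the Hermiticity of $\widehat\Phi$ and run the mixed-state SWAP test against the two auxiliary pure states $\ket{\phi_+}=\tfrac{1}{\sqrt{2}}(\ket{x}+\ket{y})$ and $\ket{\phi_-}=\tfrac{1}{\sqrt{2}}(\ket{x}-i\ket{y})$. A direct computation gives
$$\Tr[\widehat\Phi\,\kb{\phi_+}]=\tfrac{1}{2}(\widehat\Phi(x,x)+\widehat\Phi(y,y))+\Re\widehat\Phi(x,y),$$
together with the analogous identity for $\kb{\phi_-}$ producing $\Im\widehat\Phi(x,y)$ in place of $\Re\widehat\Phi(x,y)$. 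Subtracting off the two already-estimated diagonal entries $\widehat\Phi(x,x)$ and $\widehat\Phi(y,y)$ then recovers both the real and imaginary parts of $\widehat\Phi(x,y)$, and hence $\widehat\Phi(x,y)$ itself.

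There is no substantial obstacle in this argument; the only bookkeeping step is the error budgeting across the (at most) four SWAP-test invocations used in the off-diagonal case. Running each subroutine to precision $\eps/4$ and confidence $1-\delta/4$ and combining via the triangle inequality and a union bound delivers additive error $\eps$ with failure probability at most $\delta$. Since each call to \cref{lem:mixedSWAPTest} consumes $O((1/\eps)^2\log(1/\delta))$ copies of $\widehat\Phi$ and each such copy costs a single query to $\Phi$ (the pure states $\kb{x},\kb{y},\kb{\phi_+},\kb{\phi_-}$ are free to prepare on the algorithm's side), summing the four contributions leaves the overall query complexity at $O((1/\eps)^2\log(1/\delta))$, matching the statement.
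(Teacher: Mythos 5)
Your proposal is correct and takes essentially the same approach as the paper: the diagonal case is handled by a SWAP test of $\widehat\Phi$ against $\kb{x}$, and for $x\neq y$ the paper uses exactly your two auxiliary states $\tfrac12(\ket{x}\bra{x}+\ket{x}\bra{y}+\ket{y}\bra{x}+\ket{y}\bra{y})=\kb{\phi_+}$ and $\tfrac12(\ket{x}\bra{x}+i\ket{x}\bra{y}-i\ket{y}\bra{x}+\ket{y}\bra{y})=\kb{\phi_-}$ together with the diagonal estimates to isolate $\Re\widehat\Phi(x,y)$ and $\Im\widehat\Phi(x,y)$.
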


\begin{proof}
    If $x=y$, we just have to prepare $\widehat\Phi$ and apply \cref{lem:mixedSWAPTest} to $\widehat\Phi$ and the state $\rho=\ket{x}\bra{x}$. If $x\neq y$, one first learns $\widehat\Phi(x,x)$ and $\widehat\Phi(y,y)$ with error $\eps$ as before. One the one hand, one can learn $\widehat\Phi(x,x)+\widehat\Phi(x,x)+2\Re \widehat\Phi(x,y),$ with error $\eps$ by applying \cref{lem:mixedSWAPTest} to $\widehat\Phi$ and $1/2\sum_{z,t\in\{x,y\}}\ket{z}\bra{t}$. Hence, one learns $\Re\widehat\Phi(x,y)$ with error $3\eps/2$. On the other hand, one can learn $\widehat\Phi(x,x)+\widehat\Phi(y,y)+2\Im \widehat\Phi(x,y),$ with error $\eps$ by applying \cref{lem:mixedSWAPTest} to $\widehat\Phi$ and $1/2(\ket{x}\bra{x}+i\ket{x}\bra{y}-i\ket{y}\bra{x}+\ket{y}\bra{y}),$ and thus one can learn $\Im \widehat\Phi(x,y)$ with error $3\eps/2$. 
\end{proof}
We will also need the following well-known result about distribution learning theory. See \cite[Theorem 9]{canonne2020short} for a proof.
\begin{lemma}\label{lem:LearningDiscreteProb}
    Let $p=\{p(x)\}_x$ be a probability distribution over some set $\mathcal{X}.$ Let $p'=( p'(x))_x$ be the empirical probability distribution obtained after sampling $T$ times from $p$. Then, for $T=O((1/\eps)^2\log(1/\delta))$  with probability $\geq 1-\delta$  we have that $|p(x)- p'(x)|\leq \eps$ for every $x\in\mathcal X$.
\end{lemma}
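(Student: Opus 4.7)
The plan is to establish the uniform bound $\sup_{x\in\mathcal X} |p(x) - p'(x)| \le \eps$ with probability at least $1 - \delta$. For a single fixed $x$, the empirical probability $p'(x)$ is the mean of $T$ i.i.d.\ Bernoulli$(p(x))$ random variables, so Hoeffding's inequality gives $\Pr[|p(x) - p'(x)| > \eps] \le 2\exp(-2T\eps^2)$, and setting this at most $\delta$ already requires only $T = O((1/\eps)^2 \log(1/\delta))$. The real challenge is upgrading this to a bound that holds \emph{uniformly} over all $x \in \mathcal X$ without picking up a $\log |\mathcal X|$ factor, since in the channel application of \cref{theo:learnchannels} the underlying alphabet has size $|\mathcal X| = 4^n$ and the sample complexity must be independent of $n$.

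To avoid the $\log |\mathcal X|$ factor, I would appeal to VC theory. The class of singletons $\mathcal C = \{\{x\} : x \in \mathcal X\}$ has VC dimension exactly $1$: no two-point set can be shattered by singletons. The Vapnik--Chervonenkis uniform-convergence inequality then yields
\begin{equation*}
    \Pr\Big[\sup_{x \in \mathcal X}\,|p(x) - p'(x)| > \eps\Big] \;\le\; C(T+1)\exp(-cT\eps^2)
\end{equation*}
for absolute constants $C,c > 0$, which is at most $\delta$ once $T = O((1/\eps)^2 \log(1/\delta))$ (the lower-order $\log T$ contribution gets absorbed when inverting the inequality).

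A more elementary alternative is to partition $\mathcal X$ into the heavy set $H = \{x : p(x) \ge \eps/2\}$ and its complement. Normalization forces $|H| \le 2/\eps$, so Hoeffding combined with a union bound over the $O(1/\eps)$ heavy coordinates yields $|p(x) - p'(x)| \le \eps$ on $H$ with probability $\ge 1 - \delta/2$ using the claimed number of samples. For $x \notin H$ the failure event $|p(x) - p'(x)| > \eps$ forces $p'(x) > \eps/2$, and in any realization at most $2/\eps$ indices can satisfy this; Bernstein's inequality, which exploits the small variance $p(x) < \eps/2$ of the light coordinates, then bounds the probability that any such index becomes empirically heavy.

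The main obstacle in both strategies is precisely the removal of the $\log |\mathcal X|$ factor. The naive "Hoeffding plus union bound over $\mathcal X$" proof is wasteful, and beating it requires exploiting the structural fact that only $O(1/\eps)$ atoms of $p$ can carry nontrivial mass. Either the VC route (which packages this implicitly via VC dimension $1$) or the heavy/light decomposition (which uses it explicitly) yields the stated sample complexity $T = O((1/\eps)^2 \log(1/\delta))$ of \cref{lem:LearningDiscreteProb}.
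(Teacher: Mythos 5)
Your structural insight---that the class of singletons has VC dimension $1$, which is the right reason no $\log|\mathcal X|$ appears---is correct, but the specific inequality you invoke does not deliver the claimed complexity, and the absorption remark is where the gap lies. Inverting $\Pr[\sup_x|p(x)-p'(x)|>\eps]\le C(T+1)e^{-cT\eps^2}\le\delta$ requires $T\gtrsim\eps^{-2}\log\bigl((T+1)/\delta\bigr)$, and plugging in $T\approx\eps^{-2}\log(1/\delta)$ makes the right-hand side contain $\log(T+1)\approx 2\log(1/\eps)$, which is \emph{not} dominated by $\log(1/\delta)$ in general (take $\delta$ constant and $\eps\to 0$). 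So the classical VC inequality only yields $T=O\bigl(\eps^{-2}(\log(1/\delta)+\log(1/\eps))\bigr)$. Your heavy/light alternative inherits the same $\log(1/\eps)$ leak from the union bound over the $O(1/\eps)$ heavy atoms, and the light-atom step is incomplete as written: the set of light indices that become empirically heavy is a \emph{random} set, so one cannot simply union-bound over ``the at most $2/\eps$ indices that do so'' without a further argument.

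To obtain the clean $T=O(\eps^{-2}\log(1/\delta))$ claimed in the lemma you need an inequality with no $\log T$ term. Two standard options: (i) a sharp uniform-convergence bound for VC classes, such as Talagrand's, giving $T=O(\eps^{-2}(d+\log(1/\delta)))$ for VC dimension $d$, so $d=1$ finishes; or (ii) the Dvoretzky--Kiefer--Wolfowitz inequality in Massart's tight form, $\Pr[\sup_t|F(t)-F_T(t)|>\lambda]\le 2e^{-2T\lambda^2}$, applied with $\lambda=\eps/2$ after fixing an ordering of $\mathcal X$, together with the observation $|p(x)-p'(x)|\le 2\sup_t|F(t)-F_T(t)|$ since each PMF value is a difference of two consecutive CDF values. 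The paper itself gives no proof and simply cites Canonne \cite{canonne2020short}, whose argument is of the latter type. For the downstream use in \cref{theo:learnchannels} the extra $\log(1/\eps)$ your argument produces would anyway have been swallowed by the $\widetilde O(\cdot)$ in the exponent, so the gap is harmless for the paper's headline results, but it is a genuine gap in the proof of the lemma as stated.
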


\noindent Now, we are ready to prove \cref{theo:learnchannels}, which we restate for the convenience of the reader.
\learnchannels*

    \begin{proof} We first state the algorithm. Let $C$ be the constant in the statement of Theorem \ref{theo:BHchannels}.

    \begin{algorithm}[H]
    \caption{Learning low-degree channels via $\BH$ inequality} \label{algo:low_deg_channels}
    \textbf{Input}: A quantum channel $\Phi$ of degree at most $d$, and error $\eps$ and a failure probability $\delta$
    \begin{algorithmic}[1]
        \State Let $c=\eps^{4d+2}C^{-4d^2}$
        \State  Prepare $T_1=O((1/c)^2\log(1/\delta))$ copies of $\widehat\Phi$ to sample from $(\widehat\Phi(x,x))_x$. Let $(\widehat\Phi'(x,x))_x$ be the associated empirical distribution
        \For{$x,y\in \mathcal{X}_c=\{x:|\widehat\Phi'(x,x)|\geq c\}$}
            \State Prepare $O((1/c)^2(1/\eps)^{2}\log((1/c)^{2}(1/\delta)))$ copies of $\widehat\Phi$ and use them to approximate  $\widehat\Phi(x,y)$ with $\widehat\Phi''(x,y)$ using \cref{cor:PauliCoefficientEstimationChannel}. 
        
        \EndFor 
    \end{algorithmic}
    \textbf{Output}: $\sum_{x,y\in\mathcal{X}_c} \widehat\Phi''(x,y)\Phi_{x,y}$
\end{algorithm}
    
        Let $c>0$ to be determined later. In the first part of the algorithm we prepare $\widehat\Phi$ and measure, i.e., we sample from $(\widehat\Phi(x,x))_{x\in \{0,1,2,3\}^n}$. Let $(\widehat\Phi'(x,x))_{x\in \{0,1,2,3\}^n}$ be the empirical distribution one obtains after $T_1$ samples. Let us consider the event  $\mathcal{E}=\{|\widehat\Phi(x,x)-\widehat\Phi'(x,x)|\leq c\ \forall x\in \{0,1,2,3\}^n\}$. By \cref{lem:LearningDiscreteProb}, taking $T_1$ to be  $O((1/c)^2\log(1/\delta))$  ensures that 
\begin{equation*}
	\mathrm{Pr}[\mathcal{E}]\geq 1- \delta.
\end{equation*}
Let $\mathcal{X}_c=\{x:\ |\widehat \Phi'(x,x)|\geq c\}$. Note that, since $\sum_{x\in \mathcal{X}_c}\Phi(x,x)\leq 1,$ we know that 
\begin{align}
    |\mathcal{X}_c|\leq c^{-1}\label{eq:UpperBoundOnXa}.
\end{align}
Hence, assuming the event $\mathcal{E}$ holds, we have that
\begin{align}
    x\notin\mathcal{X}_c\implies |\widehat{\Phi}(x,x)|\leq |\widehat\Phi'(x,x)|+||\widehat \Phi(x,x)|-|\widehat\Phi'(x,x)||\leq 2c\label{eq:SnotinXa}.
\end{align}
In particular, it follows that 
\begin{align}\label{eq:SnotinXa2}
    x\notin\mathcal{X}_c\implies |\widehat{\Phi}(x,y)|\leq \sqrt{|\widehat{\Phi}(x,x)||\widehat{\Phi}(y,y)|}\leq \sqrt{2c}\quad \forall\, y\in\{0,1,2,3\}^n,
\end{align}
where in the first inequality we have used that $\widehat\Phi$ is positive semidefinite and in the second inequality \cref{eq:SnotinXa} and that $\widehat\Phi(y,y)\leq 1.$

Let us assume that the first part of the algorithm succeeds, meaning that $\mathcal{E}$ happens and we will continue the algorithm conditioned to this fact. In the second part of the algorithm we invoke  Lemma \ref{cor:PauliCoefficientEstimationChannel} to state that, by querying $\Phi$ just $$T_2=O((1/c)^4(1/\eps)^{2}\log((1/c)^{2}(1/\delta)))$$ times, we can find approximations $\widehat\Phi''(x,y)$ of $\Phi(x,y)$ for every $x,y\in\mathcal{X}_c$, such that 
$$\sup_{(x,y)\in \mathcal{X}_c\times\mathcal{X}_c}|\widehat\Phi(x,y)-\widehat\Phi''(x,y)|\leq c\eps$$happens with probability $\geq 1-\delta$.

Note that $T_2>T_1$, so this complexity dominates the one of the first part of the algorithm. Let $\widehat\Phi''(x,y)$ be these approximations and let $\Phi''=\sum_{x,y\in \mathcal{X}_c}\widehat\Phi''(x,y)\sigma_x\cdot\sigma_y$. Now, we have that 
\begin{align*}
    \norm{\Phi-\Phi''}_2^2&=\sum_{x,y\in\mathcal{X}_c}|\widehat \Phi(x,y)-\widehat\Phi''(x,y)|^2+\sum_{x\lor y\notin\mathcal{X}_c}|\widehat \Phi(x,y)|^2 \\
    &\leq\eps^2+\sum_{x\lor y\notin\mathcal{X}_c} |\widehat\Phi(x,y)|^{\frac{2}{d+1}}|\widehat \Phi(x,y)|^{\frac{2d}{d+1}}
    \\
    &\leq \eps^2+(2c)^{\frac{1}{d+1}}\norm{\widehat \Phi}_{\frac{2d}{d+1}}^{\frac{2d}{d+1}}\\
    &\leq \eps^2+c^{\frac{1}{d+1}} C^{d},
\end{align*}
where in the equality we have used Parseval's identity; in the first inequality we used~\cref{eq:UpperBoundOnXa}, the learning guarantees of the second part of the algorithm and that $2=1/(d+1/2)+2d/(d+1/2)$
; in the second inequality we have used~\cref{eq:SnotinXa2}; and in the third inequality we used the Bohnenblust-Hille inequality for channels (\cref{theo:BHchannels}). Hence, by choosing $$c=\eps^{2d+2}C^{-d(d+1)}$$ we obtain the desired result.
\end{proof}

\section{Learning low-degree quantum unitaries}
In this section our goal will be to prove the following theorem.

\learnunitaries*

The proof strategy is identical to the one of \cref{theo:learnchannels}, but now we learn the Pauli coefficients with an extension of the routine of Montanaro and Osborne (\cref{lem:est_fourier_coeff_unitary}) and use the $\BH$ ienquality of Volberg and Zhang to ensure correctness (\cref{eq:BHUnitaries}). 

The routine used in~\cite[Lemma 24]{montanaro2008quantum} for estimating Fourier coefficients of quantum Boolean functions can be extended for estimating Fourier coefficients of any quantum unitary. A Hadamard circuit is used for estimating the real and complex values of Fourier coefficients separately, which is shown in Figure~\ref{fig:qcirc_fourier_coeff_unitary}. The corresponding sample complexity for estimation can then be easily shown.
\begin{figure}[h!]
   \centering
   \includegraphics[width=0.75\textwidth]{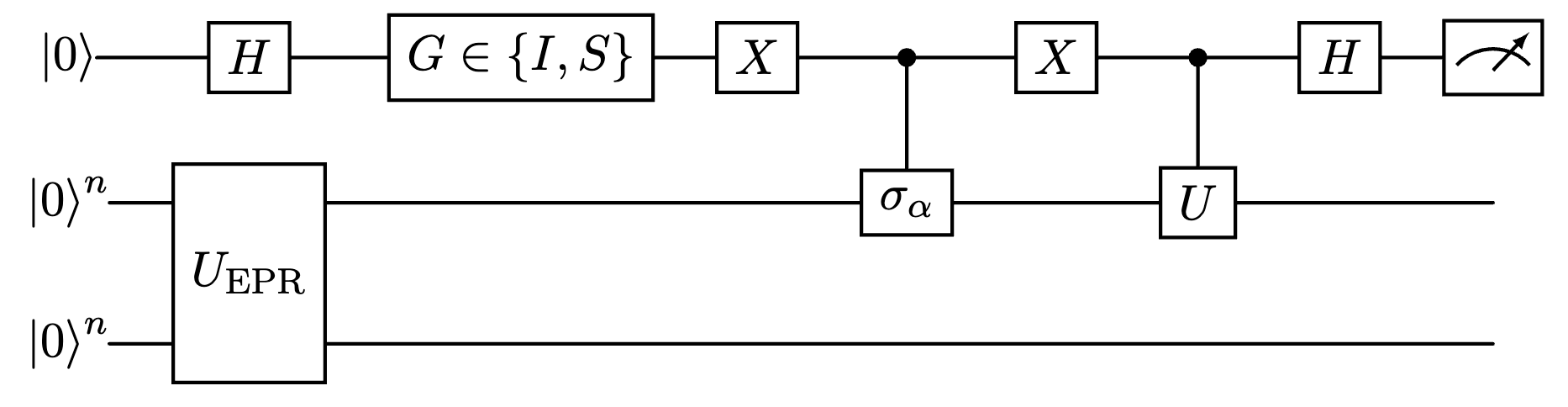}
   \caption{Quantum circuit for estimating the Fourier coefficient $\widehat{U}(\alpha)$ for any $\alpha \in \{0,1,2,3\}^n$. Gate $G$ on the ancilla qubit is set to $I$ for estimating $\mathrm{Re}(\widehat{U}(\alpha))$ and is set to the $S$ gate for estimating $\mathrm{Im}(\widehat{U}(\alpha))$. The circuit $U_{\mathrm{EPR}}$ corresponds to the state preparation unitary of $n$-qubit EPR pairs over $2n$ qubits which contains $n$ sequential $\textsf{CNOT}$ gates with control on the $j$th qubit in the first $n$-qubit register and target on the $j$th qubit in the second $n$-qubit register, ordered with increasing~$j \in [n]$.}
   \label{fig:qcirc_fourier_coeff_unitary}
\end{figure} 
\begin{lemma} \label{lem:est_fourier_coeff_unitary}
    Given a unitary $U$ and a string $x \in \{0,1,2,3\}^n$. There is an algorithm that $(\varepsilon,
    \delta)$-estimates the Fourier coefficient $\widehat{U}(x)$  using $O\left((1/\eps)^2 \log (1/\delta) \right)$ queries to $U$. 
\end{lemma}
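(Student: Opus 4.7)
The plan is to realize the Fourier coefficient $\widehat U(x)$ as an expectation value that can be estimated by a standard Hadamard test, and then invoke Hoeffding's inequality to get the advertised query complexity. The key observation is that, using the identity $\langle \mathrm{MES}|(M\otimes I)|\mathrm{MES}\rangle = \tfrac{1}{N}\Tr[M]$ for the maximally entangled state $|\mathrm{MES}\rangle=\tfrac{1}{\sqrt N}\sum_i|i\rangle|i\rangle$, one has
\[
\widehat U(x) \;=\; \langle \sigma_x, U\rangle \;=\; \tfrac{1}{N}\Tr[\sigma_x^* U]\;=\;\langle \mathrm{MES}|\,(\sigma_x^* U)\otimes I\,|\mathrm{MES}\rangle.
\]
So estimating $\widehat U(x)$ reduces to estimating $\langle \psi|V|\psi\rangle$ for the particular state $|\psi\rangle=|\mathrm{MES}\rangle$ and unitary $V=(\sigma_x^* U)\otimes I$.

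First, I would verify that the circuit in Figure~\ref{fig:qcirc_fourier_coeff_unitary} implements exactly this Hadamard test: the $2n$-qubit EPR preparation $U_{\mathrm{EPR}}$ applied to $|0\rangle^{\otimes 2n}$ produces $|\mathrm{MES}\rangle$; the controlled-$U$ (on the first $n$-qubit register) together with the Pauli $\sigma_x^*$ implements the controlled version of $V$; and the ancilla sandwich $H\cdot G\cdot H$ with $G=I$ (resp.\ $G=S$) projects out the real (resp.\ imaginary) part in the standard way. A quick computation then shows that the probability of measuring the ancilla in $|0\rangle$ equals $\tfrac{1\pm \mathrm{Re}\,\widehat U(x)}{2}$ when $G=I$ and $\tfrac{1\pm \mathrm{Im}\,\widehat U(x)}{2}$ when $G=S$.

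Next, assign the $\pm 1$-valued random variable $Z$ to the two measurement outcomes, so that $\mathbb E[Z]$ equals $\mathrm{Re}\,\widehat U(x)$ or $\mathrm{Im}\,\widehat U(x)$ depending on the choice of $G$. Each copy of this experiment consumes one query to (controlled-)$U$. By Hoeffding's inequality, $O((1/\varepsilon)^2\log(1/\delta))$ independent repetitions suffice to estimate each of the real and imaginary parts up to additive error $\varepsilon/\sqrt 2$ with failure probability at most $\delta/2$. A union bound then yields a combined complex estimate of $\widehat U(x)$ within $\ell_2$-error $\varepsilon$ and confidence $1-\delta$, using $O((1/\varepsilon)^2\log(1/\delta))$ queries to $U$ in total.

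There is essentially no obstacle here; the only minor subtlety is making sure that ``query to $U$'' in the statement covers controlled-$U$, which is explicitly allowed by the learning model specified just before Theorem~\ref{theo:learnunitaries}. One could also give a purely ``incoherent'' variant that prepares $|v(U)\rangle$ and measures in the Bell basis associated with $\{\sigma_x\}$, extracting the required overlaps via a SWAP/Bell test; I would relegate that alternative to a remark and take the Hadamard-test presentation as the main proof since it matches the figure.
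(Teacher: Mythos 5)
Your proposal is correct and follows essentially the same route as the paper: the paper also just invokes the Hadamard-test circuit of Figure~\ref{fig:qcirc_fourier_coeff_unitary} (extending Montanaro--Osborne's routine for quantum Boolean functions), estimating $\mathrm{Re}\,\widehat U(x)$ and $\mathrm{Im}\,\widehat U(x)$ separately via the choice $G\in\{I,S\}$ and then applying a Hoeffding-plus-union-bound argument. You have merely filled in the details that the paper leaves implicit --- the identity $\widehat U(x)=\langle\mathrm{MES}|(\sigma_x U\otimes I)|\mathrm{MES}\rangle$, the resulting acceptance probabilities, and the observation that controlled-$U$ is part of the allowed query model --- all of which are accurate (note $\sigma_x^*=\sigma_x$ since Paulis are Hermitian, so the adjoint is cosmetic).
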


\begin{proof}[ of Theorem~\ref{theo:learnunitaries}]
    Let $c>0$ to be determined later. In the first part of the algorithms we prepare $T_1$ copies of the  $\CJ$ state $\ket{v(U)}$ and measure them in the Bell basis, i.e., we sample from $(|\widehat U(x)|^2)_x$. Let  $(|\widehat U'(x)|^2)_{x}$ be the empirical distribution obtained after sampling. Let $\mathcal{E}=\{||\widehat{U}(x)|^2-|\widehat U'(x)|^2|\leq c^2:\, \forall\, x\in\{0,1,2,3\}^n\}$. By \cref{lem:LearningDiscreteProb}, taking $T_1=O((1/c)^4\log(1/\delta))$ times ensures that 
\begin{equation*}
	\mathrm{Pr}[\mathcal{E}]\geq 1- \delta.
\end{equation*}
Note that in the event of $\mathcal{E}$ $$||\widehat{U}(x)|-|\widehat U'(x)||\leq c,$$ for every $x\in\{0,1,2,3\}^n$. Let $\mathcal{X}_c=\{|\widehat U'(x)|\geq c\}$. Note that, since $\sum_{x\in \mathcal X_a}|\widehat U' (x)|^2\leq 1$, we have that $   |\mathcal{X}_c|\leq c^{-2}$ 
and in the event of $\mathcal{E}$ we have that
\begin{align}
    x\notin\mathcal{X}_c\implies |\widehat{U}(x)|\leq |\widehat U'(x)|+||\widehat U(x)|-|\widehat U'(x)||\leq 2c\label{eq:SnotinSa}.
\end{align}
In the second part of the algorithm we assume that the first has succeeded, i.e., that $\mathcal{E}$ occurs, Next, we learn the Pauli coefficients of $\mathcal{X}_c$ with error $c\eps$ and probability $1-\delta $ using $$T_2=O((1/c)^4(1/\eps)^2\log((1/c)^2(1/\delta)))$$ queries for each of them (which can be done using Lemma~\ref{lem:est_fourier_coeff_unitary}). As $T_2>T_1,$ the cost of this second part dominates over the first. Let $\widehat{U}''(x)$ be the approximations obtained for $\widehat U(x)$ with $x\in\{0,1,2,3\}^n$. Let $U_{c}=\sum_{x\in \mathcal{X}_c}\widehat{U}''(x)\sigma_x$. Now, we have that 
\begin{align*}
    \norm{U-U_{c}}_2^2&=\sum_{x\in\mathcal{X}_c}|\widehat U(x)-\widehat{U}'(x)|^2+\sum_{x\notin\mathcal{X}_c} |\widehat U(x)|^2\leq \eps^2+(2c)^{\frac{2}{d+1}}\norm{\widehat U}_{\frac{2d}{d+1}}^{\frac{2d}{d+1}}\leq \eps^2+c^{\frac{2}{d+1}} C^{d},
\end{align*}
where in the equality we have used Parseval; in the first inequality we have used our upper bound on $|\mathcal{X}_c|$, Eq.~\eqref{eq:SnotinSa} and the learning guarantees of the second part of the algorithm; and in the second inequality the Bohnenblust-Hille inequality of Volberg and Zhang \cref{eq:BHUnitaries}. Hence, by choosing $$c=\eps^{d+1}C^{-d(d+1)72}$$ we obtain the desired result.
\end{proof}

\section{Learning low-degree functions}

\subsection{Boolean functions.} 
In this section, we prove Facts \ref{fact:learnBool} and \ref{fact:BHBool}, both based on the well-known granurality of Boolean polynomials (see \cite[Exercise 1.11]{o2014analysis} for a proof). 

\begin{fact}[Granularity of Boolean polynomials]\label{fact:granularity}
    Let $f:\{-1,1\}^n\to \pmset{}$ be a function of degree $d$. Then, $\widehat f(s)\in 2^{1-d}\mathbb Z$ for every $s \in\{0,1\}^n.$ 
\end{fact}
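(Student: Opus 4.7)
My plan is to exploit the elementary \emph{parity observation}: for any $\pm 1$-valued function $g:\{-1,1\}^m \to \{-1,1\}$, the formula $\widehat g(T) = 2^{-m}\sum_{y \in \{-1,1\}^m} g(y)\chi_T(y)$ expresses $2^m \widehat g(T)$ as a sum of $2^m$ values each in $\{-1,1\}$, which has the same parity as $2^m$ and hence is even whenever $m \geq 1$. This immediately gives $\widehat g(T) \in 2^{1-m}\mathbb{Z}$ whenever the function lives on exactly $m$ bits. The challenge is to leverage this when $f$ has degree $\leq d$ but possibly many more than $d$ variables.

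I will reduce to the $m = d$ case via a carefully chosen restriction. Fix $T \subseteq [n]$ with $|T| \leq d$; the case $n < d$ is handled directly by the parity argument applied to $f$ itself, giving $\widehat f(T) \in 2^{1-n}\mathbb{Z} \subseteq 2^{1-d}\mathbb{Z}$, so I may assume $n \geq d$. Pick any $T^* \supseteq T$ with $|T^*| = d$ and define the restriction $g: \{-1,1\}^{T^*} \to \{-1,1\}$ by $g(y) = f(y, \mathbf{1}_{[n]\setminus T^*})$. Since $\chi_S(\mathbf{1}) = 1$ for every $S$, expanding $f$ in its Fourier series and collecting terms according to $S \cap T^*$ yields the identity
\[
\widehat g(T) \; = \; \widehat f(T) \; + \sum_{\substack{\emptyset \neq R \subseteq [n]\setminus T^* \\ |R| \leq d - |T|}} \widehat f(T \cup R).
\]

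With this identity in hand, I proceed by downward induction on $|T|$, from $|T| = d$ down to $|T| = 0$. In the base case $|T| = d$ the sum on the right-hand side is empty, so $\widehat f(T) = \widehat g(T) \in 2^{1-d}\mathbb{Z}$ by the parity observation applied to $g$, a $\pm 1$-valued function on exactly $d$ bits. For the inductive step, every term $\widehat f(T \cup R)$ appearing in the sum has $|T \cup R| > |T|$ and $|T \cup R| \leq d$, so by the inductive hypothesis it lies in $2^{1-d}\mathbb{Z}$; combined with $\widehat g(T) \in 2^{1-d}\mathbb{Z}$, rearranging the identity gives $\widehat f(T) \in 2^{1-d}\mathbb{Z}$.

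The main obstacle I expect is writing down the correct restriction identity: the crucial choice is fixing the complementary coordinates to $\mathbf{1}$, since this makes $\chi_R(\mathbf{1}) = 1$ for every $R \subseteq [n]\setminus T^*$, so the Fourier coefficient $\widehat g(T)$ simplifies to an \emph{unsigned} sum of Fourier coefficients of $f$. With signed contributions one would still have $\widehat g(T)$ in $2^{1-d}\mathbb{Z}$, but the $\mathbb{Z}$-linear bookkeeping needed to extract $\widehat f(T)$ inductively is cleanest in the unsigned form.
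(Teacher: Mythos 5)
The paper does not actually prove this fact---it cites O'Donnell, Exercise~1.11---so there is no in-paper argument to compare against; here is a check of yours. The argument is correct. The parity observation is sound: for a $\pm1$-valued function $g$ on $m \geq 1$ bits, $2^m \widehat g(T)$ is a sum of $2^m$ terms in $\{-1,1\}$ and hence even, so $\widehat g(T) \in 2^{1-m}\mathbb{Z}$. The restriction identity $\widehat g(T) = \sum_{S : S \cap T^* = T} \widehat f(S)$ is correct (fixing the off-$T^*$ coordinates to $1$ collapses $\chi_S$ to $\chi_{S\cap T^*}$), and the added constraint $|R| \leq d-|T|$ is justified because $\widehat f(T\cup R) = 0$ once $|T\cup R| > d$. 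The downward induction on $|T|$ is well-founded: when $|T| = d$ one must take $T^* = T$, the correction sum is empty, and $\widehat f(T) = \widehat g(T) \in 2^{1-d}\mathbb{Z}$ by parity on $d$ bits; in the inductive step every correction term $\widehat f(T\cup R)$ with $\emptyset \neq R \subseteq [n]\setminus T^*$ has strictly larger support, is covered by the hypothesis, and lies in $2^{1-d}\mathbb{Z}$, so $\widehat f(T)$ does too. One comparison worth noting: the textbook hint suggests induction on $n$ via one-variable subfunctions, but that route a priori loses a factor of two per step (since $\widehat f(S)$ is half a sum or difference of coefficients of the two one-bit restrictions, yielding only $2^{-d}$ granularity without extra parity bookkeeping); your variant of restricting directly to a $d$-element superset $T^*$ and inducting downward on $|T|$ never divides, so it sidesteps this cleanly. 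The only minor caveat, which is really a caveat on the statement rather than your proof, is the degenerate case $d=0$, for which the claim is vacuously false ($\widehat f(\emptyset) = \pm 1 \notin 2\mathbb{Z}$) and which your parity step correctly excludes by requiring $m \geq 1$.
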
 
\noindent We restate the facts we are going to prove for the convenience of the reader. 
\begin{namedtheorem}[Fact 6]
Let $f:\{-1,1\}^n\to\{-1,1\}$ be a degree-$d$ function. There is a quantum algorithm that learns $f$ exactly  with probability 1$-\delta$ using $O\left(4^dd\log\left(1/\delta\right)\right)$ quantum uniform examples. In contrast, any classical algorithm requires $\Omega(2^d+\log n)$ uniform examples and there is a classical algorithm that only uses $O\left(4^dd\log\left(n/\delta\right)\right)$ uniform examples.
\end{namedtheorem}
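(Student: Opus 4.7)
The plan hinges on \cref{fact:granularity}: every nonzero Fourier coefficient of a $\{-1,1\}$-valued degree-$d$ function satisfies $|\widehat f(s)|\geq 2^{1-d}$. Together with Parseval's identity $\sum_s\widehat f(s)^2=1$, this forces $|\supp(\widehat f)|\leq 4^{d-1}$, and any additive $2^{-d}$-approximation of $\widehat f(s)$ is then pinned down exactly by rounding to the nearest multiple of $2^{1-d}$. So the task reduces to identifying $\supp(\widehat f)$ and then approximating each coefficient within $2^{-d}$.

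For the quantum upper bound I would proceed in two stages. First, Fourier sampling: from a quantum uniform example $\tfrac{1}{\sqrt{2^n}}\sum_x\ket{x}\ket{f(x)}$, standard phase-kickback manipulations (applying $H$ to the label qubit and post-selecting on outcome $1$, which succeeds with probability $1/2$) produce $\tfrac{1}{\sqrt{2^n}}\sum_x f(x)\ket{x}$, whereupon applying $H^{\otimes n}$ and measuring in the computational basis outputs $s$ with probability $\widehat f(s)^2$. Since each nonzero coefficient has mass at least $4^{1-d}$, a coupon-collector argument (union-bounded over $\leq 4^{d-1}$ coupons each of probability $\geq 4^{1-d}$) shows that $T_1=O(4^d(d+\log(1/\delta)))$ Fourier samples reveal $\supp(\widehat f)$ with probability $\geq 1-\delta/2$. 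Second, I would measure $T_2=O(4^d(d+\log(1/\delta)))$ additional quantum examples in the computational basis to obtain classical pairs $(x,f(x))$, form the empirical estimator of $\widehat f(s)=\E_x[f(x)\chi_s(x)]$ for each identified $s$, and round to the nearest multiple of $2^{1-d}$. Hoeffding's inequality together with a union bound over the $\leq 4^{d-1}$ candidates delivers uniform error $<2^{-d}$ with probability $\geq 1-\delta/2$, and granularity upgrades this to exact recovery. The total is $O(4^d d\log(1/\delta))$ quantum examples.

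For the classical upper bound the strategy is simpler: draw $T=O(4^d d\log(n/\delta))$ uniform examples, empirically estimate every one of the $\sum_{k\leq d}\binom{n}{k}\leq n^d$ potentially nonzero degree-$\leq d$ Fourier coefficients to within $2^{-d}$, and round using granularity; Hoeffding plus a union bound over $n^d$ indices forces the $\log(n/\delta)$ factor. For the $\Omega(2^d+\log n)$ classical lower bound: the $\log n$ term follows by considering the $n$ singleton parities $\{\chi_{\{i\}}\}_{i\in[n]}$, for which any $T<\log_2(n/2)$ uniform examples force two rows of the $n\times T$ input matrix to coincide with constant probability, rendering the corresponding parities indistinguishable; the $\Omega(2^d)$ term is the standard information-theoretic bound for distinguishing the $2^{2^d}$ $d$-juntas on $d$ fixed variables (each sample conveys one bit about the hidden junta). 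The main conceptual obstacle---whose resolution explains the removal of the $\log n$ factor in the quantum case---is eliminating all dependence on $n$ in the quantum sample count; Fourier sampling bypasses this by replacing a blind search over $\leq n^d$ candidate monomials with a targeted harvest of the $\leq 4^{d-1}$ monomials that actually appear in $\widehat f$.
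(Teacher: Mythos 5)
Your proof is correct and follows essentially the same route as the paper's: granularity plus Parseval bound the support by $4^{d-1}$, Fourier sampling via quantum examples locates the support, and classical estimation of those few coefficients followed by rounding to multiples of $2^{1-d}$ recovers $f$ exactly; the classical upper bound and the junta-based lower bound are likewise the same (the paper cites~\cite[Lemma IV.1]{atici2007quantum} where you sketch the pigeonhole and information-theoretic arguments directly, and your pigeonhole bound holds with certainty rather than merely "constant probability," a harmless imprecision).
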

\begin{proof}
    The classical lower bound follows from the fact that $d$-juntas (functions depending only on $d$ variables) have degree $d$ and it requires $\Omega(2^d+\log n)$ to learn them (see, for instance, \cite[Lemma IV.1]{atici2007quantum}).
     For the classical upper bound we take $T=O(4^d\log(n^d/\delta))$ uniform samples $(x_i,f(x_i))$. Define the empirical Fourier coefficients as $$\widehat f'(s)=\frac{1}{T}\sum_{i\in [T]}f(x_i)\chi_s(x_i).$$
    Define the event $\mathcal{E}=\{|\widehat f(s)-\widehat f'(s)|<2^{-d}\ \forall\, |s|\leq d\}.$
    From \cref{eq:ParsevalIdentity}, the Hoeffding bound and a union bound over the $\sum_{i\in [d]_0}{n\choose i}\leq n^d$ strings $s$ with $|s|\leq d$ it follows that $$\mathrm{Pr}[\mathcal{E}]\geq 1-\sum_{i\in [d]_0}{n\choose i}e^{-\frac{4^{-d}T}{2}}\geq 1-\delta.$$
    In the second part of the algorithm, we round every $\widehat f'(s)$ to the closest number $\widehat f''(s)\in 2^{d-1}\mathbb Z$. If $\mathcal{E}$ occurs, by granularity we have that $\widehat f''(s)=\widehat f(s)$ for every $|s|\leq d,$ so we learn $f$ exactly. 
    
    For the quantum upper bound we first use $T_1=O(4^dd\log^2(1/\delta))$ quantum uniform samples to do Fourier sampling (see for instance~\cite[Lemma 4]{arunachalam2021two}). This allows us to sample $4^d\log(4^d/\delta)$ times from $(\widehat f(s)^2)_{s\in [n]}$. 
    Given $s$ such that $\widehat f(s)\neq 0$, and thus $|\widehat f(s)|^2\geq 4^{1-d}$ by \cref{fact:granularity}, the probability of not sampling $s$ is at most 
    $$
    \left(1-4^{1-d}\right)^N\leq {\delta}/{ 4^d}.
    $$
    Then, taking a union bound over the at most $4^{d-1}$ non-zero Fourier coefficients (due to \cref{fact:granularity} and $\sum_s|\widehat f(s)|^2=1$), it follows that with probability $1-\delta$ we will have sampled every non-zero Fourier coefficient. In the second part of the algorithm we use $T_2=O(4^d\log(4^d/\delta))$ quantum uniform samples and measure them in the computational basis, which generates \emph{classical} uniform samples. From here, we can argue as in the classical upper bound and learn $f$ exactly. The quantum advantage comes from Fourier sampling, that allows to detect the relevant Fourier coefficients, after what one can learn those few coefficients in a classical way.

    Note that the  quantum advantage comes from Fourier sampling, that allows to detect the relevant Fourier coefficients, after what one can learn those few coefficients in a classical way.
\end{proof}

\begin{namedtheorem}[Fact 7]
    Let $f:\{-1,1\}^n\to\{-1,1\}$ of degree at most $d$. Then, 
    $$
    \Big(\sum_{s\in\{0,1\}^n}|\widehat f(s)|^{\frac{2d}{d+1}}\Big)^{\frac{d+1}{2d}}\leq  2^{\frac{d-1}{d}}.$$
    The equality is witnessed by the address function.  
    
\end{namedtheorem}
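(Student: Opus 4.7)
The plan is to reduce the inequality to a two-line application of Fact~\ref{fact:granularity} (granularity) together with Parseval's identity. The key observation is that the exponent $2d/(d+1)$ is strictly smaller than $2$, so for any nonzero Fourier coefficient we can trade a factor of $|\widehat f(s)|^2$ (to which Parseval applies) for an extra factor $|\widehat f(s)|^{-2/(d+1)}$, and granularity gives a uniform upper bound on this extra factor.

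Concretely, I would write, for any $s$ with $\widehat f(s)\neq 0$,
\[
|\widehat f(s)|^{\frac{2d}{d+1}} \;=\; \widehat f(s)^2 \cdot |\widehat f(s)|^{-\frac{2}{d+1}}.
\]
By Fact~\ref{fact:granularity}, every nonzero $\widehat f(s)$ satisfies $|\widehat f(s)|\geq 2^{1-d}$, so $|\widehat f(s)|^{-2/(d+1)} \leq 2^{2(d-1)/(d+1)}$. Summing over all $s$ (the zero coefficients contribute nothing) and using Parseval $\sum_s \widehat f(s)^2 = \mathbb E[f^2]=1$, I obtain
\[
\sum_{s\in\{0,1\}^n}|\widehat f(s)|^{\frac{2d}{d+1}} \;\leq\; 2^{\frac{2(d-1)}{d+1}}.
\]
Raising to the $(d+1)/(2d)$ power yields the claimed bound $2^{(d-1)/d}$.

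For tightness via the address function, I would expand each selector $g_a$ and show that every nonzero Fourier coefficient of $f$ has magnitude exactly $2^{1-d}$, so that the inequality $|\widehat f(s)|^{-2/(d+1)} \leq 2^{2(d-1)/(d+1)}$ becomes an equality on the support. Using the identity $x_i(1)\cdot\tfrac{1 - a_i x_i(1)x_i(2)}{2} = \tfrac{x_i(1)-a_i x_i(2)}{2}$ (valid since $x_i(1)^2=1$), one gets
\[
g_a(x) \;=\; \prod_{i\in[d-1]}\frac{x_i(1)-a_i x_i(2)}{2},
\]
so each term $g_a(x)y(a)$ expands as a sum of $2^{d-1}$ distinct degree-$d$ monomials with coefficients $\pm 2^{1-d}$, and monomials coming from different $a$'s are disjoint because each carries its own factor $y(a)$. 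Hence $f$ has exactly $4^{d-1}$ nonzero Fourier coefficients, all of magnitude $2^{1-d}$. A direct computation then gives
\[
\Big(\sum_s |\widehat f(s)|^{\frac{2d}{d+1}}\Big)^{\frac{d+1}{2d}}
=\Big(4^{d-1}\cdot 2^{-\frac{2d(d-1)}{d+1}}\Big)^{\frac{d+1}{2d}}
= 2^{\frac{d-1}{d}},
\]
confirming equality (and Parseval is also verified by $4^{d-1}\cdot 4^{1-d}=1$).

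The main (and essentially only) non-trivial step is combining granularity with Parseval via the right exponent manipulation; once that trick is in place both the upper bound and the tightness calculation are mechanical. The only subtlety worth double-checking is the combinatorics of the address function, i.e., that the monomial supports across different $a$'s do not overlap, which follows automatically from the presence of the distinct $y(a)$ factor in each summand.
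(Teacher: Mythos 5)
Your proof is correct and uses the same two core ingredients as the paper (granularity, Fact~\ref{fact:granularity}, and Parseval's identity), but you combine them differently. The paper first observes that granularity and Parseval force the number of nonzero Fourier coefficients to be at most $2^{2(d-1)}$, then applies H\"older's inequality with the counting measure on that support to bound $\sum_s |\widehat f(s)|^{2d/(d+1)}$. You instead use granularity as a pointwise lower bound $|\widehat f(s)|\ge 2^{1-d}$ on nonzero coefficients, factor $|\widehat f(s)|^{2d/(d+1)} = \widehat f(s)^2\cdot|\widehat f(s)|^{-2/(d+1)}$, and bound the second factor uniformly before invoking Parseval on the first. This is more elementary -- no H\"older needed -- and both arguments land on the same intermediate bound $2^{2(d-1)/(d+1)}$ before taking the $(d+1)/(2d)$ power. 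Your verification of tightness for the address function (counting $4^{d-1}$ nonzero coefficients, each of magnitude exactly $2^{1-d}$, with disjoint supports across different $y(a)$ factors) matches the paper's computation and is slightly more careful in noting it is the magnitude, not the sign, that is uniform.
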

\begin{proof}
    Granularity (\cref{fact:granularity}) and  Parseval's identity \cref{eq:ParsevalIdentity}, $\sum_{s}\widehat f^2(s)=\mathbb E_x f(x)^2=1,$  imply that there $f$ has at most $2^{2(d-1)}$ non-zero Fourier coefficients. Hence, using H\"older's inequality it follows that for $p\in [1,2)$
    \begin{align*}
        \sum_{s:\widehat f(s)\neq 0}|\widehat f|^p\cdot 1\leq \left(\sum_{s\in\{0,1\}^n}\widehat{f}^2(s)\right)^{\frac{p}{2}}\left(2^{2(d-1)}\right)^{\frac{2-p}{2}}=2^{(d-1)(2-p)}.
    \end{align*}
    Taking $p=2d/(d+1)$ the claimed inequality follow. 
    
    The equality is witnessed by the \emph{address function} $f:(\{-1,1\})^{n})^d\to \{-1,1\}$ of degree $d$ and $n=2^{d-1}$, which is defined as 
    \begin{equation}
        f(x)=\sum_{a\in \{-1,1\}^{d-1}}\underbrace{\frac{x_1(1)-a_1x_1(2)}{2}\dots \frac{x_{d-1}(1)-a_{d-1}x_{d-1}(2)}{2}}_{g_a(x_1,\dots,x_{d-1})}x_{d}(a),
    \end{equation}
    where we identify $\{-1,1\}^{d-1}$ with $[2^{d-1}]$ in the canonical way. The address function is Boolean because for every $(x_1,\dots,x_{d-1})\in (\{-1,1\}^{n})^{d-1}$ there is only one $a\in \{-1,1\}^{d-1}$ such that $g_a(x_1,\dots,x_{d-1})$ is not $0$, in which case it takes the value $\pm 1$. Given that it has $2^{2(d-1)}$ Fourier coefficients and all of them equal $2^{1-d}$, we have that $$\Big(\sum_{s\in\{0,1\}^n}|\widehat f(s)|^{\frac{2d}{d+1}}\Big)^{\frac{d+1}{2d}}=2^{1-d}2^{2(d-1)\cdot \frac{d+1}{2d}}=2^{\frac{d-1}{d}},$$
    as promised.
\end{proof}

\subsection{Bounded functions.}

In this section our goal will be to prove the following proposition.

\begin{namedtheorem}[Proposition 8]
    Let $p:\{-1,1\}^n\to [-1,1]$ be a degree $d$ polynomial. There is an algorithm that $(\varepsilon,\delta)$-learns $p$ (in $\ell_2$-accuracy) using $
    \exp(\widetilde{O}(\sqrt{d^{3}\log d}+d\log(1/\eps))\cdot \log (1/\delta))$ copies of a block-encoding of $p$.
\end{namedtheorem}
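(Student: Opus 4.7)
The plan is to mimic the structure of the proofs of Theorems~\ref{theo:learnchannels} and~\ref{theo:learnunitaries}, but with block-encoding primitives replacing the direct queries to a channel or unitary, and with the Defant--Mastyło--Pérez inequality (Theorem~\ref{theo:DMP}) playing the role of the non-commutative $\BH$ inequalities. Concretely, from a block-encoding $U_p$ of $\mathrm{diag}(p)$ one extracts two useful primitives by conjugating with $H^{\otimes n}$: a \emph{Fourier sampling} routine that, conditioned on the block-encoding ancilla accepting, returns $s\in\{0,1\}^n$ with probability $\widehat p(s)^2/\|p\|_2^2$ (since $(H^{\otimes n}\mathrm{diag}(p)H^{\otimes n})_{s,0^n}=\widehat p(s)$); and a \emph{Fourier coefficient estimation} routine (a Hadamard test on $H^{\otimes n}U_pH^{\otimes n}$) that estimates a prescribed $\widehat p(s)$ to additive error $\eta$ with failure probability $\delta'$ using $O(\eta^{-2}\log(1/\delta'))$ block-encoding calls.

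With these primitives in hand, I would run the identify-then-estimate template of Algorithm~\ref{algo:low_deg_channels}. First, estimate $\|p\|_2^2$ (via the acceptance probability of the block-encoding ancilla on the uniform superposition); if it is below $\varepsilon^2/2$, output $\widetilde p\equiv 0$ and halt. Otherwise, set a threshold $c>0$ (to be chosen below), draw $T_1=\widetilde O(c^{-4})$ Fourier samples to build an empirical distribution, and form the heavy set $\mathcal X_c=\{s:|\widehat p'(s)|\geq c\}$. Arguing as in the proofs of Theorems~\ref{theo:learnchannels} and~\ref{theo:learnunitaries}, with probability $1-\delta$ one has $|\mathcal X_c|\leq c^{-2}$ and $|\widehat p(s)|\leq 2c$ for every $s\notin\mathcal X_c$. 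Next, for each $s\in\mathcal X_c$ I would apply Fourier coefficient estimation with precision $c\varepsilon$ to obtain $\widetilde{\widehat p}(s)$, and output $\widetilde p=\sum_{s\in\mathcal X_c}\widetilde{\widehat p}(s)\chi_s$.

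The error analysis splits by Parseval into the retained and discarded contributions:
\[
\|p-\widetilde p\|_2^2=\sum_{s\in\mathcal X_c}|\widehat p(s)-\widetilde{\widehat p}(s)|^2+\sum_{s\notin\mathcal X_c}|\widehat p(s)|^2\leq |\mathcal X_c|\cdot(c\varepsilon)^2+(2c)^{\frac{2}{d+1}}\,\|\widehat p\|_{\frac{2d}{d+1}}^{\frac{2d}{d+1}}.
\]
The second term is bounded by exactly the same H\"older splitting $2=\tfrac{2}{d+1}+\tfrac{2d}{d+1}$ used in the proof of Theorem~\ref{theo:learnchannels}, and Theorem~\ref{theo:DMP} yields $\|\widehat p\|_{2d/(d+1)}^{2d/(d+1)}\leq C^{2d\sqrt{d\log d}/(d+1)}$. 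Choosing $c=\varepsilon^{d+1}\cdot C^{-d\sqrt{d\log d}}$ then forces both terms to be at most $\varepsilon^2/2$, and the total number of block-encoding queries is $\widetilde O(c^{-4}\varepsilon^{-2})\cdot\log(1/\delta)=\exp\bigl(\widetilde O(\sqrt{d^3}+d\log(1/\varepsilon))\bigr)\cdot\log(1/\delta)$, as claimed.

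The step I expect to be most delicate is the treatment of the block-encoding ancilla in the Fourier sampling primitive: because acceptance occurs only with probability $\|p\|_2^2$, the empirical distribution one actually obtains is $\widehat p(s)^2/\|p\|_2^2$, not $\widehat p(s)^2$. One has to rescale empirical frequencies by a separately estimated $\|p\|_2^2$ before thresholding against $c^2$, handle the degenerate regime $\|p\|_2^2\ll\varepsilon^2$ by returning $\widetilde p\equiv 0$, and check that the overhead from (optional) amplitude amplification in the heavy regime is only $\poly(1/\varepsilon)$, which is absorbed into the $\exp(\widetilde O(d\log(1/\varepsilon)))$ factor. With this bookkeeping, the remainder of the argument is a clean quantum adaptation of the Eskenazis--Ivanisvili identify-then-estimate strategy, now driven by the sub-exponential $\BH$ constant for bounded polynomials rather than the polynomial constant available for multilinear tensors.
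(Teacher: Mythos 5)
Your proposal is correct and follows the same identify-then-estimate template driven by the Defant--Masty\l{}o--P\'erez $\BH$ inequality (Theorem~\ref{theo:DMP}), but the quantum primitives you use differ from the paper's in a way worth recording. For the identification phase, the paper does not conjugate by $H^{\otimes n}$ and postselect; instead it fixes a concrete unitary block-encoding $U_p$ (Eq.~\eqref{eq:blockenc}), prepares the Choi--Jamio\l{}kowski state $\ket{v(U_p)}$, and measures in the Bell basis $\{\ket{v(\sigma_x)}\}$. Because $\widehat{U_p}(x)=\widehat p(s_x)$ on the strings $x\in\{3\}\times\{0,3\}^n$ (by Eq.~\eqref{eq:blockencpau}), this already yields $s$ with probability exactly $\widehat p(s)^2$, sidestepping the normalization by $\|p\|_2^2$ that your postselected Fourier sampling introduces and that you correctly flag as the delicate point. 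The paper also does not build an empirical distribution and threshold it: it simply records the set $\mathcal T_1$ of strings seen at least once in $T_1 = O(a^{-2}\log(1/\delta a^2))$ samples and argues that every $s$ with $|\widehat p(s)|\ge a$ appears with high probability, which is a factor-of-$a^{-2}$ cheaper in the first phase than your $\widetilde O(c^{-4})$ estimate-then-threshold (though $T_2$ dominates in both, so the final bound is the same form). For the estimation phase, the paper uses \emph{classical} uniform random examples $(x,p(x))$ (obtainable as a membership query from the block-encoding as in the footnote on membership queries) and the usual empirical Fourier-coefficient estimator, whereas you propose a Hadamard test on $H^{\otimes n}U_pH^{\otimes n}$; both cost $O(\eta^{-2}\log(1/\delta'))$ per coefficient. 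Your choice $c=\varepsilon^{d+1}C^{-d^{3/2}\sqrt{\log d}}$ matches the paper's $a$ (modulo a sign typo in the paper), and your final error decomposition via Parseval and the H\"older split $2=\tfrac{2}{d+1}+\tfrac{2d}{d+1}$ is exactly theirs, so you land on the same $\exp\bigl(\widetilde O(d^{3/2}\sqrt{\log d}+d\log(1/\varepsilon))\bigr)\cdot\log(1/\delta)$ bound, with slightly larger constants in the exponent. Your route is a bit more agnostic about which block-encoding is supplied, at the cost of the postselection bookkeeping; the paper's route bakes the normalization away by the choice of $U_p$ and reuses the classical Eskenazis--Ivanisvili machinery for estimation.
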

In comparison to the Boolean function scenario, for bounded multilinear polynomials $p:\{-1,1\}^n\to [-1,1]$ things get more complicated because \cref{fact:granularity} no longer holds and uniform random samples do not make sense. However, we can circumvent both issues. The first, because, as noted by Eskenazis and Ivanisvili, it is already useful to have an upper bound for $\norm{\widehat p}_q$ for a $q<2$~\cite{eskenazis2022learning}, which is exactly what the Bohnenblust-Hille inequality of Defant et al. (\cref{theo:DMP}). The second issue can be solved by accessing bounded functions $p:
\{-1,1\}^n\to [-1,1]$ through block-encodings in $2^{n+1}$-dimensional unitary $U_p$. We consider the block-encodings defined by
\begin{equation}
\label{eq:blockenc}
    U_p=\begin{pmatrix} \Diag(p) & -\Diag((1-p^2)i)\\ \Diag((1-p^2)i)& -\Diag(p)\end{pmatrix},
\end{equation}
where for a function $g:\{-1,1\}^n\to \mathbb C$ we define $\Diag g$ as the $2^n\times 2^n$ diagonal matrix whose diagonal is given by the values of $g$. In terms of the Pauli matrices and the Fourier spectrum of $p$, $U_p$ can be written as 
\begin{equation}
\label{eq:blockencpau}
    U_p= \sigma_3\otimes \left[\sum_{s\in \{0,1\}}\widehat p(S) \otimes_{i\in \supp (s)}(\sigma_3)_i\right]+\sigma_2\otimes \left[\sum_{s\in\{0,1\}^n}\widehat{1-p^2}(s) \otimes_{i\in \supp(s)}(\sigma_3)_i\right],
\end{equation}
where $\otimes_{i\in \supp (s)}(\sigma_3)_i$ stands for the Pauli string with $\sigma_3$ indicated by $s$ and identities on the rest.  Now, we are ready to present our quantum learning algorithm for bounded functions, which combines Fourier sampling techniques and the learning ideas of Eskenazis and Ivanisvili~\cite{eskenazis2022learning}.

\begin{proof}[ of \cref{prop:learnBlockEncoding}]
    Let $a,b>0$ to be fixed later. In the first of the algorithm we prepare $T_1=(1/a)^{2}\log(2/\delta a^2)$ copies of the  $\CJ$ state of $U_p$ and measure them in the basis $\{\ket{v(\sigma_x)}\}$.
    By \cref{eq:CJBellExpansion}, this is the same as sampling $T_1$ times from $(|\widehat U_p(x)|^2)_{x\in\{0,1,2,3\}^{n+1}}$. Now note that $$\widehat U_p(x)=\widehat p( s_x)$$
    if $ x\in \{3\}\times \{0,3\}^n$, where $s_x\in \{0,1\}^n$ is defined as $(s_x)_i:=\delta_{x_{i+1},3}$. Let $\mathcal{S}_a=\{ s\in \01^n: |\widehat p( s)|\geq a\}$. The probability of not measuring an $s\in \mathcal{S}_a$ in $T_1$ samples is at most $$(1-a^2)^{T_1}\leq a^2 \delta.$$ Thus, as $|\mathcal{S}_a|\leq a^{-2}$ (because $\sum_{s}|\widehat p(s)|^2= \norm{p}_2^2\leq 1$), it follows that after $T_1$ samples of $(|\widehat U_p(x)|^2)_x$ with probability$\geq 1-\delta$ we will have sampled at least once every element of $\mathcal{S}_a$. In other words, if $\mathcal{T}_1$ are the $s\in \{0,1\}^n$ such that correspond to an $x\in\{0,1,2,3\}^{n+1}$ that has been sampled at least once, then $\mathcal{S}_a\subseteq \mathcal{T}_1$ with high probability. 

    In the second part of the algorithm we use $$T_2=O((1/b)^2\log(2/(\delta T_1)))$$ uniform random samples to learn with error $b$ and probability $\geq 1-\delta$ the at most $T_1$ Fourier coefficients sampled in the first part of the algorithm. This can be done by estimating the empirical Fourier coefficients, as in the proof of \cref{fact:learnBool}. We denote these estimations by $\widehat p'(s).$ Let $h_{a,b}=\sum_{s\in\mathcal{T}_1}\widehat p'(s)\chi_s.$ We claim that $h_{a,b}$ is a good approximation of $p$. Indeed, 
    \begin{align*}
        \norm{p-h_{a,b}}_2^2&= \sum_{s\in \mathcal{T}_1}|\widehat p(s)-\widehat p'(s)|^2+\sum_{s\notin \mathcal{T}_1}|\widehat p(s)|^2\\
        &\leq b^2T_1+\sum_{s\notin \mathcal{T}_1}|\widehat p(s)|^{\frac{2d}{d+1}}|\widehat p(s)|^{\frac{2}{d+1}}\\
        &\leq b^2\frac{1}{a^2}\log\left(\frac{2}{\delta a^2}\right)+a^{\frac{2}{d+1}}\norm{\widehat p}_{\frac{2d}{d+1}}^{\frac{2d}{d+1}}\\
        &\leq b^2\frac{1}{a^2}\log\left(\frac{2}{\delta a^2}\right)+a^{\frac{2}{d+1}}C^{\sqrt{d\log d}},
    \end{align*}
    where in the equality we have used Parseval's identity; in the first inequality we have used the learning guarantee of the second part; in the second inequality that $S_a\subseteq\mathcal{T}_1$; and in the third inequality the $\BH$ inequality for multilinear polynomials (\cref{theo:DMP}). 
    Hence, taking $b^2=\eps^2a^2(\log(2/\delta a^2))^{-1}$ and $a=\eps^{d+1}C^{d^{3/2}(\log (d))^{1/2}}$ it follows that $\norm{p-h_{a,b}}\leq \eps^2,$ as desired. This way, $T_2>T_1$ and the sample complexity is upper bounded by $$T_2=\widetilde O((1/b^2)\log(1/\delta))=\widetilde O((1/a^2\eps^2)\log(1/\delta))=\widetilde O((1/\eps)^{2d+4}C^{d^{3/2}(\log (d))^{1/2}}\log(1/\delta)).$$

    Regarding the time complexity dependence on $n$, the algorithm runs in $\poly(n)$ up to $d=(\log n)^{\eta}$ with $\eta<2/3$, because generating a $\CJ$ state requires $O(n)$ elementary gates (we generate $T_1$ $\CJ$ states); computing a empirical Fourier coefficient requires $\poly(n)$ time (we compute $T_1$ empirical Fourier coefficients); and $T_1=\widetilde O(\eps^{2d+2}C^{ d^{3/2}(\log(d))^{1/2} })$, so $T_1=O(\poly (n))$ even for $d=(\log n)^{\eta}.$
\end{proof}

\section{Learning quantum query algorithms}
In this section our goal will be to prove the \cref{theo:learnqpolynomials}, restated below for the readers convenience.
\learnqpolynomials*
\begin{proof}
    Eskenazis and Ivanishvili showed that a function $f:\{-1,1\}^n\to [-1,1]$ with degree at most $d$ and $\norm{\widehat f}_{\frac{2d}{d+1}}\leq C$, can be learned with success probablity $1-\delta$ and error $\eps$ in $\ell_2^2$ accuracy using 
    \begin{equation}\label{eq:EskenazisLearn}
        O(\eps^{-(d+1)}C^{2d}\log(n/\delta)) 
    \end{equation}
    samples $(x,f(x))$, where $x$ is drawn uniformly at random from $x\in\{-1,1\}^n$.     The first author, Bri\"et and Palazuelos~\cite{QQA=CBF} showed that the amplitudes of quantum algorithms that $d$ queries as in \cref{fig:QQA} are completely bounded $d$-tensors. Hence, using \cref{theo:BHGf} (which we prove below) one can use \cref{eq:EskenazisLearn} with $C=1$, from which the stated result follows.
\end{proof}

\paragraph{The constant of the completely bounded $\BH$ inequality is 1.} In this section we determine that the exact value of the constant of the completely bounded $\BH$ inequality is 1.
\begin{theorem}\label{theo:BHGf}
    Let $\mathbb K\in\{\mathbb R,\mathbb C\}$. Let $T:(\mathbb K^n)^{d}\to\mathbb K$ be a $d$-linear form. Then, 
    $$\norm{\widehat T}_{\frac{2d}{d+1}}\leq \norm{T}_{\cb},$$ and the equality can be witnessed. 
\end{theorem}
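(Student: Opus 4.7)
The plan is to prove the inequality $\|\widehat T\|_{2d/(d+1)} \leq \|T\|_{\mathrm{cb}}$ by induction on $d$, exploiting the characterization of completely bounded $d$-tensors as amplitudes of $d$-query quantum algorithms (Arunachalam-Briët-Palazuelos,~\cite{QQA=CBF}), and then to verify equality by exhibiting a rank-1 tensor. Assuming $\|T\|_{\mathrm{cb}} \leq 1$, we may write $\widehat T_{\mathbf i} = \langle v | U_d P_{i_d} U_{d-1} \cdots U_1 P_{i_1} U_0 | u\rangle$, where $U_0, \ldots, U_d$ are unitaries on $\mathbb{C}^n \otimes H$, $|u\rangle, |v\rangle$ are unit vectors, and $P_i = |i\rangle\langle i| \otimes I$.

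For the base case $d = 1$, the linear form $T(x) = \sum_i \widehat T_i x_i$ satisfies $\|T\|_{\mathrm{cb}} = \sum_i |\widehat T_i|$ (by taking the scalar input $X(i) = \overline{\mathrm{sgn}(\widehat T_i)}$ in the cb-supremum), which equals $\|\widehat T\|_1 = \|\widehat T\|_{2d/(d+1)}$; equality holds trivially. For the inductive step, I fix $i_1$ and write $P_{i_1} U_0 |u\rangle = |i_1\rangle \otimes |\phi(i_1)\rangle$, noting that $\sum_{i_1} \|\phi(i_1)\|^2 = 1$ by Parseval. The slice $\widehat T^{(i_1)}_{\mathbf i_{>1}} := \widehat T_{i_1, \mathbf i_{>1}}$ is itself a cb $(d-1)$-tensor whose cb-norm is at most $\|\phi(i_1)\|$ — it is the amplitude of the residual $(d-1)$-query algorithm on the projected state — so the inductive hypothesis gives $\|\widehat T^{(i_1)}\|_{2(d-1)/d} \leq \|\phi(i_1)\|$.

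I then combine the slice bounds via a Blei--Bennett-type mixed-norm interpolation, which expresses $\|\widehat T\|_{2d/(d+1)}$ as a geometric mean of axial mixed $\ell_1(\ell_2)$ norms. Running the slice reduction along all $d$ axes simultaneously, each axial mixed norm is controlled by the Parseval identities $\sum_{i_k}\|\phi_k(i_k)\|^2 = 1$ arising from unitarity of the $U_k$, yielding $\|\widehat T\|_{2d/(d+1)} \leq 1$. For equality, the rank-1 tensor $T(\mathbf x) = x_1(1) x_2(1) \cdots x_d(1)$ has its sole nonzero Fourier coefficient $\widehat T_{(1,\ldots,1)} = 1$, hence $\|\widehat T\|_{2d/(d+1)} = 1$; choosing scalar inputs $X_k(1) = 1$ and $X_k(j) = 0$ for $j > 1$ in the cb-supremum gives $\|T\|_{\mathrm{cb}} = 1$, so the inequality is sharp.

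The main obstacle is the final interpolation step: a naive induction combining $\|\widehat T^{(i_1)}\|_{2(d-1)/d} \leq \|\phi(i_1)\|$ with a single sum over $i_1$ produces a spurious factor of $n^{1/(d+1)}$, since the exponent $d/(d+1) < 1$ makes the power mean inequality run the wrong way (i.e., $\sum_{i_1}\|\phi(i_1)\|^2 = 1$ does \emph{not} imply $\sum_{i_1}\|\phi(i_1)\|^{2d/(d+1)} \leq 1$). Removing this loss and achieving constant exactly $1$ requires chaining the slice decomposition across all $d$ coordinates simultaneously within a Blei-type mixed-norm inequality, so that the Parseval identities along every axis contribute with unit constant — effectively replacing the $\sqrt 2^{d-1}$-type loss of the classical BH proof (where Khintchine's inequality is used per step) with an exact $1^d = 1$, made possible precisely because the cb-structure inherited from unitarity gives Parseval rather than Khintchine.
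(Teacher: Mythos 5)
Your proposal correctly identifies Blei's inequality as the combinatorial engine and correctly identifies equality witnesses (rank-1 tensors), and the $d=1$ base case is right. But the central step — showing that each axial mixed $\ell_1(\ell_2)$ norm $\sum_{i_s}\bigl(\sum_{\mathbf i_{\neq s}}|\widehat T_{\mathbf i}|^2\bigr)^{1/2}$ is at most $\|T\|_{\mathrm{cb}}$ — is not actually established by your argument, and you concede as much in your final paragraph. Slicing the amplitude representation $\widehat T_{\mathbf i}=\langle v|U_d P_{i_d}\cdots U_1 P_{i_1}U_0|u\rangle$ along $i_1$ gives, by Parseval from unitarity, $\|\widehat T^{(i_1)}\|_2\leq\|\phi(i_1)\|$ and $\sum_{i_1}\|\phi(i_1)\|^2=1$. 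These combine to $\sum_{i_1}\|\widehat T^{(i_1)}\|_2^2\leq 1$, which is the $\ell_2(\ell_2)$ bound — i.e.\ just $\|\widehat T\|_2\leq 1$, i.e.\ Parseval again — not the $\ell_1(\ell_2)$ axial norm that Blei requires. The gap between $\ell_2(\ell_2)$ and $\ell_1(\ell_2)$ is a full factor of $\sqrt n$ via Cauchy–Schwarz, and "running the slice reduction along all $d$ axes simultaneously" does not close it: each axis independently gives only $\ell_2$ control via unitarity, and the geometric mean of $d$ quantities each bounded by $\sqrt n$ is still $\sqrt n$. You have named the problem (the spurious $n$-dependent factor) but not solved it.

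The paper's resolution is genuinely different and does not pass through the quantum-algorithm characterization of cb-tensors at all. Instead (Lemma 23, i.e.\ the key Lemma~\ref{lem:lowercb}) it proves the axial $\ell_1(\ell_2)$ lower bound directly from the \emph{definition} of the cb-norm: for each axis $s$, one hand-crafts a family of contraction matrices $X(i)$ on a finite-dimensional space — acting as creation operators on the first $d-s$ tensor slots, a carefully $\ell_2$-normalized "pivot" on slot $s$, and annihilation operators on the last $s-1$ slots — so that plugging them into $\sup\|\sum_{\mathbf i}\widehat T_{\mathbf i}X(i_1)\cdots X(i_d)\|_{\op}$ yields exactly $\sum_{i_s}\bigl(\sum_{\mathbf i_{\neq s}}|\widehat T_{\mathbf i}|^2\bigr)^{1/2}$. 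Verifying these $X(i)$ are contractions is a direct Cauchy–Schwarz computation. That explicit Varopoulos-style construction is the content of the theorem, and it is precisely what your inductive/Parseval route cannot produce: the normalization in the pivot operator redistributes the weights so as to convert the per-slice $\ell_2$ control into an $\ell_1$ sum, something unitarity alone does not do.
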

\cref{theo:BHGf} establishes that the best constant for the completely bounded $\BH$ inequality is exactly $1$. This sharply contrasts with the current knowledge about the Grothendieck constants, and specially with the $\BH$ constants, where only poly($d$) upper bounds are known. This way, we \emph{close} one of the edges of comparison of the three norms that appear in Grothendieck and Bohnenblust-Hille inequalities (see  \cref{fig:normcomparisons}).

\begin{figure}[ht]
    \centering
    \hfill 
    \begin{minipage}{0.45\textwidth}
        \begin{tikzpicture}
    \node (top) at (0.1, 2) {$\lVert B \rVert_{\infty}$};
    \node (bottomLeft) at (-1.5, 0) {$\lVert B\rVert_{\frac{2d}{d+1}}$};
    \node (bottomRight) at (1.8, 0) {$\lVert B \rVert_{\cb}$};

    \node[rotate=45] (top) at (-1, 1) {$\leq  \sqrt{2}$};
    \node[rotate=-45] (top) at (1, 1) {$G_{\mathbb R}\geq $};
    \node (top) at (0,0) {$\leq 1$};
\end{tikzpicture}
    \end{minipage}
    \hfill
    \begin{minipage}{0.45\textwidth}
            \begin{tikzpicture}
        \node (top) at (0.1, 2) {$\lVert T \rVert_{\infty}$};
        \node (bottomLeft) at (-1.5, 0) {$\lVert T \rVert_{\frac{2d}{d+1}}$};
        \node (bottomRight) at (1.8, 0) {$\lVert T \rVert_{\cb}$};
    
        \node[rotate=45] (top) at (-1, 1.1) {$\leq \poly (d)$};
        \node[rotate=-45] (top) at (1, 1) {$C(d)\not\geq $};
        \node (top) at (0,0) {$\leq 1$};
    \end{tikzpicture}
    \end{minipage}
    \caption{Triangles of norm comparisons. In the left triangle we display the norm comparisons implied by the Littlewood~\cite{littlewood1930bounded} and Grothendieck inequalities~\cite{grothendieck1953resume} and our \cref{theo:BHGf} for real bilinear maps. In the right triangle we depict the best upper bound for the $\BH$ constant~\cite{bayart2014bohr}, the no extension of the Grothendieck inequality~\cite{perez2008unbounded}, and our \cref{theo:BHGf} for $d$-linear~tensors.}\label{fig:normcomparisons}
\end{figure}
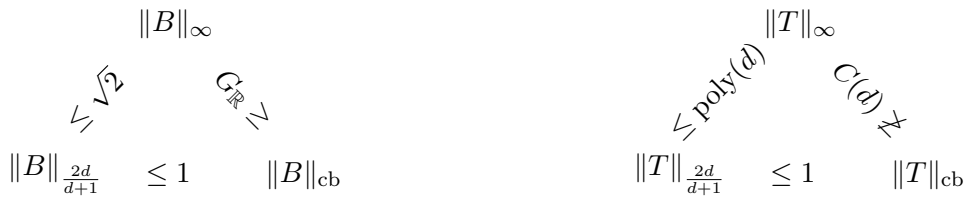
The main ingredient of the proof of \cref{theo:BHGf} is a general lower bound for the completely bounded norm, \cref{lem:lowercb}. This technique is inspired on an idea of Varoupoulos to rule out a generalization of von Neumann's inequality~\cite{Varopoulos:1974} and was already used by the third author to show a particular case of the famous Aaronson and Ambainis conjecture of quantum query complexity~\cite{gutierrez2023influences}. Combining \cref{lem:lowercb} with Blei's inequality (\cref{lem:Bleisinequality}), \cref{theo:BHGf} follows. For a proof of Blei's inequality see~\cite[Theorem 2.1]{bayart2014bohr}.

\begin{lemma}[Blei's inequality] \label{lem:Bleisinequality}
    Let $\mathbb K\in\{\mathbb R,\mathbb C\}$ and let $\widehat T\in(\mathbb K)^{n^d}$. Then, 
    $$\left(\prod_{s\in[d]}\sum_{i_s\in [n]}\sqrt{\sum_{i_1,\dots,i_{s-1},i_{s+1},\dots,i_d\in [n]}|\widehat T_\ind{i}|^2}\right)^{\frac{1}{d}}\geq \norm{\widehat T}_{\frac{2d}{d+1}}.$$
\end{lemma}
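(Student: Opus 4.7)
The plan is to prove Blei's inequality by induction on the tensor order $d$, combining Hölder's inequality with the generalized Minkowski inequality. After raising both sides of the claimed inequality to the $\tfrac{2d}{d+1}$ power, the target becomes the equivalent statement
\begin{equation*}
    \sum_{\ind i \in [n]^d} |\widehat T_{\ind i}|^{\frac{2d}{d+1}} \;\leq\; \prod_{s=1}^{d}\Bigl(\sum_{i_s}\bigl(\sum_{\hat{\ind i}_s} |\widehat T_{\ind i}|^2\bigr)^{1/2}\Bigr)^{\frac{2}{d+1}},
\end{equation*}
which I will abbreviate as $\sum|\widehat T|^{2d/(d+1)} \leq \prod_s A_s^{2/(d+1)}$, where $A_s = \sum_{i_s} B_s(i_s)$ and $B_s(i_s) = (\sum_{\hat{\ind i}_s}|\widehat T_{\ind i}|^2)^{1/2}$. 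The base case $d=1$ is immediate, since both sides reduce to $\sum_{i_1}|\widehat T_{i_1}|$.

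For the inductive step, I peel off the last coordinate $i_d$ using the factorization $|\widehat T|^{2d/(d+1)} = |\widehat T|^{2/(d+1)} \cdot |\widehat T|^{2(d-1)/(d+1)}$. Applying Hölder's inequality in $i_d$ with dual exponents $(d+1,(d+1)/d)$ gives, for each fixed $\ind i' = (i_1,\ldots,i_{d-1})$,
\begin{equation*}
    \sum_{i_d} |\widehat T_{\ind i',i_d}|^{\frac{2d}{d+1}} \;\leq\; \Bigl(\sum_{i_d}|\widehat T_{\ind i',i_d}|^{2}\Bigr)^{\frac{1}{d+1}} \Bigl(\sum_{i_d}|\widehat T_{\ind i',i_d}|^{\frac{2(d-1)}{d}}\Bigr)^{\frac{d}{d+1}}.
\end{equation*}
Crucially, the exponent $\tfrac{2(d-1)}{d}$ in the second bracket is exactly the Blei exponent for a $(d-1)$-tensor, so the slice tensors $T^{i_d}_{\ind i'} := \widehat T_{\ind i',i_d}$ are tailor-made for the inductive hypothesis. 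Summing over $\ind i'$ and applying the inductive hypothesis slice-wise to each $T^{i_d}$ yields an expression in which the $d-1$ slice-Blei factors $A_s^{T^{i_d}}$ appear, wrapped by an $\ell^{2(d-1)/d}_{i_d}$-type norm arising from the outer $d/(d+1)$ exponent.

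The main obstacle, and the technical heart of the argument, is converting these slice-Blei factors into the full Blei factors $A_s$ for the original $d$-tensor. By definition, $B_s(i_s)^2 = \sum_{i_d} B_s^{T^{i_d}}(i_s)^2$, so the full factor $A_s$ is an $\ell^1_{i_s}$-norm of an $\ell^2_{i_d}$-norm of the quantity $B_s^{T^{i_d}}(i_s)$. The inductive output, in contrast, produces a nested norm in the opposite order. Bridging these requires the generalized Minkowski inequality together with careful bookkeeping of the exponents: in particular, the first bracket above contributes $\|\widehat T\|_2^{2/(d+1)}$, which is bounded by $A_d^{2/(d+1)}$ via the elementary estimate $\|v\|_2\leq\|v\|_1$ applied to the vector $(\|T^{i_d}\|_2)_{i_d}$. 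The delicate part is the interaction of the outer Hölder in $\ind i'$ with the Minkowski exchange: one must choose the Hölder exponents so that all residuals telescope symmetrically, leaving exactly $\prod_{s=1}^d A_s^{2/(d+1)}$ and no extraneous $n$-dependent factors. A possible alternative is a non-inductive direct approach applying generalized Hölder to the decomposition $|\widehat T|^{2d/(d+1)} = \prod_s (|\widehat T|^2)^{1/(d+1)}$ with exponents adapted to each coordinate, but this route requires an equally delicate choice to avoid losing the Minkowski gain.
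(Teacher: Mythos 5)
Your inductive scheme has a genuine gap at the point where you try to close the induction. After the Hölder step in $i_d$ and a second Hölder in $\ind i'$ with exponents $(d+1,\tfrac{d+1}{d})$, followed by the inductive hypothesis slice-by-slice, you arrive at the bound
\begin{equation*}
\sum_{\ind i}|\widehat T_{\ind i}|^{\frac{2d}{d+1}} \;\leq\; \norm{\widehat T}_2^{\frac{2}{d+1}}\cdot\Big(\sum_{i_d}\prod_{s=1}^{d-1}\big(A_s^{T^{i_d}}\big)^{2/d}\Big)^{\frac{d}{d+1}},
\end{equation*}
and you propose to finish by bounding $\norm{\widehat T}_2\leq A_d$ and the second factor by $\prod_{s<d}A_s^{2/(d+1)}$ via Minkowski. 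But the latter bound, equivalent to $\sum_{i_d}\prod_{s<d}(A_s^{T^{i_d}})^{2/d}\leq\prod_{s<d}A_s^{2/d}$, is \emph{false}: already for $d=2$, $n=2$ and $\widehat T$ the all-ones matrix the left side equals $4$ while $A_1=2\sqrt{2}<4$. (And, as you can check against the exponent constraint, any valid choice of Hölder exponents for the sum over $i_d$ would require each individual exponent to be at least $d$, which is incompatible with the reciprocals summing to $1$.) The two factors in the display happen to compensate each other in this example, so your chain gives a correct numerical upper bound there, but the two-stage ``bound each factor separately'' strategy does not go through, and the needed joint inequality is not addressed by the Minkowski exchange you describe.

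The paper proves the lemma by citing~\cite[Theorem 2.1]{bayart2014bohr}; the standard argument is a single \emph{weighted} generalized Hölder with $d+1$ factors, not an induction. Writing $\alpha_s(i_s)=B_s(i_s)=\big(\sum_{\hat i_s}|\widehat T_{\ind i}|^2\big)^{1/2}$ (discarding indices with $\alpha_s(i_s)=0$), one factorizes
\begin{equation*}
|\widehat T_{\ind i}|^{\frac{2d}{d+1}} \;=\; \prod_{s=1}^d\Big(\frac{|\widehat T_{\ind i}|^2}{\alpha_s(i_s)}\Big)^{\frac{1}{d+1}}\cdot\Big(\prod_{s=1}^d\alpha_s(i_s)\Big)^{\frac{1}{d+1}},
\end{equation*}
and applies Hölder with all $d+1$ exponents equal to $d+1$. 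Each of the first $d$ factors then sums to $\sum_{\ind i}|\widehat T_{\ind i}|^2/\alpha_s(i_s)=\sum_{i_s}\alpha_s(i_s)=A_s$, and the last factor sums to $\prod_s A_s$, yielding $\sum_{\ind i}|\widehat T_{\ind i}|^{2d/(d+1)}\leq\prod_s A_s^{2/(d+1)}$, which is the claim. Your ``alternative'' direct decomposition $|\widehat T|^{2d/(d+1)}=\prod_s(|\widehat T|^2)^{1/(d+1)}$ has only $d$ factors and no weights, so the plain Hölder step from it only produces powers of $\norm{\widehat T}_2$, not the mixed norms $A_s$; the weights $\alpha_s(i_s)$ and the $(d+1)$-st factor are exactly what makes the argument work.
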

\begin{lemma}\label{lem:lowercb}
    Let $\mathbb K\in\{\mathbb R,\mathbb C\}$, let $T:(\mathbb K^n)^{d}\to \mathbb K$ be a $d$-linear form, and let $s\in [d]$. Then, 
    $$\norm{T}_{\cb}\geq \sum_{i_s\in [n]}\sqrt{\sum_{i_1,\dots,i_{s-1},i_{s+1},\dots,i_d\in [n]}|\widehat T_\ind{i}|^2}.$$
\end{lemma}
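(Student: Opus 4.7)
The plan is to exhibit, for each $s\in [d]$, explicit matrices $X_1(i_1),\ldots,X_d(i_d)$ of operator norm at most $1$ on a common Hilbert space such that $\norm{\sum_{\ind{i}}\widehat{T}_{\ind{i}} X_1(i_1)\cdots X_d(i_d)}_{\op}=\sum_{i_s} c_{i_s}$, where $c_{i_s}:=\big(\sum_{\ind{i}_{-s}}|\widehat T_{\ind{i}}|^2\big)^{1/2}$ and $\ind{i}_{-s}$ denotes the multi-index obtained from $\ind{i}$ by dropping coordinate $s$. By the definition of $\norm{T}_{\cb}$ in \eqref{eq:Tcb}, this yields the desired lower bound.

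The construction uses a graded direct-sum Hilbert space $\mathcal H=\bigoplus_{r=1}^{d+1}\mathcal H_r$, with the two ends $\mathcal H_1=\mathcal H_{d+1}=\mathbb K$, sizes growing $\mathcal H_r=\mathbb K^{n^{r-1}}$ for $2\leq r\leq s$, and shrinking $\mathcal H_r=\mathbb K^{n^{d-r+1}}$ for $s<r\leq d$. Each $X_r(i_r)$ is supported on the block $\mathcal H_{r+1}\to\mathcal H_r$ and vanishes elsewhere, so its operator norm on $\mathcal H$ equals that of the non-trivial block. For $r<s$, let $X_r(i_r)$ be the \emph{peeler} $e_{(\ind{k},k)}\mapsto \delta_{k,i_r}\,e_{\ind{k}}$; for $r>s$, let $X_r(i_r)$ be the \emph{builder} $e_{\ind{k}}\mapsto e_{\ind{k}}\otimes e_{i_r}$. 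Both are partial isometries of operator norm $1$. For $r=s$, set $X_s(i_s)=A^{(i_s)}$, the $n^{s-1}\times n^{d-s}$ matrix with entries $A^{(i_s)}_{\ind{q},\ind{j}}:=\overline{\widehat T_{(\ind{q},i_s,\ind{j}^{\mathrm{rev}})}}/c_{i_s}$ when $c_{i_s}>0$ (and the zero matrix otherwise). The crucial observation is that $\norm{A^{(i_s)}}_F=1$ by construction, and since the operator norm of a matrix is always bounded by its Frobenius norm, $\norm{A^{(i_s)}}_{\op}\leq 1$.

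A direct computation shows that $X_1\cdots X_{s-1}$ restricts to the evaluation functional $\langle e_{(i_1,\ldots,i_{s-1})}|$ on $\mathcal H_s$, and that $X_{s+1}\cdots X_d$ sends $1\in\mathcal H_{d+1}$ to $e_{(i_d,\ldots,i_{s+1})}\in\mathcal H_{s+1}$; hence the full product $X_1(i_1)\cdots X_d(i_d):\mathbb K\to \mathbb K$ is multiplication by the single matrix entry $A^{(i_s)}_{(i_1,\ldots,i_{s-1}),(i_d,\ldots,i_{s+1})}=\overline{\widehat T_{\ind{i}}}/c_{i_s}$. Summing, $\sum_{\ind{i}}\widehat T_{\ind{i}}\cdot \overline{\widehat T_{\ind{i}}}/c_{i_s}=\sum_{i_s} c_{i_s}$, which is the claimed lower bound. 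The main obstacle is choosing $X_s(i_s)$ correctly: the naive rank-one ansatz only recovers $\sum_{i_s}\norm{M^{(i_s)}}_{\op}$ (where $M^{(i_s)}$ is the $i_s$-slice of $\widehat T$), which in general is strictly smaller than $\sum_{i_s}\norm{M^{(i_s)}}_F=\sum_{i_s}c_{i_s}$. The key trick is to allow $X_s(i_s)$ to be higher rank and control its operator norm via the elementary bound $\norm{\cdot}_{\op}\leq \norm{\cdot}_F$, which lets the full Frobenius mass of each slice of $\widehat T$ be transferred into the product.
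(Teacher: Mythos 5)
Your proof is correct and takes essentially the same approach as the paper: a graded Hilbert space on which the matrices act as creation operators, then a ``middle'' matrix normalized to have Frobenius norm $1$, then annihilation operators, with the contraction property of the middle matrix following from $\|\cdot\|_{\op}\leq\|\cdot\|_F$ (which is precisely the Cauchy--Schwarz computation the paper spells out). The only cosmetic difference is that you use a distinct matrix family $X_r(i_r)$ in each of the $d$ slots, whereas the paper uses a single family $X(i)$ for all slots, letting the graded basis (the $e_{\ind i}$ and $f_{\ind j}$ vectors) dictate whether $X(i)$ builds, applies the tensor, or peels.
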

\begin{proof}
    Let us fix $s\in [d]$. The proof consists on evaluating \cref{eq:Tcb} on an explicit set of contractions (matrices with operator norm at most 1). Let $m=\sum_{r=0}^{d-r}n^{r}+\sum_{r=0}^{s-1}n^{r}$ and let $\{e_{\ind i},f_{\ind j}:\, \ind i\in [n]^r,\, r\in\{0\}\cup [d-s],\, \ind j\in [n]^t,\, t\in\{0\}\cup [s-1]\}$ be an orthonormal basis of  $\ell_2^m (\mathbb K),$ where we identify $[n]^0$ with $\emptyset$. We define matrices $X(i)\in\mathcal{M}_m$ for $i\in [n]$  by 
    \begin{align*}
        X(i)e_{\ind j}&=e_{(i,\ind j)},\ \mathrm{if\ }\ind j\in [n]^{r},\ r\in \{0\}\cup [d-s-1],\\
        X(i)e_{\ind j}&=\frac{\sum_{\ind k\in [n]^{s-1}}\widehat T_{\ind k i\ind j}^*f_{\ind k}}{\sqrt{\sum_{k_1,\dots,k_{s-1},k_{s+1},\dots,k_d\in [n] }|\widehat T|^2_{(k_1,\dots,k_{s-1},i,k_{s+1},\dots,k_d)}}},\ \mathrm{if\ }\ind j\in [n]^{d-s},\\
        X(i)f_{\ind j}&=\delta_{i,j_{t}}f_{(j_1,\dots,j_{t-1})},\ \mathrm{if}\ \ind j\in [n]^{d-t},\ t\in\{0\}\cup[d-1].
    \end{align*}
    To give some intuition of the behaviour of these matrices, one may interpret the first $d-s$ applications of the matrices $X(i)$ as \emph{creation} operators and the last $s-1$ as \emph{destruction} operators.
    
    Assume for the moment that $X(i)$ are contractions. Given that  
    \begin{align*}
        \langle f_\emptyset,X(i_1)\dots X(i_d)e_\emptyset\rangle=\frac{\widehat T_\ind i^*}{\sqrt{\sum_{ i_1,\dots,i_{s-1},i_{s+1},\dots,i_d\in [n]}|\widehat T_{\ind i}}|^2},
    \end{align*}
    it would then follow that 
    \begin{align*}
        \norm{T}_{\cb}&\geq \norm{\sum_{\ind i\in [n]^d}\widehat T_{\ind i}X(i_1)\dots X(i_d)}_{\op}\\
        &\geq  \sum_{\ind i\in [n]^d}\widehat T_{\ind i}\langle f_\emptyset,X(i_1)\dots X(i_d)e_\emptyset\rangle\\
        &\geq \sum_{\ind i\in [n]^d}\widehat T_{\ind i}\frac{\widehat T_\ind i^*}{\sqrt{\sum_{ k_1,\dots,k_{s-1},k_{s+1},\dots,k_d\in [n] }|\widehat T|^2_{(k_1,\dots,k_{s-1},i_s,k_{s+1},\dots,k_d)}}}\\        &=\sum_{i_s\in [n]}\sqrt{\sum_{i_1,\dots,i_{s-1},i_{s+1},\dots,i_d\in [n]}|\widehat T_\ind{i}|^2},
    \end{align*}
    as desired. Thus, it just remains to prove that the matrices $X(i)$ are contractions. Given that $X(i)$ maps $\{e_{\ind i}:\, \ind i\in [n]^r,\, r\in\{0\}\cup [d-s-1]\}$, $\{e_{\ind i}:\, \ind i\in [n]^{d-s}\}$ and $\{f_{\ind i}:\, \ind i\in[n]^t,\, t\in\{0\}\cup [s]\}$ to orthogonal subspaces, it suffices to show that the $X(i)$ are contractions when restricted to those subspaces. For the first and third sets that is true because  $X(i)$ maps each basis vector of those sets to a different basis vector or to $0$. For the second set, is also true because for every $\lambda \in \mathbb K^{n^{d-s}}$
    \begin{align*}
        \norm{X(i)\sum_{\ind j\in [n]^{d-s}}\lambda_{\ind j}e_{\ind j}}_2^2&=\norm{\frac{\sum_{\ind k\in [n]^{s-1}}(\sum_{\ind j\in [n]^{d-s}}\lambda_{\ind j}\widehat T_{\ind k i\ind j}^*)f_{\ind k}}{\sqrt{\sum_{k_1,\dots,k_{s-1},k_{s+1},\dots,k_d\in [n] }|\widehat T_{k_1,\dots,k_{s-1},i,k_{s+1},\dots,k_d}|^2}}}_2^2\\
        &=\frac{\sum_{\ind k\in [n]^{s-1}}|\sum_{\ind j\in [n]^{d-s}}\lambda_{\ind j}\widehat T_{\ind k i\ind j}^*|^2}{\sum_{k_1,\dots,k_{s-1},k_{s+1},\dots,k_d\in [n] }|\widehat T_{k_1,\dots,k_{s-1},i,k_{s+1},\dots,k_d}|^2}\\
        &\leq \frac{\Big(\sum_{\ind k\in [n]^{s-1}}\sum_{\ind j\in [n]^{d-s}}|\widehat T_{\ind k i\ind j}^*|^2\Big)\Big(\sum_{\ind j\in [n]^{d-s}}|\lambda_{\ind j}|^2\Big)}{\sum_{k_1,\dots,k_{s-1},k_{s+1},\dots,k_d\in [n] }|\widehat T_{k_1,\dots,k_{s-1},i,k_{s+1},\dots,k_d}|^2}\\
        &=\sum_{\ind j\in [n]^{d-s}}|\lambda_{\ind j}|^2=\norm{\lambda}^2_2,
    \end{align*}
    where  we have used Cauchy-Schwarz for the sum over $\ind j$.
\end{proof}

\begin{proof}[ of \cref{theo:BHGf}]
    The inequality $\|\widehat{T}\|_{\frac{2d}{d+1}}\leq\norm{T}_{\cb}$ follows from \cref{lem:Bleisinequality,lem:lowercb}. The inequality is saturated by the form $T(x_1,\dots,x_d)=x_1(1).$ 
\end{proof}

\bibliographystyle{alphaurl}
\bibliography{Bibliography}

\appendix
\section{Learning low-degree Pauli channels}
In this section, we consider learning low-degree Pauli channels, a special case of general quantum channels and prove the following theorem.
\learnPauliChannels*

We define the orthonormal basis for Pauli channels as follows. For any $x \in \{0,1,2,3\}^n$, let $\Phi_x = \sigma_x \rho \sigma_x$. For Pauli channels $\Phi$, the Pauli expansion is then
\begin{equation}
    \Phi = \sum_{x \in \{0,1,2,3\}^n} \widehat{\Phi}(x) \Phi_x,
\end{equation}
where $\widehat{\Phi}(x)$ are the Fourier coefficients corresponding to the characters $\Phi_x$. It can then be immediately shown that the Fourier coefficients $\widehat{\Phi}(x)$ are nothing but the error rates $p_x$. Our goal is learning a classical description of the Pauli channel $\Phi$ within $\eps$-diamond norm. The diamond norm between two Pauli channels is equivalent to twice the total variation distance between their corresponding probability distributions (or error rates) \cite{magesan2012character}. We thus aim to learn the probability distribution $\widehat{\Phi} = \{ \widehat{\Phi}(x) \}_x$ over the error rates of a Pauli channel within $\eps/2$ total variation distance.

The eigenvalues of a Pauli channel are given by $\lambda_s = \frac{1}{2^n} \Tr(\sigma_s \Phi(\sigma_s))$. They are related to the error rates by a Fourier transform:
\begin{align}
    \lambda_k = \sum_{j \in \{0,1,2,3\}^n} (-1)^{[\sigma_k, \sigma_j]} p_j, \quad
    p_k = \frac{1}{4^n} \sum_{j \in \{0,1,2,3\}^n} (-1)^{[\sigma_k, \sigma_j]} \lambda_j
\end{align}
Most approaches for learning Pauli channels~\cite{flammia2020efficient,harper2021fast} which have prior information regarding the sparsity in $\widehat{\Phi}$ have attempted to learn the channel by estimating the channel's eigenvalues. In contrast, we directly estimate error rates.

Let us warm-up by giving a learning algorithm for low-degree Pauli channels when given access to entanglement.
\begin{theorem}
    Let $\Phi$ be an $n$-qubit degree-$d$ channel. There is an algorithm that $(\varepsilon,
    \delta)$-learns $\Phi$ (in  diamond norm)  using $O\left( n^d/{\varepsilon^2} + 1/\varepsilon^2\cdot \log({1}/{\delta})\right)$ queries to $\Phi$.
\end{theorem}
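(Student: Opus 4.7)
The plan is to reduce this to learning a discrete probability distribution supported on few points. The key observation is that the Choi state of a Pauli channel is diagonal in the Bell basis with eigenvalues equal to the error rates. Indeed, since $\Phi(\rho)=\sum_{x}p_x\,\sigma_x\rho\sigma_x$ with $p_x=\widehat\Phi(x)$, a direct calculation gives
\[
    v(\Phi) \;=\; \sum_{x\in\{0,1,2,3\}^n} p_x\,\ketbra{v(\sigma_x)}{v(\sigma_x)},
\]
so measuring the Choi state in the Bell basis $\{\ket{v(\sigma_x)}\}_x$ produces a sample from the distribution $\{p_x\}_x$. Since $\Phi$ has degree at most $d$, this distribution is supported on the set $S_d=\{x\in\{0,1,2,3\}^n:|x|\le d\}$, and $|S_d|=\sum_{k=0}^d\binom{n}{k}3^k=O(n^d)$ for constant $d$.

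The algorithm is therefore straightforward: prepare $T$ copies of $v(\Phi)$ by querying $\Phi$ once each on half of a maximally entangled state, measure each copy in the Bell basis to obtain i.i.d.\ samples from $\{p_x\}_{x\in S_d}$, and output the superoperator $\widetilde\Phi=\sum_x \widetilde p_x\,\Phi_{x,x}$ where $\widetilde p_x$ is the empirical distribution of the samples. By the standard folklore bound on learning discrete distributions (see e.g.\ the same Canonne reference used in \cref{lem:LearningDiscreteProb}, in its $\ell_1$ form), with probability at least $1-\delta$, the empirical distribution $\widetilde p$ is within total variation distance $\varepsilon/2$ from $p$ provided
\[
    T \;=\; O\!\left(\frac{|S_d|+\log(1/\delta)}{\varepsilon^2}\right) \;=\; O\!\left(\frac{n^d}{\varepsilon^2}+\frac{\log(1/\delta)}{\varepsilon^2}\right).
\]

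Finally, as noted just before the theorem statement, the diamond distance between two Pauli channels equals twice the total variation distance between their error-rate distributions~\cite{magesan2012character}. Hence $\|\Phi-\widetilde\Phi\|_\diamond=2\,\mathrm{TV}(p,\widetilde p)\le \varepsilon$ with probability $\ge 1-\delta$, matching the claimed query complexity. There is no real obstacle here; the only thing to verify carefully is the support-size bound and the equivalence between total variation and diamond distance, both of which are standard. The key contrast with \cref{theo:learnPauliChannels} is that we allow entangled measurements, which directly yield Bell-basis samples from the error-rate distribution and thus avoid the extra $n^d$ factor coming from Pauli-basis product measurements.
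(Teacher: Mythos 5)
Your proposal is correct and follows essentially the same approach as the paper: prepare and measure the Choi state in the Bell basis to sample the error-rate distribution (diagonal in that basis for a Pauli channel, supported on $|x|\le d$), invoke the standard $O((k+\log(1/\delta))/\varepsilon^2)$ bound for learning a $k$-outcome discrete distribution in total variation distance, and convert total variation to diamond norm via the factor-of-two equivalence for Pauli channels. No gap; the explicit diagonalization $v(\Phi)=\sum_x p_x\ketbra{v(\sigma_x)}{v(\sigma_x)}$ is a slight elaboration the paper leaves implicit, but the argument is identical in substance.
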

\begin{proof}
    Let $J(\Phi)$ be the CJ state of $\Phi$. If we measure $J(\Phi)$ in the Bell basis, we sample $x \in \{0,1,2,3\}^n$ with probability $\widehat{\Phi}(x)$ as long as $|x| \leq d$. By~\cite[Theorem 1]{canonne2020short}, measuring $J(\Phi)$, $O\left(\frac{3^d n^d}{\varepsilon^2} + \frac{1}{\varepsilon^2}\log\frac{1}{\delta} \right)$ times we learn the corresponding probability distribution $\{\widehat{\Phi}(x)\}_x$ of the Pauli channel in total variation distance of $\varepsilon$ with probability $1-\delta$. This translates to learning the corresponding Pauli channel in $\varepsilon$-diamond norm.
\end{proof}

We now describe our learning algorithm for low-degree Pauli channels that does not require any access to entanglement and can be implemented on near-term quantum devices. As part of the learning algorithm, we apply the Pauli channel $\Phi$ to a product state $\rho$ and then measure $\Phi(\rho)$ in the Pauli basis. Let us describe what these product states, measurements and the resulting outcomes are. Let us first recall from Section~\ref{sec:prelims} that we denote the $\pm 1$ eigenstates of a single-qubit Pauli $\sigma_j$ (where $j \in \{1,2,3\}$) by $|\chi_{\pm 1}^{j}\rangle$. The algorithm repeats the following procedure $T$ times: pick a uniformly random $s
\in \{1,2,3\}^n$, prepare the product state $\rho_s = \otimes_{j=1}^s \ket{\chi_{+1}^{s_j}}\bra{\chi_{+1}^{s_j}}$, apply $\Phi$, measure $\Phi(\rho_s)$ in the Pauli basis $\sigma_s$, and store the outcome $r$. The remaining of the section is devoted to show that with a few repetitions of this procedure, one can recover $\Phi$ from the outcomes $r$. The first thing to do is to determine the behaviour of $r$ as random variable. To express it in a concise way, we introduce the following `$\star$' operation. Given $s,x\in\{0,1,2,3\}$, $(s\star x)$ is the element of $\{0,1\}^n$ defined by 
\begin{equation}\label{eq:stardefinition}
    (s\star x)_j=\left\{\begin{array}{ll}
         0 &
        \mathrm{if}\ [\sigma_{s_j},\sigma_{x_j}]=0,  \\
         1 & \mathrm{otherwise},
    \end{array}\right.
    \quad \forall j \in [n].
\end{equation}

\begin{proposition}
Let $\Phi_x(\rho) = \sigma_x \rho \sigma_x$ be the superoperator corresponding to $x \in \{0,1,2,3\}^n$ and let $s\in \{1,2,3\}^n.$ Measuring $\Phi_x(\rho_s)$ in the Pauli basis $\sigma_s$ produces the measurement outcome $r = s \star x$. 
\label{prop:outcomes_queries_pauli_channel}
\end{proposition}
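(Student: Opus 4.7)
The plan is to reduce the statement to a single-qubit computation via tensor-product factorization. Both the channel $\Phi_x(\rho) = \sigma_x \rho \sigma_x$, the input state $\rho_s = \bigotimes_j \ket{\chi_{+1}^{s_j}}\bra{\chi_{+1}^{s_j}}$, and the measurement in the basis $\sigma_s = \bigotimes_j \sigma_{s_j}$ factor as tensor products over the $n$ qubits. Hence the distribution of the measurement outcome $r \in \{0,1\}^n$ also factors, and it suffices to establish the claim qubit by qubit: for every $j \in [n]$, if one prepares $\ket{\chi_{+1}^{s_j}}$, applies $\sigma_{x_j}$, and measures $\sigma_{s_j}$, the outcome is deterministically $(s\star x)_j$ (with the convention that outcome $+1$ of $\sigma_{s_j}$ is encoded as $0$ and outcome $-1$ as $1$).

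For the single-qubit analysis I would split into the two cases defined by Eq.~\eqref{eq:stardefinition}. If $[\sigma_{s_j},\sigma_{x_j}]=0$, then either $x_j = 0$, in which case $\sigma_{x_j}\ket{\chi_{+1}^{s_j}} = \ket{\chi_{+1}^{s_j}}$, or $x_j = s_j$, in which case $\sigma_{x_j}\ket{\chi_{+1}^{s_j}} = \sigma_{s_j}\ket{\chi_{+1}^{s_j}} = \ket{\chi_{+1}^{s_j}}$. In both subcases the state after $\Phi_{x_j}$ is still $\ket{\chi_{+1}^{s_j}}\bra{\chi_{+1}^{s_j}}$, so measuring $\sigma_{s_j}$ yields $+1$ with probability one, matching $(s\star x)_j = 0$. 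If instead $\{\sigma_{s_j},\sigma_{x_j}\} = 0$, then
\[
\sigma_{s_j}\bigl(\sigma_{x_j}\ket{\chi_{+1}^{s_j}}\bigr) = -\sigma_{x_j}\sigma_{s_j}\ket{\chi_{+1}^{s_j}} = -\sigma_{x_j}\ket{\chi_{+1}^{s_j}},
\]
so $\sigma_{x_j}\ket{\chi_{+1}^{s_j}}$ is a $-1$ eigenvector of $\sigma_{s_j}$, and the outcome is $-1$ with probability one, matching $(s\star x)_j = 1$.

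Combining these single-qubit statements across all $n$ qubits yields the claimed deterministic outcome $r = s\star x$. The only thing to be careful about is keeping track of the $\{+1,-1\}\leftrightarrow\{0,1\}$ encoding convention and the fact that $s_j \in \{1,2,3\}$ excludes the degenerate case $s_j = 0$, so $\ket{\chi_{+1}^{s_j}}$ and the measurement $\sigma_{s_j}$ are always well-defined. There is no real obstacle here; the statement is essentially a bookkeeping exercise verifying that the definition of $\star$ is tailored to record exactly the sign flip produced by $\sigma_{x_j}$ on the $+1$ eigenstate of $\sigma_{s_j}$.
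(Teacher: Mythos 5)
Your proof is correct and takes essentially the same route as the paper's: reduce to a single qubit via the tensor-product structure of $\Phi_x$, $\rho_s$, and the measurement, then verify that the outcome on qubit $j$ is determined by whether $\sigma_{s_j}$ and $\sigma_{x_j}$ commute. The only presentational difference is that the paper computes $\Pr[r_j=0]$ directly via the trace formula and the identity $\sigma_a\sigma_b\sigma_a = (-1)^{[\sigma_a,\sigma_b]}\sigma_b$, whereas you make the same dichotomy explicit by tracking whether $\sigma_{x_j}\ket{\chi_{+1}^{s_j}}$ stays a $+1$ eigenvector of $\sigma_{s_j}$ or is sent to a $-1$ eigenvector.
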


\begin{proof}

    Upon measuring the $j$th qubit of the state $\sigma_x \rho_s \sigma_x$ in basis $\sigma_s$, the probability of the measurement outcome $r_j \in \{0,1\}$ is given by
    \begin{align*}
        \Pr[r_j = 0] &= \frac{1}{2} + \frac{1}{2}\Tr\left(\sigma_{s_j} \sigma_{x_j} \ket{\chi_{+1}^{s_j}}\bra{\chi_{+1}^{s_j}} \sigma_{x_j} \right) \\ 
        &= \frac{1}{2} + \frac{1}{2}\Tr\left(\sigma_{x_j} \sigma_{s_j} \sigma_{x_j} \ket{\chi_{+1}^{s_j}}\bra{\chi_{+1}^{s_j}} \right) \\
        &= \frac{1}{2} + \frac{1}{2}(-1)^{[\sigma_{s_j}, \sigma_{x_j}]} \Tr\left(\sigma_{s_j} \ket{\chi_{+1}^{s_j}}\bra{\chi_{+1}^{s_j}} \right) \\
        &= \frac{1 + (-1)^{[\sigma_{s_j}, \sigma_{x_j}]}}{2},
    \end{align*}
    where we used the fact $\sigma_{a} \sigma_{b} \sigma_{a} = (-1)^{[\sigma_a,\sigma_b]} \sigma_b$ for any $a,b \in \{0,1,2,3\}^n$ in the third line and $\ket{\chi_{+1}^{s_j}}$ is the $+1$ eigenstate of $\sigma_{s_j}$. Note from the above probability expression, we have that $r_j$ takes the value of $0$ if $\sigma_{s_j}$ and $\sigma_{x_j}$ commute and $1$, otherwise. This proves the desired result.
\end{proof}

Let $\{(s^{(t)}, r^{(t)})\}_{t \in [T]}$ be the samples obtained by our algorithm. We now describe how to learn the Pauli coefficients (or error rates) $\widehat{\Phi}(x)$ from these $T$ samples, and thereby learn the Pauli channel. The following empirical estimation gives us an estimate of the Pauli coefficients, which we denote by $\widetilde{\Phi}(x)$
\begin{equation}
    \widetilde{\Phi}(x) = \frac{1}{T} \sum_{t=1}^T \left[ \left(- \frac{1}{2}\right)^{\left(\sum \limits_{j \in \text{supp}(x)} r^{(t)}_j \oplus ({s^{(t)}_j} \star x_j) \right) \oplus \left(\sum \limits_{j \notin \text{supp}(x)} r^{(t)}_j \right) } \right],
    \label{eq:emp_estimate_phix_pauli_channel}
\end{equation}
where we have used $\oplus$ to denote addition modulo $2$, the operation `$\star$' which was defined in Prop.~\ref{prop:outcomes_queries_pauli_channel} and $\text{supp}(x)$ denotes the support of the string $x$ over non-zero values. Note that the internal summation involving the Fourier character $x$ is now only over its support which is at most of size $d$. The empirical estimator of Eq.~\ref{eq:emp_estimate_phix_pauli_channel} was proposed in \cite{flammia2021pauli}. The following statement argues that the empirical estimate of Eq.~\ref{eq:emp_estimate_phix_pauli_channel} is indeed consistent (which was previously shown as part of \cite[Theorem 22]{flammia2021pauli} but is described here for completeness).

\begin{proposition} For any $x \in \{0,1,2,3\}^n$, we have $\mathop{\mathbb{E}}_{s \sim \{1,2,3\}^n}\left[ \widetilde{\Phi}(x) \right] = \widehat{\Phi}(x).$
\label{prop:consistency_emp_est_pauli_channels}
\end{proposition}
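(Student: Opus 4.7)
The plan is to reduce the expectation $\mathbb{E}_s[\widetilde{\Phi}(x)]$ (which also averages over the random measurement outcomes $r$) to a coordinate-wise computation using the Pauli decomposition $\Phi = \sum_y \widehat{\Phi}(y)\, \Phi_y$ and Proposition~\ref{prop:outcomes_queries_pauli_channel}. First, by linearity in $\Phi$, if $p(r\mid s)$ denotes the probability of obtaining outcome $r$ when $\Phi(\rho_s)$ is measured in the $\sigma_s$ basis, then $p(r\mid s) = \sum_y \widehat{\Phi}(y)\, \mathbb{1}[r = s\star y]$ by Proposition~\ref{prop:outcomes_queries_pauli_channel}. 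Hence
\begin{equation*}
\mathbb{E}_{s,r}\!\left[\left(-\tfrac{1}{2}\right)^{E(s,r,x)}\right] \;=\; \sum_{y\in\{0,1,2,3\}^n}\widehat{\Phi}(y)\;\mathbb{E}_{s\sim\{1,2,3\}^n}\!\left[\left(-\tfrac{1}{2}\right)^{E(s,\, s\star y,\, x)}\right],
\end{equation*}
where $E(s,r,x)$ denotes the exponent that appears in \eqref{eq:emp_estimate_phix_pauli_channel}. So it suffices to show that the inner expectation equals $\mathbb{1}[y=x]$.

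The next step is to rewrite the exponent in a uniform way. Since for $j\notin\mathrm{supp}(x)$ one has $x_j=0$, so $\sigma_{x_j}=I$ commutes with everything, $(s_j\star x_j)=0$ and $r_j = r_j\oplus(s_j\star x_j)$. Thus $E(s,r,x)=\bigoplus_{j=1}^n \bigl[r_j\oplus(s_j\star x_j)\bigr]$. Substituting $r = s\star y$ and using that the sign $(-1/2)^{\bigoplus_j b_j}$ factors into a product $\prod_j (-1/2)^{b_j}$ over the bits $b_j\in\{0,1\}$, and that the coordinates $s_1,\dots,s_n$ are independent and each uniform on $\{1,2,3\}$, we get
\begin{equation*}
\mathbb{E}_s\!\left[\left(-\tfrac{1}{2}\right)^{E(s,\, s\star y,\, x)}\right] \;=\; \prod_{j=1}^n\;\mathbb{E}_{s_j\sim\{1,2,3\}}\!\left[\left(-\tfrac{1}{2}\right)^{(s_j\star y_j)\oplus(s_j\star x_j)}\right].
\end{equation*}

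It now remains to evaluate each factor by a short case analysis on the pair $(x_j,y_j)$. If $x_j=y_j$, then $(s_j\star y_j)=(s_j\star x_j)$ and the factor is $1$. If $x_j\neq y_j$, I will check (splitting into the sub-cases where one of $x_j,y_j$ is $0$, and where both lie in $\{1,2,3\}$ and differ) that the XOR $(s_j\star y_j)\oplus(s_j\star x_j)$ equals $1$ with probability exactly $2/3$ and $0$ with probability $1/3$, using that $\sigma_a$ anticommutes with $\sigma_b$ iff $a,b\in\{1,2,3\}$ with $a\neq b$. In that case the factor equals $\tfrac{1}{3}\cdot 1 + \tfrac{2}{3}\cdot(-\tfrac{1}{2}) = 0$. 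Therefore the product over $j$ is $\mathbb{1}[x=y]$, and plugging back gives $\mathbb{E}_s[\widetilde{\Phi}(x)]=\widehat{\Phi}(x)$, as required.

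The only non-obvious step is the per-coordinate probability $2/3$ for $x_j\neq y_j$, but it is a finite case check on a domain of size $3$, so it will not be an obstacle; the main thing to be careful about is the correct parsing of the exponent in \eqref{eq:emp_estimate_phix_pauli_channel} as a single XOR over all coordinates and the resulting multiplicativity of $(-1/2)^{(\cdot)}$.
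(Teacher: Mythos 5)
Your proposal is correct and follows essentially the same route as the paper's proof: rewrite the exponent so it spans all $n$ coordinates, condition on which $\Phi_y$ acts (equivalently your sum $\sum_y \widehat{\Phi}(y)\,\mathbb{1}[r=s\star y]$ is the paper's $\mathbb{E}_{z\sim\widehat\Phi}$), factor the expectation over independent $s_j$, and reduce each factor to $\mathds{1}\{y_j=x_j\}$. The only cosmetic difference is that you spell out the $2/3$--$1/3$ case analysis that the paper leaves as ``readily verified.''
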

\begin{proof}
    Considering the expectation over the empirical estimate and substituting Eq.~\ref{eq:emp_estimate_phix_pauli_channel}, we have
    \begin{align}
        \mathop{\mathbb{E}}_{s \sim \{1,2,3\}^n}\left[ \widetilde{\Phi}(x) \right] &= \mathop{\mathbb{E}}_{s \in \{1,2,3\}^n}\left[ \left(- \frac{1}{2}\right)^{\left(\sum \limits_{j \in \text{supp}(x)} r_j \oplus ({s_j} \star x_j) \right) \oplus \left(\sum \limits_{j \notin \text{supp}(x)} r_j \right) } \right] \nonumber \\
        &= \mathop{\mathbb{E}}_{s \sim \{1,2,3\}^n}\left[ \left(- \frac{1}{2}\right)^{\sum \limits_{j \in [n]} r_j \oplus ({s_j} \star x_j)} \right] \nonumber \\
        &= \mathop{\mathbb{E}}_{z \sim \widehat{\Phi}} \mathop{\mathbb{E}}_{s \sim \{1,2,3\}^n} \left[ \left(- \frac{1}{2}\right)^{\sum \limits_{j \in [n]} ({s_j} \star z_j) \oplus ({s_j} \star x_j)} \mid z \right]
        \nonumber \\
        &= \mathop{\mathbb{E}}_{z \sim \widehat{\Phi}} \prod_{j\in [n]}\mathop{\mathbb{E}}_{s \sim \{1,2,3\}^n} \left[ \left(- \frac{1}{2}\right)^{ ({s_j} \star z_j) \oplus ({s_j} \star x_j)} \mid z \right],
        \label{eq:simplified_emp_est}
    \end{align}
where we simplified the summation in the second line noting that $x_j = 0$ for $j \notin \text{supp}(x)$. In the third line, we used the law of total expectations considering that the action of a Pauli channels involves the action of the superoperator $\Phi_z$ on the input state with probability $\widehat{\Phi}(z)$ and which allows us to then use Prop.~\ref{prop:outcomes_queries_pauli_channel} regarding the measurement outcome $r$. Note that for a given $z$ and $j \in [n]$,
\begin{equation}
    \mathop{\mathbb{E}}_{s_j \sim \{1,2,3\}} \left[ \left(- \frac{1}{2}\right)^{({s_j} \star z_j) \oplus (s_j\star x_j)} \mid z \right] = 
    \begin{cases}
        1 & \text{if } z_j = x_j, \\
        0 & \text{if } z_j \neq x_j, \\
    \end{cases}
    \label{eq:expectation_emp_est_term}
\end{equation}
which can be readily verified.
The expression of Eq.~\ref{eq:expectation_emp_est_term} can thus be simply written as $\mathds{1}\{z_j = x_j\}$. Substituting back into Eq.~\ref{eq:simplified_emp_est} gives~us
\begin{equation}
    \mathop{\mathbb{E}}_{s \sim \{1,2,3\}^n}\left[ \widetilde{\Phi}(x) \right] = \mathop{\mathbb{E}}_{z \sim \widehat{\Phi}} \prod_{j \in [n]} \mathds{1}\{z_j = x_j\} = \mathop{\mathbb{E}}_{z \sim \widehat{\Phi}} \mathds{1}\{z = x\} = \widehat{\Phi}(x),
\end{equation}
which is the desired result. This completes the proof.
\end{proof}

The summary of the learning algorithm for low-degree Pauli channel, described so far, is presented in Algorithm~\ref{algo:low_deg_pauli_channels}. We now derive its sample complexity.

\begin{algorithm}[H]
    \caption{Learning low-degree Pauli channels through unentangled measurements} \label{algo:low_deg_pauli_channels}
    \textbf{Input}: Budget $T=O(9^dn^{2d})$, access to Pauli channel $\Phi$, parameter $d$
    \begin{algorithmic}[1]
        \State Initialize $\widetilde{\Phi}(x) = 0$ for each $x \in S$ with $S = \{x : x \in \{0,1,2,3\}^n, \, |x|\leq d\}$
        \For{sample $t=1,\ldots,T$}
            \State Randomly sample string $s \sim \{1,2,3\}^n$
            \State Prepare state $\rho_s = \otimes_{j=1}^s \ket{\chi_{+1}^{s_j}}\bra{\chi_{+1}^{s_j}}$
            \State Query Pauli channel $\Phi$ with $\rho_s$ to obtain $\Phi(\rho_s)$
            \For{qubit $k=1,\ldots,n$}
                \State Measure qubit $k$ of state $\Phi(\rho_s)$ in Pauli basis $\sigma_{s_k}$
                \State Save measurement $r^{(t)}_k \in \{0,1\}$
            \EndFor
        \EndFor 
        \State Compute estimates $\widetilde{\Phi}(x)$ using Eq.~\ref{eq:emp_estimate_phix_pauli_channel} $\forall x \in S$
    \end{algorithmic}
    \textbf{Output}: $\{\widetilde{\Phi}(x)\}_{x \in S}$
\end{algorithm}

\begin{proof}[ of~\cref{theo:learnPauliChannels}]
We learn the channel $\Phi$ by estimating the Fourier coefficients $\widehat{\Phi}(x)$ for each $x \in S$ with $S = \{x : x \in \{0,1,2,3\}^n, \, |x|\leq d\}$. Suppose that we use $T$ measurement outcomes from the learning algorithm in Algorithm~\ref{algo:low_deg_pauli_channels}, then we can estimate $\widehat{\Phi}(x)$ using the empirical estimator from Eq.~\ref{eq:emp_estimate_phix_pauli_channel}. Combining with Prop.~\ref{prop:consistency_emp_est_pauli_channels}, we have by Hoeffding's inequality that
\begin{equation}
    \Pr[|\widetilde{\Phi}(x) - \widehat{\Phi}(x)| \geq a] \leq 2 \exp(-2 a^2 T).
\end{equation}
To learn the channel $\Phi$ within $\eps$-diamond norm, it is enough to learn the probability distribution $\widehat{\Phi} = \{\widehat{\Phi}(x)\}_x$ in total variation distance of $\eps/2$. We require
\begin{equation}
    \frac{1}{2}\|\widetilde{\Phi} - \widehat{\Phi}\|_1 = \frac{1}{2} \sum_{x \in S} |\widetilde{\Phi}(x) - \widehat{\Phi}(x)| \leq \frac{a}{2} 3^d n^d \leq \frac{\eps}{2}.
\end{equation}
To ensure we learn $\Phi$ with probability $1 - \delta$, we let $T = O\left(\frac{9^d n^{2d}}{\eps^2}\log \frac{n}{\delta} \right)$ after applying the union~bound.
\end{proof}
\end{document}